\newcommand{\sell}{+}
\newcommand{\buy}{-}
\newcommand{\buysell}{\pm}
\tikzset{
    >=stealth',
    punkt/.style={
           rectangle,
           rounded corners,
           draw=black, very thick,
           text width=6.5em,
           minimum height=2em,
           text centered},
    pil/.style={
           ->,
           thick,
           shorten <=2pt,
        shorten >=2pt,
   }
}
\newcommand{\G}{\mathcal{G}}
\newcommand{\gt}{\mathcal{G}_t}
\newcommand{\R}{\mathbb{R}}
\newcommand{\Rplus}{\mathbb R_{>0}}
\newcommand{\Px}{\mathbb{P}}
\newcommand{\Exx}{\mathbb{E}}
\newcommand{\Qxx}{\mathbb{Q}}
\newcommand{\Q}{\Qxx}
\newcommand{\Gx}{\mathbb{G}}
\newcommand{\uu}{{\bar{{u}}}}
\newcommand{\vv}{{\bar{{v}}}}
\def\ind{{\mathchoice{1\mskip-4mu\mathrm l}{1\mskip-4mu\mathrm l}
{1\mskip-4.5mu\mathrm l}{1\mskip-5mu\mathrm l}}}
\newcommand{\abs}[1]{ \left \vert #1 \right \vert}
\newcommand{\psir}{\psi^r}
\newcommand{\rrp}{r_r^+}
\newcommand{\rrm}{r_r^-}
\newcommand{\rcp}{r_c^+}
\newcommand{\rcm}{r_c^-}
\newcommand{\rfp}{r_f^+}
\newcommand{\rfm}{r_f^-}
\newcommand{\XVA}{\mbox{XVA}}
\newcommand{\xva}{xva}
\newcommand{\CVA}{\mbox{CVA}}
\newcommand{\DVA}{\mbox{DVA}}
\newtheorem{theorem}{Theorem}[section]
\newtheorem{definition}[theorem]{Definition}
\newtheorem{corollary}[theorem]{Corollary}
\newtheorem{proposition}[theorem]{Proposition}
\newtheorem{remark}[theorem]{Remark}
\newtheorem{assumption}[theorem]{Assumption}
\newcommand{\w}{w}
\title{Arbitrage-Free XVA}
\author{
Maxim Bichuch \thanks{Email: mbichuch@jhu.edu, Department of Applied Mathematics and Statistics, Johns Hopkins University}
\and
Agostino Capponi \thanks{Email: ac3827@columbia.edu, Industrial Engineering and Operations Research Department, Columbia University}
\and
Stephan Sturm \thanks{Email: ssturm@wpi.edu, Department of Mathematical Sciences, Worcester Polytechnic Institute}
}
\begin{document}

\maketitle

\begin{abstract}
We develop a framework for computing the total valuation adjustment (XVA) of a European claim accounting for funding costs, counterparty credit risk, and collateralization. Based on no-arbitrage arguments, we derive backward stochastic differential equations (BSDEs) associated with the replicating portfolios of long and short positions in the claim. This leads to the definition of buyer{'}s and seller{'}s XVA, which in turn identify a no-arbitrage interval.
In the case that borrowing and lending rates coincide, we provide a fully explicit expression for the unique XVA, expressed as a percentage of the price of the traded claim, and for the corresponding replication strategies. In the general case of asymmetric funding, repo and collateral rates, we study the semilinear partial differential equations (PDE) characterizing buyer{'}s and seller{'}s $\XVA$ and show the existence of a unique classical solution to it. To illustrate our results, we conduct a numerical study demonstrating how funding costs, repo rates, and counterparty risk contribute to determine the total valuation adjustment.\footnote{This article subsumes the two permanent working papers by the same authors: ``Arbitrage-Free Pricing of XVA - Part I: Framework and Explicit Examples'', and ``Arbitrage-Free Pricing of XVA - Part II: PDE Representation and Numerical Analysis''. These papers are accessible at \url{http://arxiv.org/abs/1501.05893} and \url{http://arxiv.org/abs/1502.06106}, respectively.}
\end{abstract}

\vspace{5mm}

\begin{flushleft}
\textbf{Keywords:} XVA, counterparty credit risk, funding spreads, backward stochastic differential equations, arbitrage-free valuation. \\
\textbf{Mathematics Subject Classification (2010): } {91G40, 91G20, 60H10}\\
\textbf{JEL classification: }{G13, C32}
\end{flushleft}

\section{Introduction}

When managing a portfolio, a trader needs to raise cash in order to finance a number of operations. Those include maintaining the hedge of the position, posting collateral resources, and paying interest on collateral received. Moreover, the trader needs to account for the possibility that the position may be liquidated prematurely due to his own or his counterparty{'}s default, hence entailing additional costs due to the closeout procedure. Cash resources are provided to the trader by his treasury desk, and must be remunerated. If he is borrowing, he will be charged an interest rate depending on current market conditions as well as on his own credit quality. Such a rate is usually higher than the rate at which the trader lends excess cash proceeds from his investment strategy to his treasury. The difference between borrowing and lending rate is also referred to as \textit{funding spread}.

Even though pricing by replication can still be put to work under this rate asymmetry, the classical Black-Scholes formula no longer yields the price of the claim. In the absence
of default risk, few studies have been devoted to pricing and hedging claims in markets with differential rates. \cite{Korn95} considers option pricing in a market with a higher borrowing than lending rate, and derives an interval of acceptable prices for both the buyer and the seller. \cite{Cvi93} consider the problem of hedging contingent claims under portfolio constraints allowing for a higher borrowing than lending rate. \cite{ElKaroui} study the super-hedging price of a contingent claim under rate asymmetry via nonlinear backward stochastic differential equations (BSDEs).

The above studies do not consider the impact of counterparty credit risk on valuation and hedging of the derivative security. The new set of rules mandated by the Basel Committee (\cite{Basel3}) to govern bilateral trading in OTC markets requires to take into account default and funding costs when marking to market derivatives positions.
This has originated a growing stream of literature, some of which is surveyed next. \cite{Crepeya} and \cite{Crepeyb} introduce a BSDE approach for the valuation of counterparty credit risk taking funding constraints into account. He decomposes the value of the transaction into three separate components, the contract (portfolio of over-the-counter derivatives), the hedging assets used to hedge market risk of the portfolio as well as
counterparty credit risk, and the funding assets needed to finance the hedging strategy. \cite{BrigoPalCCP} and \cite{BrigoPerPal} derive a risk-neutral pricing formula by taking into account counterparty credit risk, funding, and collateral servicing costs, and provide the corresponding BSDE representation. \cite{Piterbarg} derives a closed form solution for the price of a derivative contract, which distinguishes between funding, repo and collateral rates, but ignores the possibility of counterparty{'}s default. Moreover, he assumes that borrowing and lending rates are equal, an assumption which has been later relaxed by \cite{Mercurio}. \cite{Burgard}  and \cite{BurgardCR} generalize \cite{Piterbarg}{'}s model to include default risk of the trader and of his counterparty. They derive PDE representations for the price of the derivative via a replication approach, assuming the absence of arbitrage and sufficient smoothness of the derivative price. \cite{br} develop a general semimartingale market framework and derive the BSDE representation of the wealth process associated with a self-financing trading strategy that replicates a default-free claim. As in \cite{Piterbarg}, they do not take counterparty credit risk into account. \cite{NiRut}, study the existence of fair bilateral prices. A good overview of the current literature is given in \cite{CrepeyBieleckiBrigo}.

In the present article we introduce a valuation framework which allows us to quantify the total valuation adjustment, abbreviated as XVA, of a European type claim. We consider an underlying portfolio consisting of a default-free stock and two risky bonds underwritten by the trader{'s} firm and his counterparty. Stock purchases and sales are financed through the security lending market. We allow for asymmetry between treasury borrowing and lending rates, repo lending and borrowing rates, as well as between interest rates paid by the collateral taker and received by the collateral provider. We derive the nonlinear BSDEs associated with the portfolios replicating long and short positions in the traded claim, taking into account counterparty credit risk and closeout payoffs exchanged at default. Due to rate asymmetries, the BSDE which represents the valuation process of the portfolio replicating a long position in the claim cannot be directly obtained (via a sign change) from the one replicating a short position. More specifically, there is a no-arbitrage interval which can be defined in terms of the buyer{'}s and the seller{'}s XVA.

We show that our framework recovers the model proposed by \cite{Piterbarg}, as well its extension to the case in  which the hedger and his counterparty can default. In both cases, we can express the total valuation adjustment in closed form, as a percentage of the publicly available price of the claim. This gives an interpretation of the $\XVA$ in terms of funding costs of a trade and counterparty risk, and has risk management implications because it pushes banks to compress trades so as to reduce their borrowing costs and counterparty credit exposures (see \cite{ISDAcompression}). One of the crucial assumptions of the Piterbarg's setup is that rates are symmetric. In the case of asymmetric rates, closed form expressions are unavailable, but we can still exploit the connection between the BSDEs and the corresponding nonlinear PDEs to study numerically how funding spreads, collateral and counterparty risk affect the total valuation adjustment. In this regard, our study extends the previous literature in two directions. First, we develop a rigorous study of the semilinear PDEs associated with the nonlinear BSDEs yielding the XVA. Related studies of the PDE representations of $\XVA$ include \cite{Burgard} and \cite{BurgardCR}, who consider an extended Black-Scholes framework in which two corporate bonds are introduced in order to hedge the default risk of the trader and of his counterparty. They generalize their framework in \cite{BurgardRrisk} to include collateral mitigation and evaluate the impact of different funding strategies. Second, we provide a comprehensive numerical analysis which exploits the previously established existence and uniqueness result. We find strong sensitivity of $\XVA$ to funding costs, repo rates, and counterparty risk. Viewing both buyer{'}s and seller{'}s $\XVA$ as functions of collateralization levels defines a no-arbitrage band whose width increases with the funding spread and the difference between borrowing and lending repo rates.
As the position becomes more collateralized, the trader needs to finance a larger position and the XVA increases. Both buyer's and seller's XVA may decrease if the rates of return of trader and counterparty bonds are
higher than the funding costs incurred for replicating the closeout position.

The paper is organized as follows. We develop the model in Section \ref{sec:model} and introduce the replicated claim and collateral process in Section \ref{sec:claim}. We analyze arbitrage-free valuation and XVA in the presence of funding costs and counterparty risk, referred to as XVA, in Section \ref{sec:BSDEform}. Section \ref{sec:expexp} provides an explicit expression for the $\XVA$ under equal borrowing and lending rates. Section \ref{sec:numanalysis} develops a numerical analysis when borrowing and lending rates are asymmetric. Section \ref{sec:conclusions} concludes the paper. Some proofs of technical results are delegated to an Appendix.

\section{The model}\label{sec:model}

We consider a probability space $(\Omega,\G,{\Px})$ rich enough to support all subsequent constructions. Here, $\Px$ denotes the physical probability measure. Throughout the paper, we refer to ``$I$'' as the investor, trader or hedger interested in computing the total valuation adjustment, and to ``$C$'' as the counterparty to the investor in the transaction. The background or reference filtration that includes all market information except for default events and augmented by all $(\mathcal{G},\Px)$-nullsets, is denoted by $\mathbb{F} := (\mathcal{F}_t)_{t \geq 0}$. The filtration containing default event information is denoted by $\mathbb{H} := (\mathcal{H}_t)_{t \geq 0}$. Both filtrations will be specified in the sequel of the paper. We denote by ${\Gx}:=(\gt)_{t \geq 0}$ the enlarged filtration given by $\gt := \mathcal{F}_t \vee \mathcal{H}_t$, augmented by $(\mathcal{G},\Px)$-nullsets. Note that because of the augmentation of $\mathbb{F}$  by nullsets, the filtration $\mathbb{G}$ satisfies the usual conditions of $(\mathcal{G},\Px)$-completeness and right continuity; see Section 2.4 of \cite{Belanger}.

We distinguish between \textit{universal} instruments, and \textit{investor specific} instruments, depending on whether their valuation is \textit{public} or \textit{private}. Private valuations are based on discount rates, which depend on investor specific characteristics, while public valuations depend on publicly available discount factors. Throughout the paper, we will use the superscript $\wedge$ when referring specifically to public valuations. Section \ref{sec:univ} introduces the universal securities. Investor specific securities are introduced in Section \ref{sec:hedgerspe}.

\subsection{Universal instruments} \label{sec:univ}

This class includes the default-free stock on which the financial claim is written, and the security account used to support purchases or sales of the stock security. Moreover, it includes the risky bond issued by the trader as well as the one issued by his counterparty.

\paragraph{The stock security.}
We let $\mathbb{F} := (\mathcal{F}_t)_{t \geq 0}$ be the $(\mathcal{G},\Px)$-augmentation of the filtration generated by a standard Brownian motion $W^{\Px}$ under the measure $\Px$. Under the physical measure, the dynamics of the stock price is given by
\[
    dS_t = \mu S_t \,dt + \sigma S_t \,dW_t^{\Px},
\]
where $\mu$ and $\sigma$ are constants denoting, respectively, the appreciation rate and the volatility of the stock.

\paragraph{The security account.}
Borrowing and lending activities related to the stock security happen through the security lending or repo market. We do not distinguish between security lending and repo, but refer to all of them as repo transactions. We consider two types of repo transactions: security driven and cash driven, see also \cite{Adrian}. The security driven transaction is used to overcome the prohibition on ``naked'' short sales of stocks, that is the prohibition to the trader of selling a stock which he does not hold and hence cannot deliver. The repo market helps to overcome this by allowing the trader to lend cash to the participants in the repo market who would post the stock as a collateral to the trader. The trader would later return the stock collateral in exchange of a pre-specified amount, usually slightly higher than the original loan amount. Hence, effectively this collateralized loan has a rate, referred to as the repo rate. The cash lender can sell the stock on an exchange, and later, at the maturity of the repo contract, buy it back and return it to the cash borrower. We illustrate the mechanics of the security driven transaction in Figure \ref{fig:secdriven}.

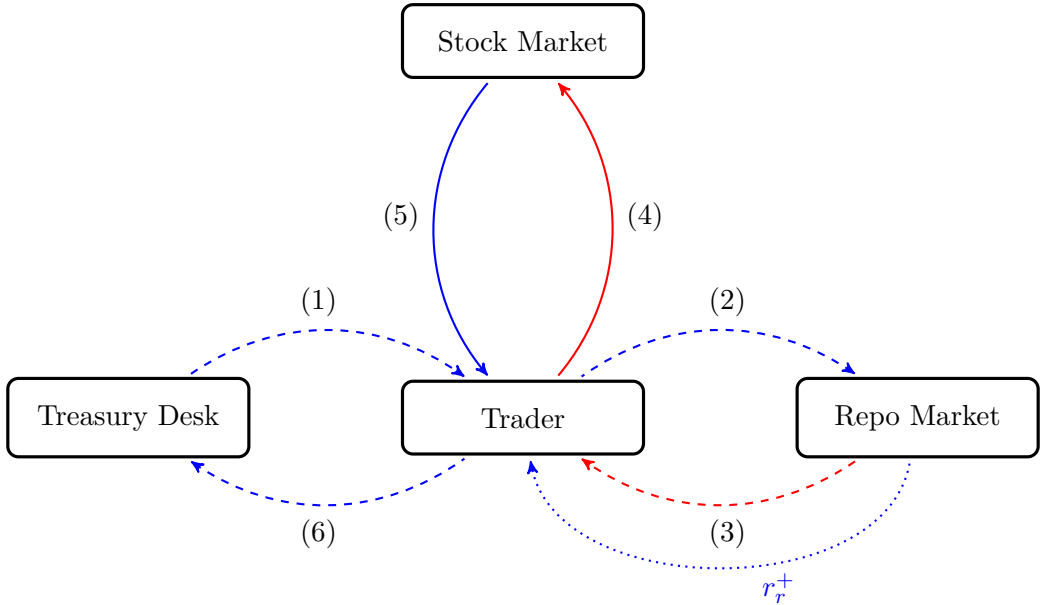
\begin{figure}[ht]
    \centering
    \begin{tikzpicture}[thick,scale=1, every node/.style={transform shape}]
        \node[punkt, inner sep=10pt] (trader) {Trader};
        \node[punkt, inner sep=10pt,  left=2cm of trader] (funder) {Treasury Desk}
            edge[pil, bend left=35, blue, dashed] (trader)
            edge[pil, <-, bend right=35, blue, dashed] (trader);
        \node[above left =1cm of trader] (one) {(1)};
        \node[below left =1cm of trader] (six) {(6)};
        \node[punkt, inner sep=10pt,  above=4cm of trader] (stock) {Stock Market}
            edge[pil, bend right=40, blue] (trader)
            edge[pil, <-, bend left=40, red] (trader);
        \node[below = 2cm of stock.west] (five) {(5)};
        \node[below =2cm of stock.east] (four) {(4)};
        \node[punkt, inner sep=10pt,  right=2cm of trader] (repo) {Repo Market}
            edge[pil, bend left=35, red, dashed] (trader)
            edge[pil, <-, bend right=35, blue, dashed] (trader)
            edge[pil, bend left=80, blue, dotted] (trader);
        \node[above right =1cm of trader] (two) {(2)};
        \node[below right=1cm of trader] (three) {(3)};
        \node[below right=2cm of trader, blue] (rrp) {$\rrp$};
    \end{tikzpicture}
     \caption{Security driven repo activity: Solid lines are purchases/sales, dashed lines borrowing/lending, dotted lines interest due; blue lines are cash, red lines are stock. The treasury desk lends money to the trader (1) who uses it to lend to the repo market (2) receiving in turn collateral (3). He sells the stock on the market to get effectively into a short position (4) earning cash from the deal (5) which he uses to repay his debt to the funding desk (6). As a cash lender, he receives interest at the rate $\rrp$ from the repo market. There are no interest payments between trader and treasury desk as the payments (1) and (6) cancel each other out.}
\label{fig:secdriven}
\end{figure}

The other type of transaction is cash driven. This is essentially the other side of the trade, and is implemented when the trader wants a long position in the stock security. In this case, he borrows cash from the repo market, uses it to purchase the stock security posted as collateral to the loan, and agrees to repurchase the collateral later at a slightly higher price. The difference between the original price of the collateral and the repurchase price defines the repo rate. As the loan is collateralized, the repo rate will be lower than the rate of an uncollateralized loan. At maturity of the repo contract, when the trader has repurchased the stock collateral from the repo market, he can sell it on the exchange. The details of the cash driven transaction are summarized in Figure \ref{fig:cashdriven}.

\begin{figure}[ht]
    \centering
    \begin{tikzpicture}[thick,scale=1, every node/.style={transform shape}]
        \node[punkt, inner sep=10pt] (trader) {Trader};
        \node[punkt, inner sep=10pt,  left=2cm of trader] (funder) {Treasury Desk}
            edge[pil, bend left=35, blue, dashed] (trader)
            edge[pil, <-, bend right=35, blue, dashed] (trader);
        \node[above left =1cm of trader] (one) {(1)};
        \node[below left =1cm of trader] (six) {(6)};
        \node[punkt, inner sep=10pt,  above=4cm of trader] (stock) {Stock Market}
            edge[pil, <-, bend right=40, blue] (trader)
            edge[pil, bend left=40, red] (trader);
        \node[below = 2cm of stock.west] (two) {(2)};
        \node[below =2cm of stock.east] (three) {(3)};
        \node[punkt, inner sep=10pt,  right=2cm of trader] (repo) {Repo Market}
            edge[pil, bend left=35, blue, dashed] (trader)
            edge[pil, <-, bend right=35, red, dashed] (trader)
            edge[pil, <-, bend left=80, blue, dotted] (trader);
        \node[above right =1cm of trader] (four) {(4)};
        \node[below right=1cm of trader] (five) {(5)};
        \node[below right=2cm of trader, blue] (rrm) {$\rrm$};
    \end{tikzpicture}
       \caption{Cash driven repo activity: Solid lines are purchases/sales, dashed lines borrowing/lending, dotted lines interest due; blue lines are cash, red lines are stock. The treasury desk lends money to the trader (1) who uses it to purchase stock (2) from the stock market (3). He uses the stock as collateral (4) to borrow money from the repo market (5) and uses it to repay his debt to the funding desk (6). The trader has thus to pay interest at the rate $\rrm$ to the repo market. There are no interest payments between trader and treasury desk as the payments (1) and (6) cancel each other out.}
    \label{fig:cashdriven}
\end{figure}
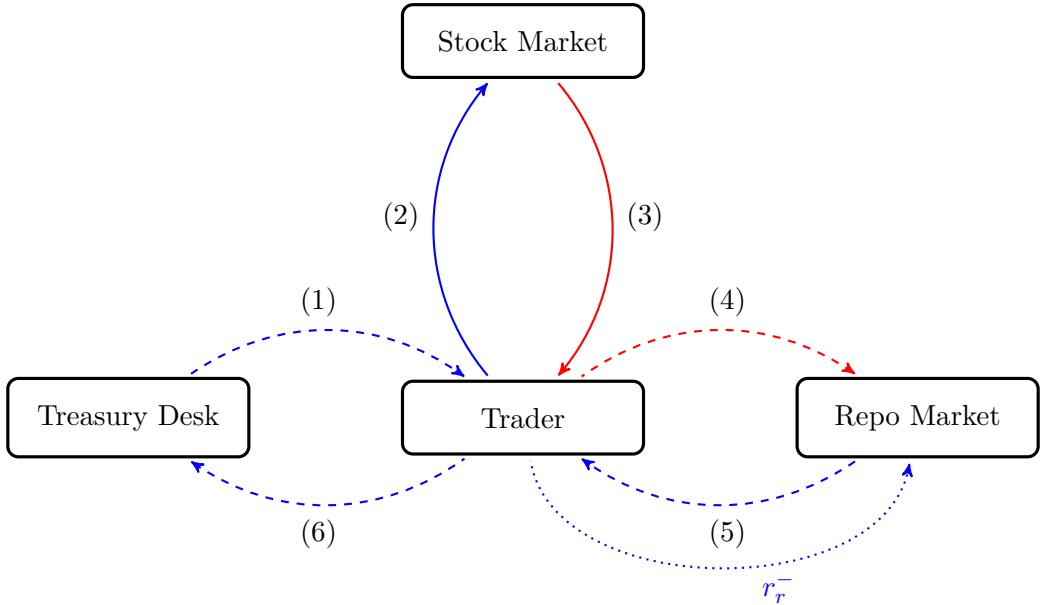

We use $r_r^{+}$ to denote the rate charged by the hedger when he lends money to the repo market and implements his short-selling position. We use $\rrm$ to denote the rate that he is charged when he borrows money from the repo market and implements a long position. We denote by $B^{\rrp}$ and $B^{\rrm}$ the repo accounts whose drifts are given, respectively, by $\rrp$ and $\rrm$. Their dynamics are given by
  \[
    dB_t^{r_r^{\pm}} = r_r^{\pm} B_t^{r_r^{\pm}} dt.
  \]
For future purposes, define
  \begin{equation}\label{eq:Brt}
    B_t^{r_r} := B_t^{r_r}\bigl(\psir \bigr) = e^{\int_0^t r_r (\psir_s) ds},
  \end{equation}
where
  \begin{equation}\label{eq:rr}
    r_r(x) = \rrm \ind_{\{x<0\}}+ \rrp \ind_{\{x>0\}}.
  \end{equation}
Here, $\psir_t$ denotes the number of shares of the repo account held at time $t$. Equations \eqref{eq:Brt}-\eqref{eq:rr} indicate that the trader earns the rate $\rrp$ when lending $\psir>0$ shares of the repo account to implement the short-selling of $-\xi$ shares of the stock security, i.e., $\xi < 0$. Similarly, he has to pay interest rate $\rrm$ on the $-\psir$ ($\psir < 0$) shares of the repo account that he has borrowed by posting $\xi>0$ shares of the stock security as collateral. Because borrowing and lending transactions are fully collateralized, it always holds that
  \begin{equation}\label{eq:selff}
    \psir_t B_t^{r_r} = - \xi_t S_t.
  \end{equation}

\paragraph{The risky bond securities.}
Let $\tau_i$, $i \in \{I, C\}$, be the default times of trader and counterparty. These default times are exponentially distributed random variables with constant
intensities $h_i^{\Px}$, $i \in \{I,C\}$, and are independent of the filtration $\mathbb{F}$. We use $H_i(t)= \ind_{\{\tau_i\leq t\}}$, $t\geq0$, to denote the default indicator process of $i$. The default event filtration is given by $\mathbb{H} = (\mathcal{H}_t)_{t \geq 0}$, $\mathcal{H}_t = \sigma(H^I_u, H^C_u \; ; u \leq t)$. Such a default
model is a special case of the bivariate Cox process framework, for which the $(H)$-hypothesis (see \cite{Elliott}) is well known to hold. In particular, this implies that the $\mathbb{F}$-Brownian motion $W^{\Px}$ is also a $\mathbb{G}$-Brownian motion.

We introduce two risky bond securities with zero recovery underwritten by the trader $I$ and by his counterparty $C$, and maturing at the same time $T$. We denote their price processes by $P^I$ and $P^C$, respectively. For $0 \leq t \leq T$, $i \in \{I, C\}$, the dynamics of their price processes are given by
  \begin{equation}\label{eq:priceproc}
    dP^i_t = \mu_i P^i_t \, dt - P^i_{t-} \,dH_t^i, \qquad P^i_0 = e^{-\mu_i T},
  \end{equation}
with return rates $\mu_i$. We do not allow bonds to be traded in the repo market. Our assumption is driven by the consideration that the repurchase agreement market
for risky bonds is often illiquid. We also refer to the introductory discussion in \cite{Brennan} stating that even if a bond can be shorted on the repo market, the tenor of the agreement
is usually very short.

Throughout the paper, we use $\tau := \tau_I \wedge \tau_C \wedge T$ to denote the earliest of the transaction maturity $T$, trader and counterparty default time.

\subsection{Hedger specific instruments} \label{sec:hedgerspe}

This class includes the funding account and the collateral account of the hedger.

\paragraph{Funding account.}
We assume that the trader lends and borrows moneys from his treasury at possibly different rates. Denote by $\rfp$ the rate at which the hedger lends to the treasury, and by $\rfm$ the rate at which he borrows from it. We denote by $B^{r_f^{\pm}}$ the cash accounts corresponding to these funding rates, whose dynamics are given by
  \[
    dB_t^{r_f^{\pm}} = r_f^{\pm} B_t^{r_f^{\pm}} dt.
  \]
Let $\xi^f_t$ the number of shares of the funding account at time $t$. Define
  \begin{equation}\label{eq:Brf}
    B_t^{r_f}  := B_t^{r_f}\bigl(\xi^f) = e^{\int_0^t r_f(\xi^f_s) ds},
  \end{equation}
where
  \begin{equation}\label{eq:rrf}
    r_f  := r_f(y)= \rfm \ind_{\{{y < 0}\}}+\rfp \ind_{\{{y > 0}\}}.
  \end{equation}
Equations~\eqref{eq:Brf}~-~\eqref{eq:rrf} indicate that if the hedger{'}s position at time $t$, $\xi^f_t$, is negative, then he needs to finance his position. He will do so by borrowing from the treasury at the rate $\rfm$. Similarly, if the hedger{'}s position is positive, he will lend the cash amount to the treasury at the rate $\rfp$.

\paragraph{Collateral process and collateral account.}
The role of the collateral is to mitigate counterparty exposure of the two parties, i.e the potential loss on the transacted claim incurred by one party if the other defaults. The collateral process $C:={(C_t; \; t\geq 0)}$ is an $\mathbb{F}$ adapted process. We use the following sign conventions. If $C_t > 0$, the hedger is said to be the \textit{collateral provider}. In this case the counterparty measures a positive exposure to the hedger, hence asking him to post collateral so as to absorb potential losses arising if the hedger defaults. Vice versa, if $C_t < 0$, the hedger is said to be the \textit{collateral taker}, i.e., he measures a positive exposure to the counterparty and hence asks her to post collateral.

Collateral is posted and received in the form of cash in line with data reported by \cite{ISDA14}, according to which
cash collateral is the most popular form of collateral.\footnote{According to \cite{ISDA14} (see Table 3 therein), cash represents slightly more than $78\%$ of the total collateral delivered and these figures are broadly consistent across years. Government securities instead only constitute $18\%$ of total collateral delivered and other forms of collateral consisting of riskier assets, such as municipal bonds, corporate bonds, equity or commodities only represent a fraction slightly higher than $3\%$.}

We denote by $\rcp$ the rate on the collateral amount received by the hedger if he has posted the collateral, i.e., if he is the collateral provider, while $\rcm$ is the rate paid by the hedger if he has received the collateral, i.e., if he is the collateral taker. The rates $r_c^\pm$ typically correspond to Fed Funds or EONIA rates, i.e., to the contractual rates earned by cash collateral in the US and EURO markets, respectively. We denote by $B^{r_c^{\pm}}$ the cash accounts corresponding to these collateral rates, whose dynamics are given by
  \[
    dB_t^{r_c^{\pm}} = r_c^{\pm} B_t^{r_c^{\pm}} dt.
  \]
Moreover, let us define
  \[
    B_t^{r_c} := B_t^{r_c}(C) = e^{\int_0^t r_c(C_s) ds},
  \]
where
  \[
    r_c(x) = \rcp \ind_{\{x>0\}} + \rcm \ind_{\{x<0\}}.
  \]

Let $\psi_t^c$ be the number of shares of the collateral account $B^{r_c}_t$ held by the trader at time $t$. Then it must hold that
  \begin{equation}\label{eq:collrel}
    \psi_t^{c}B_t^{r_c}  = - C_t.
  \end{equation}
The latter relation means that if the trader is the collateral taker at $t$, i.e., $C_t < 0$, then he has purchased shares of the collateral account i.e., $\psi_t^{c} > 0$. Vice versa, if the trader is the collateral provider at time $t$, i.e., $C_t > 0$, then he has sold
shares of the collateral account to her counterparty.

Before proceeding further, we visualize in Figure \ref{fig:transflow} the mechanics governing the entire flow of transactions taking place.

\begin{figure}[ht]
    \centering
    \begin{tikzpicture}[thick,scale=0.9, every node/.style={transform shape}]
        \node[punkt, inner sep=10pt] (trader) {Trader};
        \node[punkt, inner sep=10pt,  left=2cm of trader] (funder) {Treasury Desk}
            edge[pil, bend left=35, blue, dotted] (trader)
            edge[pil, <-, bend right=35, blue, dotted] (trader)
            edge[pil, <-, bend left=10, blue, dashed] (trader)
            edge[pil, bend right=10, blue, dashed] (trader);
        \node[above left =1cm of trader] (rfp) {$\rfp$};
        \node[below left =1cm of trader] (rfm) {$\rfm$};
        \node[left =0.5cm of trader] (fundingcash) {Cash};
        \node[punkt, inner sep=10pt,  above=4cm of trader] (stock) {Stock \& Repo Market}
            edge[pil, <-, bend right=40, blue, dotted] (trader)
            edge[pil, bend left=40, blue, dotted] (trader)
            edge[pil, <-, bend left=20, red, solid] (trader)
            edge[pil, bend right=20, red, solid] (trader);
        \node[above =2cm of trader] (stocklending) {Stock};
        \node[below = 2cm of stock.west] (rrm) {$\rrm$};
        \node[below =2cm of stock.east] (rrp) {$\rrp$};
        \node[punkt, inner sep=10pt,  right=3.2cm of trader] (bond) {Bond Market}
            edge[pil, bend left=30, black, solid] (trader)
            edge[pil, <-, bend right=30, black, solid] (trader);
        \node[right =0.3cm of trader] (fundingcash) {Bonds $P^I$, $P^C$};
        \node[punkt, inner sep=10pt,  below=4cm of trader] (counter) {Counterparty}
            edge[pil, <-, bend right=50, blue, dotted] (trader)
            edge[pil, bend left=50, blue, dotted] (trader)
            edge[pil, <-, bend left=35, blue, dashed] (trader)
            edge[pil, bend right=35, blue, dashed] (trader);
        \node[below =1.5cm of trader] (collateralization) {Collateral};
        \node[below = 3.5cm of trader.west] (rcp) {$\rcp$};
        \node[below =3.5cm of trader.east] (rcm) {$\rcm$};
    \end{tikzpicture}
    \caption{Trading: Solid lines are purchases/sales, dashed lines borrowing/lending, dotted lines interest due; blue lines are cash, red lines are stock purchases for cash and black lines are bond purchases for cash.}
    \label{fig:transflow}
\end{figure}
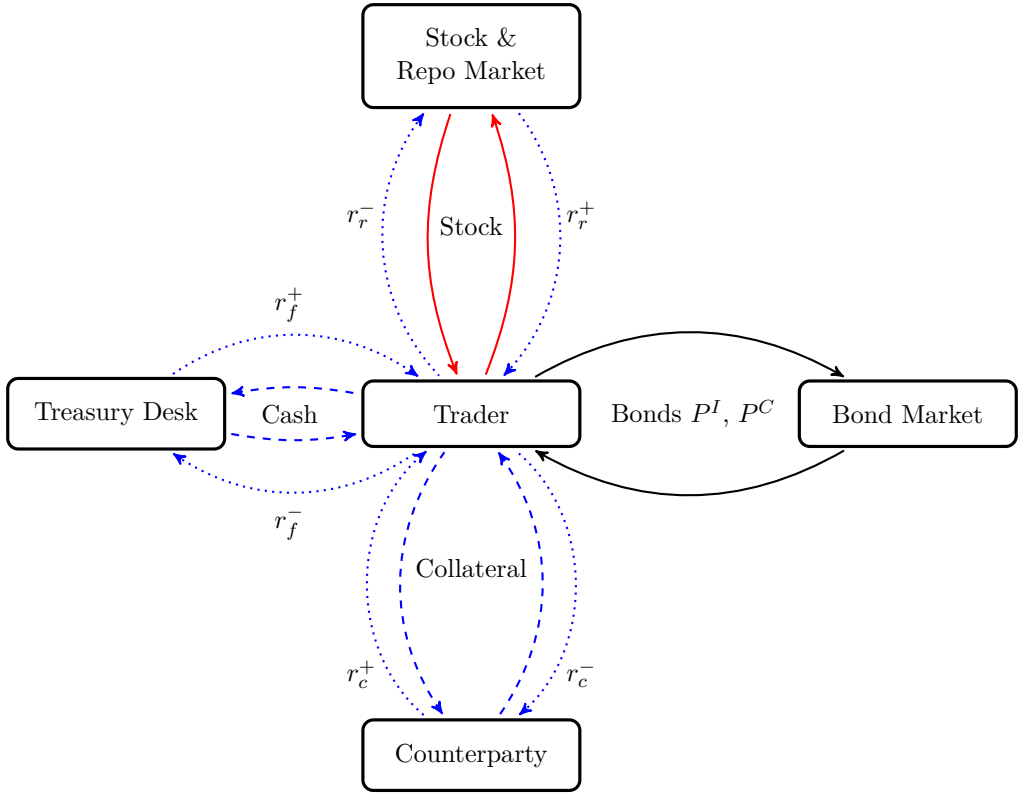

\section{Replicated claim, close-out value and wealth process} \label{sec:claim}

We take the viewpoint of a trader who wants to replicate a European type claim on the stock security. Such a claim is purchased or sold by the trader from/to his counterparty over-the-counter and hence subject to counterparty credit risk. The closeout value of the claim is decided by a valuation agent who might either be one of the parties or a third
party, in accordance with market practices as reviewed by the International Swaps and Derivatives Association {(ISDA)}. {The valuation agent determines the closeout value} of the transaction by calculating the Black Scholes price of the derivative using the discount rate $r_D$. Such a (publicly available) discount rate enables the hedger to introduce a valuation measure $\Qxx$ defined by the property that all securities have instantaneous growth rate $r_D$ under this measure. The rest of the section is organized as follows. We give the details of the valuation measure in Section \ref{sec:valuation}. We introduce the valuation process of the claim to be replicated and of the collateral process in Section \ref{sec:repclaim}. We define the class of admissible strategies in Section \ref{sec:repl} and specify the closeout procedure in Section \ref{sec:closeout}.

\subsection{The valuation measure} \label{sec:valuation}
We first introduce the default intensity model. Under the physical measure $\Px$, default times of trader and counterparty are assumed to be independent exponentially distributed random variables with constant intensities $h_i^{\Px}$, $i \in \{I,C\}$. It then holds that for each $i \in \{I, C\}$
  \[
    \varpi_t^{i,\Px} := H_t^i - \int_0^t\bigl(1-H_u^i\bigr)h_i^{\Px} \, du
  \]
is a $(\mathbb{G},\Px)$-martingale. The valuation measure $\Qxx$ associated with the publicly available discount rate $r_D$ chosen by the valuation agent is equivalent to $\Px$ and given by the Radon-Nikod\'{y}m density
  \begin{equation}\label{eq:q-girsanov}
    \frac{d\Qxx}{d\Px} \bigg|_{\mathcal{G}_{\tau}} = e^{\frac{r_D-\mu}{\sigma}W_{\tau}^{\Px} - \frac{(r_D-\mu)^2}{2\sigma^2}\tau} \Bigl(\frac{\mu_I - r_D}{h_I^{\Px}}\Bigr)^{H^I_\tau} e^{(r_D-\mu_I+h_I^{\Px})\tau}\Bigl(\frac{\mu_C - r_D}{h_C^{\Px}}\Bigr)^{H^C_\tau} e^{(r_D-\mu_C+h_C^{\Px})\tau}.
  \end{equation}
We also recall that $\mu_I$ and $\mu_C$ denote the rate of returns of the bonds underwritten by the trader
and counterparty respectively. Under $\Qxx$, the dynamics of the risky assets are given by
  \begin{align}
    dS_t &= r_D S_t \,dt + \sigma S_t \,dW_t^{\Qxx} \label{eq:S-Q}, \\
    dP_t^I & = r_D P_t^I \, dt - P_{t-}^I d\varpi_t^{I,\Qxx} \nonumber ,\\
    dP_t^C & = r_D P_t^C \, dt - P_{t-}^C d\varpi_t^{C,\Qxx} \nonumber,
  \end{align}
where $W^{\Qxx}:=(W^{\Qxx}_t; \; 0 \leq t \leq \tau)$ is a $(\mathbb{G},\Qxx)$-Brownian motion and $\varpi^{I,\Qxx} := (\varpi_t^{I,\Qxx}; \; 0 \leq t \leq \tau)$ as well as $\varpi^{C,\Qxx}:=(\varpi_t^{C,\Qxx}; \;0 \leq t \leq \tau)$ are  $(\mathbb{G},\Qxx)$-martingales. The above dynamics of $P_t^I$ and $P_t^C$ under the valuation measure $\Qxx$ can be deduced from their respective price processes given in \eqref{eq:priceproc} via a straightforward application of the It\^{o}{'}s formula.

By application of Girsanov{'}s theorem, we have the following relations: $W^{\Qxx}_t = W^{\Px}_t + \frac{\mu-r_D}{\sigma} t$,
$\varpi_t^{i,\Qxx} = \varpi_t^{i,\Px} + \int_0^t \bigl(1-H_u^i\bigr) (h_i^{\Px} - h_i^{\Qxx}) du$. The quantity, $h_i^{\Qxx} = \mu_i-r_D$, $i \in \{I,C\}$, is the default intensity of name $i$ under the valuation measure and is assumed to be positive.

\subsection{Replicated claim and collateral specification} \label{sec:repclaim}
The price process of the contingent claim ${\vartheta} \in L^2(\Omega, \mathcal{F}_T, \Qxx)$ to be replicated is, according to the valuation agent, given by
\[
\hat{V}_t^{{\vartheta}} := e^{-r_D(T-t)} \mathbb{E}^{\Qxx}\bigl[{\vartheta} \, \bigr\vert \, \mathcal{F}_t \bigr].
\]
We will drop the superscript ${\vartheta}$ and just write $\hat{V}_t$ when it is understood from the context. In the case of a European option we have that ${\vartheta}= {\Phi}(S_T)$, where $\Phi:\mathds{R}_{>0} \rightarrow \mathds{R}$ is a real valued function representing the terminal payoff of the claim and thus $\hat{V}_t = \hat{V}(t,S_t)$.

Additionally, the hedger has to post collateral for the claim. As opposed to the collateral used in the repo agreement, which is always the stock, the collateral mitigating counterparty credit risk of the claim is always cash. The collateral is chosen to be a fraction of the current exposure process of one party to the other. If the hedger sells a European call or put option on the security to his counterparty (he would then need to replicate the payoff $\Phi(S_T)${, $\Phi \geq 0$,} and deliver it to the counterparty at $T$), the counterparty always measures a positive exposure to the hedger, while the hedger has zero exposure to the counterparty. As a result, the trader will always be the collateral provider, while the counterparty the collateral taker. By a symmetric reasoning,
if the hedger buys a European call or put option from his counterparty (he would then replicate the payoff $-\Phi(S_T)${, $\Phi \geq 0$,} to hedge his position), then he will always be the collateral taker. On the event that neither the trader nor the counterparty have defaulted by time $t$, the collateral process is defined by
  \begin{equation}\label{eq:rulecoll}
    C_t : = \alpha \hat{V}_t \ind_{\{\tau > t\}}= \alpha \hat{V}(t,S_t)\ind_{\{\tau > t\}},
  \end{equation}
where $0 \leq \alpha \leq 1$ is the collateralization level. The case when $\alpha = 0$ corresponds to zero collateralization, $\alpha = 1$ gives full collateralization. Collateralization levels are industry specific and are reported on a quarterly basis by ISDA, see for instance \cite{ISDA11} , Table 3.3. therein.\footnote{The average collateralization level in 2010 across all OTC derivatives was $73.1\%$. Positions with banks and broker dealers are the most highly collateralized among the different counterparty types with levels around $88.6\%$. Exposures to non-financial corporations and sovereign governments and supra-national institutions tend to have the lowest collateralization levels, amounting to $13.9\%$.}

Our collateral rule differs from \cite{Piterbarg}, where the collateral is assumed to match the value of the contract inclusive of funding costs, repo spreads and collateralization. Using such an approach, both the hedger and his counterparty generally disagree on the level of posted collateral if their funding, repo or collateral rates differ or if they use different models to measure credit risk. Hence, the two counterparties would need to enter into negotiations to agree on a collateral level. In our model, this is avoided because the valuation agent determines the collateral requirements based on the Black-Scholes price $\hat{V}_t$ of the claim, exclusive of funding and counterparty risk related costs.

\subsection{The wealth process} \label{sec:repl}

We allow for collateral to be fully rehypothecated. This means that the collateral taker is granted an unrestricted right to use the collateral amount, i.e., he can use it to purchase
investment securities. This is in agreement with most ISDA annexes, including the New York Annex, English Annex, and Japanese Annex. We notice that for the case of cash collateral,
the percentage of rehypothecated collateral amounts to about $90\%$ (see Table 8 in \cite{ISDA14}) hence largely supporting our assumption of full collateral re-hypothecation. As in
\cite{br}, the collateral received can be seen as an ordinary component of a hedger{'}s trading strategy, although this applies only prior to the counterparty{'}s default. {We denote by $V_t({\bm\varphi})$ the wealth process of the hedger and emphasize that the collateral can actually be used by him (via rehypothecation) if he is the collateral taker.}

Let ${\bm\varphi} := \bigl(\xi_t,\xi_t^f,\xi_t^I, \xi_t^C; t \geq 0\bigr)$, where we recall that $\xi_t$ denotes the number of shares of the security, $\xi^{f}_t$ the number of shares in the funding account, and we use $\xi^I_t$ and $\xi^C_t$ to denote the number of shares of trader and counterparty bonds, respectively, at time $t$. Recalling Eq.~\eqref{eq:collrel}, and expressing all positions in terms of number of shares multiplied by the price of the corresponding security, the wealth process $V({\bm \varphi})$ is given by the following expression
  \begin{equation}\label{eq:wealth}
    V_t({\bm\varphi}) := \xi_t S_t + \xi_t^I P_t^I + \xi_t^C P_t^C + \xi_t^f B_t^{r_f} + \psir_t B_t^{r_r} - \psi_t^{c} B_t^{r_c},
  \end{equation}
where we notice that the number of shares $\psir$ of the repo account and the number of shares $\psi^{c}$ held in the collateral account are uniquely determined by equations~\eqref{eq:selff} and~\eqref{eq:collrel}, respectively.

\begin{definition}
A collateralized trading strategy ${\bm\varphi}$ is \textit{self-financing} if, for $t \in [0,T]$, it holds that
  \[
    V_t({\bm\varphi}) := V_0({\bm\varphi}) + \int_0^t \xi_u \, dS_u + \int_0^t \xi_u^I \, dP_u^I + \int_0^t \xi_u^C \, dP_u^C + \int_0^t \xi_u^f \, dB_u^{r_f} +  \int_0^t \psir_u \, dB_u^{r_r} -\int_0^t \psi_u^{c} \, dB_u^{r_c},
  \]
where $V_0({\bm\varphi}) =V_0$ is the initial endowment.
\end{definition}

Moreover, we define the class of admissible strategies as follows:

\begin{definition}\label{def:control-U}
The admissible set of trading strategies is given as class of $\mathbb{F}$-predictable processes such that the portfolio process is bounded from below, see also \cite{Delbaen}.
\end{definition}

\subsection{Close-out value of transaction}\label{sec:closeout}

The ISDA market review of OTC derivative collateralization practices, see \cite{ISDA2010}, section 2.1.5, states that the surviving party should evaluate the transactions that have been terminated due to the default event, and claim for a reimbursement only after mitigating losses with the available collateral. In our study, we follow the risk-free closeout convention meaning that the trader liquidates his position at the market value when his counterparty defaults. Next, we describe how this is modeled in our framework. Denote by $\theta$ the value of the replicating portfolio at $\tau$, where we recall that $\tau$ has been defined in Section \ref{sec:model}. This value represents the amount of wealth that the trader must hold in order to replicate the closeout payoffs when the transaction terminates. It is given by
  \begin{align}\label{eq:theta}
    \nonumber  \theta \phantom{:}=  {\theta(\tau, \hat{V})} & := {\hat{V}_{\tau} + \ind_{\{\tau_C<\tau_I \}} L_C Y^- - \ind_{\{\tau_I<\tau_C \}} L_I Y^+} \\
     & \phantom{:}= { \ind_{\{\tau_I<\tau_C \}} \theta_{I}(\hat V_{\tau}) +  \ind_{\{\tau_C<\tau_I \}} \theta_{C}(\hat V_{\tau})},
  \end{align}
where $Y:= {\hat{V}_\tau} - C_{\tau} {=(1-\alpha)\hat{V}_\tau}$ is the value of the claim at default, netted of the posted collateral and we define
\begin{equation}\label{eq:thetaIC}
   \theta_{I}(v)  := v -   L_I  ((1-\alpha)v )^{+},\qquad
   \theta_{C}(v)  := v +   L_C ((1-\alpha) v )^{-},
\end{equation}
where for a real number $x$ we are using the notations $x^+ := \max(x,0)$, and $x^- := \max(0,-x)$. The term $\ind_{\{\tau_C<\tau_I \}} L_C Y^-$ originates the residual $\CVA$ term after collateral mitigation, while
$\ind_{\{\tau_I<\tau_C\}}  L_I Y^+$ originates the $\DVA$ term, see also \cite{BrigoCapPal} and \cite{capmig} for additional details. The quantities $0 \leq L_I \leq 1$ and $0 \leq L_C \leq 1$ are the loss rates against
the trader and counterparty claims, respectively.

\begin{remark}
We elaborate on why {$\theta$} is the amount which needs to be replicated by the trader when the transaction terminates. Suppose that the trader has sold a call option to the counterparty (hence $\hat{V}(t,S_{t})>0$ for all $t$). This means that the trader is always the collateral provider, $C_t = \alpha \hat{V}(t,S_{t}) > 0$ for all $t$, given that the counterparty always measures a positive exposure to the trader. If the counterparty defaults first and before the maturity of the claim, the trader will net the amount $\hat{V}(\tau,S_{\tau})$ owed to the counterparty with his collateral posted to the counterparty, and only return to her the residual amount $Y$. As a result, his trading strategy must yield an actual wealth equal to this amount. The above expression of closeout given by ${\theta}= \hat{V}(\tau,S_{\tau})$ indicates that this is indeed the case. Because the counterparty already holds the collateral, the trader only needs to return to her the amount $Y$, which is precisely the wealth process of the trader at $\tau$.
\end{remark}

\section{Arbitrage-free valuation and XVA} \label{sec:BSDEform}

The goal of this section is to find a valuation for the derivative security with payoff $\Phi(S_T)$ that is free from arbitrage in a certain sense. Before discussing arbitrage-free valuations, we have to make sure that the underlying market does not admit arbitrage from the hedger's perspective (as discussed in \cite[Section 3]{br}). In the underlying market, the trader is only allowed to borrow/lend stock, buy/sell risky bonds and borrow/lend from/to the funding desk. In particular, neither the derivative security, nor the collateral process is involved.

\begin{definition}\label{def:no-arb}
 We say that the market $(S,P^I,P^C)$ admits \textit{hedger's arbitrage} if we can find a trading strategy ${\bm\varphi} = \bigl(\xi_t,\xi_t^f,\xi_t^I, \xi_t^C; t \geq 0\bigr)$ such that, given a non-negative initial capital $x \geq 0$ of the hedger and denoting the wealth process corresponding to it $\bigl(V_t({\bm\varphi}, x)\bigr)_{t\geq 0}$, we have that $\Px \bigl[V_\tau({\bm\varphi}, x) \geq e^{\rfp \tau}x \bigr] =1$ and $\Px\bigl[V_\tau({\bm\varphi}, x) > e^{\rfp \tau}x\bigr] >0$. If the market does not admit hedger's arbitrage for the hedger's initial capital $x \geq 0$, we say that the market is arbitrage free from the hedger's perspective.
\end{definition}

We will omit the arguments $x$, ${\bm\varphi}$ or both in the wealth process $V({\bm\varphi},x)$, whenever understood from the context. In the sequel, we make the following assumption:

\begin{assumption}\label{ass:necessary}
The following relations hold between the different rates: $\rrp \le \rfm$, $\rfp \leq \rfm$, and {$\rfp \vee r_D < \mu_I \wedge \mu_C$}.
\end{assumption}

\begin{remark}\label{rem:measure}
The above assumption is necessary to preclude arbitrage. {The condition $r_D < \mu_I \wedge \mu_C$ is needed for the existence of the valuation measure as discussed at the end of section \ref{sec:valuation} ($h_i^{\Qxx} = \mu_i-r_D$ and risk-neutral default intensities must be positive).} If, by contradiction, $\rrp > \rfm$, the trader can borrow cash from the funding desk at the rate $\rfm$ and lend it to the repo market at the rate $\rrp$, while holding the stock as a collateral. This results in a sure win for the trader. Similarly, if the trader could fund his strategy from the treasury at a rate $\rfm < \rfp$, it would clearly result in an arbitrage. The condition $\rfp < \mu_I$ (and mutatis mutandis $\rfp < \mu_C$) has a more practical interpretation: it precludes the arbitrage opportunity of short selling the bond underwritten by the trader{'s} firm and investing the proceeds in the funding account.
\end{remark}

We next provide a sufficient condition guaranteeing that the underlying market is free of arbitrage.

\begin{proposition}\label{thm:arb-market}
Suppose that in addition to Assumption \ref{ass:necessary}, $\rrp \leq \rfp \leq \rrm$. Then the model does not admit arbitrage opportunities for the hedger for any $x \geq 0$.
\end{proposition}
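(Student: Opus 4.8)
The plan is to preclude hedger's arbitrage by producing an equivalent measure under which the wealth of every admissible strategy, discounted at the funding lending rate $\rfp$, is a supermartingale; the discount rate is the one imposed by Definition \ref{def:no-arb}. I would take $\Qxx'\sim\Px$ on $\mathcal{G}_\tau$ defined exactly as the valuation measure in \eqref{eq:q-girsanov}, but with $r_D$ replaced by $\rfp$. This is legitimate because Assumption \ref{ass:necessary} gives $\rfp<\mu_I\wedge\mu_C$, so the $\Qxx'$-default intensities $\mu_I-\rfp$ and $\mu_C-\rfp$ are strictly positive; since the construction mimics that of $\Qxx$, the measure $\Qxx'$ exists, $W^{\Qxx'}$ is a $\mathbb{G}$-Brownian motion, and the $\varpi^{i,\Qxx'}$ are $\mathbb{G}$-martingales under $\Qxx'$. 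The point of this choice is that under $\Qxx'$ all three traded securities grow at the single rate $\rfp$: $dS_t=\rfp S_t\,dt+\sigma S_t\,dW_t^{\Qxx'}$ and $dP_t^i=\rfp P_t^i\,dt-P_{t-}^i\,d\varpi_t^{i,\Qxx'}$.

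Writing $V_t$ for $V_t({\bm\varphi},x)$ and setting $a_t\define V_t-\xi_t^I P_t^I-\xi_t^C P_t^C$, I would then compute $d(e^{-\rfp t}V_t)$ from the self-financing dynamics. In the underlying market $C\equiv0$, hence $\psi^c\equiv0$ by \eqref{eq:collrel}; using \eqref{eq:selff}, which gives $\psir_t B_t^{r_r}=-\xi_t S_t$ and forces $\xi_t^f B_t^{r_f}=a_t$, together with $r_f(\xi_t^f)\,a_t=\rfp a_t^+-\rfm a_t^-$ (the sign of $\xi_t^f$ coincides with that of $a_t$), a short computation yields
\[
 d\bigl(e^{-\rfp t}V_t\bigr)=dM_t+e^{-\rfp t}\Bigl[\xi_t S_t\bigl(\rfp-r_r(\psir_t)\bigr)+(\rfp-\rfm)\,a_t^-\Bigr]dt,
\]
where $M$ is a $\Qxx'$-local martingale (the $\mathbb{F}$-predictable integrands against $W^{\Qxx'}$ and the $\varpi^{i,\Qxx'}$). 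The bond positions have dropped out because every security drifts at $\rfp$. The drift term is nonpositive: $(\rfp-\rfm)a_t^-\le0$ as $\rfp\le\rfm$; and by \eqref{eq:rr} and \eqref{eq:selff}, $r_r(\psir_t)=\rrm$ on $\{\xi_t>0\}$ and $r_r(\psir_t)=\rrp$ on $\{\xi_t<0\}$, so $\xi_t S_t(\rfp-r_r(\psir_t))$ equals $\xi_t S_t(\rfp-\rrm)\le0$ on $\{\xi_t>0\}$ and $\xi_t S_t(\rfp-\rrp)\le0$ on $\{\xi_t<0\}$ — this is exactly where $\rrp\le\rfp\le\rrm$ is used. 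Thus $e^{-\rfp t}V_t$ is the sum of $x$, a $\Qxx'$-local martingale, and a nonincreasing process.

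It then remains to upgrade this to a true supermartingale: by Definition \ref{def:control-U} the strategy is admissible, so $V$ is bounded from below, and $e^{-\rfp t}$ is bounded on $[0,T]\supseteq[0,\tau]$, whence $e^{-\rfp t}V_t$ is bounded from below; a bounded-from-below local martingale is a supermartingale, and adding a nonincreasing process preserves this. Optional sampling at the bounded stopping time $\tau$ gives $\E^{\Qxx'}\bigl[e^{-\rfp\tau}V_\tau\bigr]\le V_0=x$. If ${\bm\varphi}$ were a hedger's arbitrage, then $V_\tau\ge e^{\rfp\tau}x$ holds $\Px$-a.s., hence $\Qxx'$-a.s. by equivalence, so $e^{-\rfp\tau}V_\tau\ge x$ $\Qxx'$-a.s.; combined with the previous inequality, $e^{-\rfp\tau}V_\tau=x$ $\Qxx'$-a.s., therefore $\Px$-a.s., contradicting $\Px[V_\tau>e^{\rfp\tau}x]>0$.

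I expect the only real obstacle to be conceptual, namely realizing that the natural valuation measure $\Qxx$ is \emph{not} the right object: one needs an equivalent measure under which all tradables — in particular both corporate bonds, whose holdings are sign-unconstrained — drift at the lending rate $\rfp$, paired with discounting at $\rfp$. With that pairing, the inequalities $\rrp\le\rfp$, $\rfp\le\rrm$ and $\rfp\le\rfm$ are precisely what makes the residual repo and funding drifts nonpositive, and $\rfp<\mu_I\wedge\mu_C$ is precisely what makes $\Qxx'$ well defined; the lone genuinely technical point is the local-to-true supermartingale step, which uses boundedness from below.
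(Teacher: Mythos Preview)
Your proposal is correct and follows essentially the same approach as the paper: you introduce the equivalent measure obtained by replacing $r_D$ with $\rfp$ in the density \eqref{eq:q-girsanov} (the paper calls it $\tilde{\Px}$), under which all traded assets drift at $\rfp$, and then show that the $\rfp$-discounted wealth is a local supermartingale whose drift is nonpositive precisely because of the inequalities $\rrp\le\rfp\le\rrm$ and $\rfp\le\rfm$; the upgrade to a true supermartingale via the admissibility lower bound and the conclusion via optional sampling are identical to the paper's. The only cosmetic difference is that the paper carries out the computation in integral form after first recording the scalar inequalities $r_r\psir_t\le\rfp\psir_t$ and $r_f\xi_t^f\le\rfp\xi_t^f$, whereas you compute $d(e^{-\rfp t}V_t)$ directly and read off the sign of each drift term.
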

We remark that in a market model without defaultable securities, similar inequalities between borrowing and lending rates have been derived by \cite{br} (Proposition 3.3), and by \cite{NiRut} (Proposition 3.1). We impose additional relations between lending rates and return rates of the risky bonds given that our model also allows for counterparty risk.

\begin{proof}
First, observe that under the conditions given above we have
  \begin{align*}
    \nonumber   r_r \psir_t& = \rrp \psir_t \ind_{\{\psir_t>0\}} + \rrm \psir_t \ind_{\{\psir_t< 0\}} \leq \rfp \psir_t \ind_{\{\psir_t>0\}} + \rfp \psir_t \ind_{\{\psir_t < 0\}} = \rfp \psir_t \\
    r_f \xi_t^f & = \rfp \xi_t^f \ind_{\{\xi_t^f >0\}} + \rfm \xi_t^f \ind_{\{{\xi_t^f} < 0\}} \leq \rfp \xi_t^f \ind_{\{\xi_t^f >0\}} + \rfp \xi_t^f\ind_{\{\xi_t^f < 0\}} = \rfp \xi_t^f
  \end{align*}
Next, it is convenient to write the wealth process under a suitable measure $\tilde{\Px}$ specified via the stochastic exponential
  \[
    \frac{d\tilde{\Px}}{d\Px} \bigg|_{\mathcal{G}_{\tau}} = e^{\frac{\rfp-\mu}{\sigma}W_{\tau}^{\Px} - \frac{(\rfp-\mu)^2}{2\sigma^2}\tau} \Bigl(\frac{\mu_I - \rfp}{h_I^{\Px}}\Bigr)^{H^I_\tau} e^{(\rfp -\mu_I + h_I^{\Px})\tau}\Bigl(\frac{\mu_C - \rfp}{h_C^{\Px}}\Bigr)^{H^C_\tau} e^{(\rfp-\mu_C + h_C^{\Px})\tau}
  \]
By Girsanov's theorem, $\tilde{\Px}$ is an equivalent measure to $\Px$ such that the dynamics of the risky assets are given by
  \begin{align*}
    dS_t & = \rfp S_t \, dt + \sigma S_t dW_t^{\tilde{\Px}},\\
    dP_t^I & = \rfp P_t^I \, dt - P_{t-}^I d\varpi_t^{I,\tilde{\Px}},\\
    dP_t^C & = \rfp P_t^C \, dt - P_{t-}^C d\varpi_t^{C,\tilde{\Px}}
  \end{align*}
where $W^{\tilde{\Px}}:=(W^{\tilde{\Px}}_t; \; 0 \leq t \leq \tau)$ is a  $(\mathbb{G},\tilde{\Px})$-Brownian motion {and} $\varpi^{I,\tilde{\Px}} := (\varpi_t^{I,\tilde{\Px}}; \; 0 \leq t \leq \tau)$ {as well as} $\varpi^{C,\tilde{\Px}}:=(\varpi_t^{C,\tilde{\Px}}; \;0 \leq t \leq \tau)$ are $(\mathbb{G},\tilde{\Px})$-martingales. The $\rfp$ discounted assets $\tilde{S}_t := e^{-\rfp t} S_t$, $\tilde{P}_t^I := e^{-\rfp t} P_t^I$ and $\tilde{P}_t^C := e^{-\rfp t} P_t^C$ are thus {$(\mathbb{G},\tilde{\Px})$-}martingales. In particular, $W^{\tilde{\Px}} = W^{\Px} + \frac{\mu-\rfp}{\sigma}$ and the default intensity of the hedger and of his counterparty under $\tilde{\Px}$ are given by $h_i^{\tilde{\Px}} = \mu_i-\rfp$, {$i \in \{I,C\}$}, {which is positive} in light of the assumptions of the proposition.

Denote the wealth process associated with $(S_t,P_t^I,P_t^C)_{t \geq 0}$ in the underlying market by $\check{V}_t$. Using the self-financing condition, its dynamics are given by
  \begin{align*}
    \nonumber d\check{V}_t &= \bigl(r_f \xi_t^f B_t^{r_f} + \rfp \xi_t S_t + r_r \psir_tB_t^{r_r} + \rfp \xi_t^I P_t^I + \rfp \xi_t^C P_t^C \bigr) \, dt  \\
    \nonumber & \phantom{=}+ \xi_t \sigma S_t \, dW_t^{\tilde{\Px}}  - {\xi_t^I} P_{t-}^I  \, d\varpi_t^{I,\tilde{\Px}}  - {\xi_t^C} P_{t-}^C \, d\varpi_t^{C,\tilde{\Px}}\\
    & = \bigl(r_f \xi_t^f B_t^{r_f} + r_r \psir_t B_t^{r_r}  \bigr) \, dt + \xi_t \,d S_t + {\xi_t^I} \, dP_t^I + {\xi_t^C} \, dP_t^C.
  \end{align*}
Then we have that
  \begin{align*}
    \check{V}_\tau({\bm\varphi},x) - \check{V}_0({\bm \varphi},x) & = \int_0^\tau \bigl( \rfp \xi_t S_t + r_f \xi_t^f B_t^{r_f} + r_r \psir_tB_t^{r_r} + \rfp\xi_t^I P^I_t + \rfp\xi_t^C P^C_t  \bigr) \, dt  \\
    \nonumber & \phantom{=}+ \int_0^\tau \xi_t \sigma S_t \, dW_t^{\tilde{\Px}} - \int_0^\tau {\xi_t^I} P_{t-}^I \, d\varpi_t^{I, \tilde{\Px}} - \int_0^\tau {\xi_t ^C} P_{t-}^C\, d \varpi_t^{C,\tilde{\Px}} \\
    & \leq \int_0^\tau \bigl( \rfp \xi_t S_t + \rfp \xi_t^f B_t^{r_f} + \rfp \psir_tB_t^{r_r} + \rfp\xi_t^I P^I_t + \rfp\xi_t^C P^C_t \bigr) \, dt  \\
    \nonumber & \phantom{=}+ \int_0^\tau \xi_t \sigma S_t \, dW_t^{\tilde{\Px}} - \int_0^\tau {\xi_t^I} P_{t-}^I \, d\varpi_t^{I, \tilde{\Px}} - \int_0^\tau {\xi_t^C} P_{t-}^C\, d \varpi_t^{C,\tilde{\Px}} \\
    & = \int_0^\tau \rfp \check{V}(x) \, dt + \int_0^\tau \xi_t \sigma S_t \, dW_t^{\tilde{\Px}} - \int_0^\tau {\xi_t^I} P_{t-}^I \, d\varpi_t^{I, \tilde{\Px}} - \int_0^\tau {\xi_t^C} P_{t-}^C\, d \varpi_t^{C,\tilde{\Px}}.
  \end{align*}
Therefore, it follows that
 \[
    e^{-\rfp\tau} \check{V}_{\tau}({\bm \varphi},x) -\check{V}_0({\bm \varphi},x) \leq \int_0^\tau \xi_t \, d\tilde{S}_t - \int_0^{\tau} {\xi_t^I} \, d\tilde{P}_{t-}^I - \int_0^{\tau} {\xi_t^C} \, d\tilde{P}_{t-}^C.
  \]
Note that the right hand side of the above inequality is a local martingale bounded from below (as the value process is bounded from below by the admissibility condition), and therefore is a supermartingale. Taking expectations, we conclude that
  \[
    \Exx^{\tilde{\Px}} \bigl[  e^{-\rfp\tau} \check{V}_\tau({\bm \varphi},x) - \check{V}_0({\bm \varphi},x) \bigr] \leq 0.
  \]
Thus either $\tilde{\Px} \bigl[\check{V}_\tau({\bm \varphi},x) = e^{\rfp\tau} x \bigr] = 1$ or $\tilde{\Px} \bigl[ \check{V}_\tau({\bm \varphi},x) < e^{\rfp\tau} x \bigr] >0 $. As $\tilde{\Px}$ is equivalent to $\Px$, this shows that arbitrage opportunities for the hedger are precluded in this model (he would receive {$e^{\rfp \tau}x$} by lending the positive cash amount $x$ to the treasury desk at the rate $\rfp$).
\end{proof}

Next we want to define the notion of an arbitrage free price of a derivative security from the hedger{'}s perspective. We will assume that the hedger has zero initial capital, or equivalently, he does not have liquid initial capital which can be used for hedging the claim until maturity. The hedging portfolio will thus be entirely financed by purchase/sale of the stock via the repo market and purchase/sale of bonds via the funding account. While our entire analysis might be extended to the case of nonzero initial capital, such an assumption will simplify notation and allow us to highlight the key aspects of the study.

\begin{definition}\label{def:arb-price}
The valuation $P \in \mathbb{R}$ of a derivative security with terminal payoff $\vartheta \in \mathcal{F}_T$ is called \textit{hedger's arbitrage-free} if for all $\gamma \in \mathbb{R}$, buying $\gamma$ securities for $\gamma P$ and hedging in the market with an admissible strategy and zero initial capital, does not create hedger's arbitrage.
\end{definition}

Before giving a characterization of hedger's arbitrage-free valuations, we analyze the dynamics of the wealth process. We will rewrite it under the valuation measure $\Qxx$ for notational simplicity. Using the condition~\eqref{eq:selff}, we obtain from Eq.~\eqref{eq:wealth} that
  \begin{align}\label{eq:vtlast}
    \nonumber dV_t &= \bigl(r_f \xi_t^f B_t^{r_f} + (r_D - r_r) \xi_t S_t + r_D \xi_t^I P_t^I + r_D \xi_t^C P_t^C - r_c \psi_t^c B_t^{r_c} \bigr) \, dt  \\
    \nonumber & \phantom{=}+ \xi_t \sigma S_t \, dW_t^{\Qxx}  - {\xi_t^I} P_{t-}^I \, d\varpi_t^{I,\Qxx}  - {\xi_t^C} P_{t-}^C  \, d\varpi_t^{C,\Qxx}\\
    \nonumber &= \Bigl( \rfp \bigl(\xi_t^f B_t^{r_f}\bigr)^+ -\rfm \bigl(\xi_t^f B_t^{r_f}\bigr)^- +  (r_D - \rrm) \bigl(\xi_t S_t\bigr)^+ - (r_D - \rrp) \bigl(\xi_t S_t\bigr)^- + r_D \xi_t^I P_t^I + r_D \xi_t^C P_t^C\Bigr) \, dt \\
    \nonumber & \phantom{=}-\Bigl(\rcm \bigl( \psi_t^c B_t^{r_c}\bigr)^+ -  \rcp \bigl(\psi_t^c B_t^{r_c}\bigr)^-  \Bigr) \, dt  + \xi_t \sigma S_t \, dW_t^{\Qxx} - \xi_t^I P_{t-}^I \, d\varpi_t^{I,\Qxx}  - \xi_t^C P_{t-}^C  \, d\varpi_t^{C,\Qxx} \\
    \nonumber &= \Bigl( \rfp \bigl(\xi_t^f B_t^{r_f}\bigr)^+ -\rfm \bigl(\xi_t^f B_t^{r_f}\bigr)^- +  (r_D - \rrm) \bigl(\xi_t S_t\bigr)^+ - (r_D - \rrp) \bigl(\xi_t S_t\bigr)^- + r_D \xi_t^I P_t^I + r_D \xi_t^C P_t^C\Bigr) \, dt \\
     & \phantom{=} + \Bigl(\rcp \bigl(C_t\bigr)^+ - \rcm \bigl(C_t\bigr)^-  \Bigr) \, dt  + \xi_t \sigma S_t \, dW_t^{\Qxx}  - \xi_t^I P_{t-}^I   \, d\varpi_t^{I,\Qxx}  - \xi_t^C P_{t-}^C d\varpi_t^{C,\Qxx}.
  \end{align}
Setting
  \begin{equation}\label{eq:Zetas}
    Z_t = \xi_t \sigma S_t,\qquad Z^I_t = -\xi_t^I P_{t-}^I,\qquad Z_t^C = -\xi_t^C P_{t-}^C,
  \end{equation}
and using again the condition~\eqref{eq:selff} and Eq.~\eqref{eq:wealth}, we obtain
  \begin{align}
    \xi_t^f B_t^{r_f} &= V_t - \xi_t^I P^I_t - \xi_t^C P^C_t + \psi_t^{c} B_t^{r_c} \nonumber \\
    &= V_t- \xi_t^I P^I_t - \xi_t^C P^C_t - C_t.
  \label{eq:funding}
  \end{align}
Then the dynamics in Eq.~\eqref{eq:vtlast} reads as
  \begin{align}\label{eq:vtlast2}
    \nonumber dV_t &= \Bigl(\rfp \bigl(V_t + Z_t^I + Z_t^C - C_t\bigr)^+ -\rfm \bigl(V_t + Z_t^I + Z_t^C - C_t\bigr)^-\\
    \nonumber & \phantom{=} +  (r_D - \rrm) \frac{1}{\sigma}\bigl(Z_t\bigr)^+ -  (r_D - \rrp) \frac{1}{\sigma}\bigl(Z_t\bigr)^- - r_D Z_t^I - r_D Z_t^C   +  \rcp \bigl(C_t\bigr)^+ - \rcm \bigl(C_t\bigr)^- \Bigr)  dt \\
    & \phantom{=} + Z_t\, dW_t^{\Qxx} + Z_t^I \, d\varpi_t^{I,{ \Qxx}}  + Z_t^C \, d\varpi_t^{C,\Qxx}.
  \end{align}
We next define {the drivers}
  \begin{align}
    f^+\bigl(t,v,z,z^I,z^C; \hat{V}\bigr) &:= -\Bigl(\rfp \bigl(v+ z^I + z^C- \alpha \hat{V}_t\bigr)^+ -\rfm \bigl(v + z^I + z^C - \alpha \hat{V}_t\bigr)^- \nonumber\\ & \phantom{=:}+  (r_D - \rrm) \frac{1}{\sigma}z^+  -  (r_D - \rrp) \frac{1}{\sigma}z^- - r_D z^I - r_D z^C \nonumber\\
    & \phantom{=:} + \rcp \bigl(\alpha \hat{V}_t\bigr)^+ - \rcm\bigl(\alpha \hat{V}_t\bigr)^-  \Bigr)\label{eq:f+}\\
    f^-\bigl(t,v,z,z^I,z^C; \hat{V}\bigr) &:= - f^+\bigl(t,-v,-z,-z^I,-z^C; -\hat{V}\bigr)\label{eq:f-},
  \end{align}
which depend on the market valuation process $\bigl(\hat{V}_t\bigr)_{{t \geq 0}}$ (via the collateral $C$) and we omit indicators as we are interested in hedging only up to default. In particular $f^\pm \, : \, \Omega \times [0,T] \times \R^4$, $(\omega, t,v,z,z^I,z^C) \mapsto f^\pm\bigl(t,v,z,z^I,z^C; \hat{V}\bigr)$ are drivers of BSDEs admitting unique solutions {as implied by Corollary \ref{cor:BSDE-V-red}.} Moreover, define $V^{+,\gamma}$ and $V^{-,\gamma}$ as solutions of the BSDEs
  \begin{align}\label{eq:BSDE-sell}
 \nonumber -dV_t^{+,\gamma} &= f^+\bigl(t,V_t^{+,\gamma},Z_t^{+,\gamma},Z_t^{I,+,\gamma},Z_t^{C,+,\gamma}; \hat{V}\bigr) \, dt - Z^{+,\gamma}_t\, dW_t^{\Qxx} - Z_t^{I,+,\gamma} \, d\varpi_t^{I,\Qxx}  - Z_t^{C,+,\gamma} \, d\varpi_t^{C,\Qxx},\\
    V_\tau^{+,\gamma} & = \gamma \Bigl({\theta_I(\hat{V}_\tau)\ind_{\{\tau_I < \tau_C \wedge T\}} + \theta_C(\hat{V}_\tau)\ind_{\{\tau_C < \tau_I \wedge T\}}} + \vartheta \ind_{\{\tau = T\}}\Bigr).
  \end{align}
and
  \begin{align}\label{eq:BSDE-buy}
\nonumber -dV_t^{-,\gamma} &= f^-\bigl(t,V_t^{-\gamma},Z_t^{-,\gamma},Z_t^{I,-,\gamma},Z_t^{C,-,\gamma}; \hat{V}\bigr) \, dt - Z^{-,\gamma}_t\, dW_t^{\Qxx} - Z_t^{I,-,\gamma} \, d\varpi_t^{I,\Qxx}  - Z_t^{C,-,\gamma} \, d\varpi_t^{C,\Qxx},\\
    V_\tau^{-,\gamma} & = \gamma \Bigl({\theta_I(\hat{V}_\tau)\ind_{\{\tau_I < \tau_C \wedge T\}} + \theta_C(\hat{V}_\tau)\ind_{\{\tau_C < \tau_I \wedge T\}}} + \vartheta \ind_{\{\tau = T\}}\Bigr).
  \end{align}
We note that {$V^{+,\gamma}$} describes the wealth process when replicating the claim $\gamma \vartheta$ for $\gamma >0$ (hence hedging the position after selling $\gamma$ securities with terminal payoff $\vartheta$) with zero initial capital.
On the other hand, $\bigl(-V_t^{-,\gamma}\bigr)$ describes the wealth process when replicating the claim $-\gamma \vartheta$, $\gamma>0$ (hence hedging the position after buying $\gamma$ securities with terminal payoff $\vartheta$) with zero initial capital. {Notice that by positive homogeneity of the drivers $f^+$ and $f^-$ of the BSDEs \eqref{eq:BSDE-sell} and \eqref{eq:BSDE-buy}, it is enough to consider the cases when $\gamma =1$.} To ease the notation, we set $V_t^{+,1} = V_t^+$ and $V_t^{-,1} = V_t^-$. We also note that the two BSDEs are intrinsically related: {$(V^-,Z^-,Z^{I,-},Z^{C,-})$} is a solution to the data $\bigl(f^-,{\theta_I(\hat{V}_\tau)\ind_{\{\tau_I < \tau_C \wedge T\}} + \theta_C(\hat{V}_\tau)\ind_{\{\tau_C < \tau_I \wedge T\}}}, \vartheta\bigr)$ if and only if {$(-V^-,-Z^-,-Z^{I,-},-Z^{C,-})$} is a solution to the data $\bigl(f^+, {\theta_I(-\hat{V}_\tau)\ind_{\{\tau_I < \tau_C \wedge T\}} + \theta_C(-\hat{V}_\tau)\ind_{\{\tau_C < \tau_I \wedge T\}}}, -\vartheta\bigr)$.
	
Our goal is to compute the \textit{total valuation adjustment} $\XVA$ that needs to be added to the Black-Scholes price of the claim to get the actual valuation. As we have seen, the situation is asymmetric for sell- and buy-valuations, so we will have to define it both from the seller's and buyer's viewpoint.
	\begin{definition}
		The seller's $\XVA$ is the $\mathbb{G}$-adapted stochastic process $(\XVA)_{t\geq 0}$ defined as
		\begin{align}
		\XVA_t^{\sell} := V^+_t - \hat{V}_t
		\label{eq:XVA-sell}
		\end{align}
		while the buyer's $\XVA$ is defined as
		\begin{align}
		\XVA_t^{\buy} := V^-_t - \hat{V}_t.
		\label{eq:XVA-buy}
		\end{align}
	\end{definition}
	
$\XVA^{\sell}$ quantifies the total costs (including collateral, funding, and counterparty risk related costs) incurred by the trader to hedge a long position in the option, whereas $\XVA^{\buy}$ quantifies the total costs incurred when hedging a short position. As we will see in Section \ref{sec:pitdef}, these two $\XVA$s agree only if the drivers of the BSDEs are linear.
Noting that the agent's valuation process $\hat{V}$ satisfies the Black-Scholes BSDE
\begin{align}
 -d\hat{V}_t & = -r_D\hat{V}_t \, dt - \hat{Z}_t \, dW_t^\Q,\nonumber \\
 \hat{V}_T & = \vartheta,
\end{align}
which is well known to admit a unique solution, we can immediately obtain BSDEs for the $XVA^{\pm}$:
	\begin{align}\label{eq:XVABSDE}
	-d\XVA_t^{\buysell} & = \tilde{f}^{\pm}\bigl(t,\XVA_t^{\buysell},\tilde{Z}_t^{\pm},\tilde{Z}_t^{I,\pm},\tilde{Z}_t^{C,\pm}; \hat{V}\bigr) \, dt \nonumber\\& \phantom{=} - \tilde{Z}_t^{\pm}\, dW_t^{\Qxx} - \tilde{Z}_t^{I,\pm} \, d\varpi_t^{I,\Qxx}  - \tilde{Z}_t^{C,\pm} \, d\varpi_t^{C,\Qxx},\nonumber
	\\
	\XVA_\tau^{\buysell} &= \tilde{\theta} _{C}(\hat V_\tau) \ind_{\{\tau_C<\tau_I \wedge T\}} + \tilde{\theta} _{I}(\hat V_\tau)\ind_{\{\tau_I<\tau_C \wedge T \}},
	\end{align}
	with
	\begin{align}
	\tilde{Z}_t^{\pm} & := Z_t^{\pm} - \hat{Z}_t, \qquad \tilde{Z}_t^{I, \pm} = Z_t^{I, \pm}, \qquad \tilde{Z}_t^{C, \pm} = Z_t^{C, \pm}, \nonumber \\  \tilde{\theta} _{C}(\hat v)  & := L_C ((1-\alpha) \hat{v} )^{-}, \qquad \tilde{\theta} _{I}(\hat v)  := - L_I  ((1-\alpha)\hat{v} )^{+},
	\label{eq:hats}
	\end{align}
	and	
	\begin{align}
	\tilde{f}^+\bigl(t,xva,\tilde{z},\tilde{z}^I,\tilde{z}^C; \hat{V}\bigr) &{:}= -\Bigl(\rfp \bigl(xva+ \tilde{z}^I + \tilde{z}^C +(1- \alpha) \hat{V}_t\bigr)^+ -\rfm \bigl(xva + \tilde{z}^I + \tilde{z}^C +(1- \alpha)\hat{V}_t\bigr)^- \nonumber\\ & \phantom{=:}+  (r_D - \rrm) \frac{1}{\sigma}\tilde{z}^+  -  (r_D - \rrp) \frac{1}{\sigma}\tilde{z}^- - r_D \tilde{z}^I - r_D \tilde{z}^C \nonumber\\
	& \phantom{=:} + \rcp \bigl(\alpha \hat{V}_t\bigr)^+ - \rcm\bigl(\alpha \hat{V}_t\bigr)^-  \Bigr) + r_D\hat{V}_t,\label{eq:tilde-f+}\\
	\tilde{f}^-\bigl(t,xva,\tilde{z},\tilde{z}^I, \tilde{z}^C; \hat{V}\bigr) &{:}= - {\tilde f} ^+\bigl(t,-xva,-\tilde{z},-\tilde{z}^I,-\tilde{z}^C; -\hat{V}\bigr).\label{eq:tilde-f-}
	\end{align}
	Note that comparing \eqref{eq:f+} with \eqref{eq:tilde-f+} and  \eqref{eq:f-} with \eqref{eq:tilde-f-} we see that
	\begin{align}
	\tilde{f}^{\pm} (t,v,\tilde{z},\tilde{z}^I,\tilde{z}^C; \hat{v}\bigr)= f^{\pm} (t,v+\hat{v},\tilde{z},\tilde{z}^I,\tilde{z}^C; \hat{v}\bigr) + r_D \hat{v}.
	\label{eq:f-compare}
	\end{align}
Next, we can apply the reduction technique developed by \cite{CrepeyRed} to find a continuous BSDE describing the $\XVA$ prior to default.
		
		\begin{theorem}\label{thm:reduction}
			The BSDEs
			\begin{align}\label{eq:reduced}
			-d\check{U}_t^{\buysell} & = \check{g}^{\pm}\bigl(t,\check{U}_t^{\buysell},\check{Z}_t^{\pm}; \hat{V}\bigr) \, dt - \check{Z}^{\pm}_t\, dW_t^{\Qxx}\nonumber  \\
			\check{U}_T^{\buysell} &= 0
			\end{align}
			in the filtration $\mathbb{F}$ with
			\begin{align}
			\check{g}^+\bigl(t,\check{u},\check{z}; \hat{V}\bigr) & := h_I^\Qxx\bigl(\tilde{\theta} _{I}(\hat{V}_t)-\check{u}\bigr) + h_C^\Qxx\bigl(\tilde{\theta} _{C}(\hat{V}_t)-\check{u}\bigr) + \tilde{f}^+\bigl(t, \check{u}, \check{z}, \tilde{\theta} _{I}( \hat{V}_t)-\check{u}, \tilde{\theta} _{C}( \hat{V}_t) -\check{u};\hat{V}\bigr)\label{eq:g+}\\
			\check{g}^-\bigl(t,\check{u},\check{z}; \hat{V}\bigr) &:= - \check{g}^+\bigl(t,-\check{u},-\check{z}; -\hat{V}\bigr),
			\label{eq:g-}
			\end{align}
			admit unique solutions {$\bigl(\check{U}^{\buysell}, \check{Z}^{\pm}\bigr)$}, which are related to the unique solutions $\bigl(\XVA^{\buysell}, \tilde{Z}^{\pm}, \tilde{Z}^{I,\pm}, \tilde{Z}^{C,\pm}\bigr)$ of the BSDEs~\eqref{eq:XVABSDE} via the following relations. On the one hand
			\begin{eqnarray}\label{eq:reduced_identity1}
			\check{U}_t^{\buysell} & := & \XVA_{t\wedge \tau-}^{\buysell},\\
			\check{Z}_t^{\pm} & := & \tilde{Z}_t^{\pm} \ind_{\{t<\tau\}},
			\end{eqnarray}
			are solutions to the reduced BSDE \eqref{eq:reduced}, and on the other hand the solutions to the full $\XVA$ BSDEs given by Eq.~\eqref{eq:XVABSDE} are given by
			\begin{eqnarray}\label{eq:reduced_identity2}
			\nonumber \XVA_t^{\buysell}  &:=& \check{U}_t^{\buysell} \ind_{\{t<\tau\}} + \Bigl( \tilde{\theta} _{C}(\hat{V}_{\tau_C}) \ind_{\{\tau_C < \tau_I\wedge T\} }  + {\tilde{\theta} _{I}(\hat{V}_{\tau_I})} \ind_{\{\tau_I < \tau_C\wedge T\}} \Bigr) \ind_{\{t \geq \tau\}}, \\
			\nonumber \tilde{Z}_t^{\pm}  & := & \check{Z}_t^{\pm}\ind_{\{t < \tau\}}, \nonumber \\
			\nonumber \tilde{Z}_t^{I,\pm} & := & \Bigl(\tilde{\theta} _I(\hat{V}_t) - \check{U}_t^{\buysell}\Bigr)\ind_{\{t \leq \tau\}}, \nonumber\\
			\tilde{Z}_t^{C,\pm} & := & \Bigl(\tilde{\theta} _C(\hat{V}_t) - \check{U}_t^{\buysell}\Bigr)\ind_{\{t \leq \tau\}}.
			\end{eqnarray}\end{theorem}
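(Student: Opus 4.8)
The plan is to follow the reduction methodology of \cite{CrepeyRed} in three stages: well-posedness of the reduced Brownian BSDE \eqref{eq:reduced}, verification that the lifted processes in \eqref{eq:reduced_identity2} solve the full $\XVA$ BSDE \eqref{eq:XVABSDE}, and the converse projection of any solution of \eqref{eq:XVABSDE} onto \eqref{eq:reduced}. By the positive homogeneity already noted it suffices to treat $\gamma=1$, and by \eqref{eq:g-}, \eqref{eq:tilde-f-} and the sign change $(v,z)\mapsto(-v,-z)$, $\hat V\mapsto-\hat V$ it is enough to argue the ``$+$'' statements, the ``$-$'' ones following verbatim.

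First I would check that $\check g^+(t,\cdot,\cdot;\hat V)$ is uniformly Lipschitz in $(\check u,\check z)$. The driver $\tilde f^+$ in \eqref{eq:tilde-f+} is Lipschitz in $(xva,\tilde z,\tilde z^I,\tilde z^C)$ with a constant controlled by $\rfp\vee\rfm\vee|r_D-\rrp|\vee|r_D-\rrm|$, by $1/\sigma$, and by $r_D$; substituting the affine-in-$\check u$ expressions $\tilde z^I=\tilde\theta_I(\hat V_t)-\check u$ and $\tilde z^C=\tilde\theta_C(\hat V_t)-\check u$ preserves this, and the compensator terms $h_I^\Q(\tilde\theta_I(\hat V_t)-\check u)+h_C^\Q(\tilde\theta_C(\hat V_t)-\check u)$ are affine in $\check u$ with $h_i^\Q=\mu_i-r_D$ constant. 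Since $\hat V_t=\hat V(t,S_t)$ is the Black--Scholes value of $\vartheta\in L^2(\Omega,\mathcal F_T,\Q)$ and hence lies in the usual space $\mathcal S^2$ on $[0,T]$, the random part $t\mapsto\check g^+(t,0,0;\hat V)$ of the driver is square-integrable and the terminal datum is $0$; the Pardoux--Peng theorem then yields a unique solution $(\check U^+,\check Z^+)\in\mathcal S^2\times\mathcal H^2$ in the Brownian filtration $\mathbb F$.

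Next, given this unique $(\check U^+,\check Z^+)$, I would define $\XVA^+,\tilde Z^+,\tilde Z^{I,+},\tilde Z^{C,+}$ by \eqref{eq:reduced_identity2} and apply It\^o's formula to $\XVA_t^+=\check U_t^+\ind_{\{t<\tau\}}+(\text{default payoff})\ind_{\{t\ge\tau\}}$, using $\varpi^{i,\Q}_t=H^i_t-\int_0^t(1-H^i_u)h_i^\Q\,du$ and the $(H)$-hypothesis (so $W^\Q$ is still a $\mathbb G$-Brownian motion). On $\{t<\tau\}$ the continuous part reproduces $-\check g^+\,dt+\check Z^+\,dW^\Q$, and the definition \eqref{eq:g+} is arranged precisely so that, after inserting $\tilde Z^{I,+}_t=\tilde\theta_I(\hat V_t)-\check U_t^+$ and $\tilde Z^{C,+}_t=\tilde\theta_C(\hat V_t)-\check U_t^+$, the compensator drifts $h_I^\Q\tilde Z^{I,+}_t+h_C^\Q\tilde Z^{C,+}_t$ coming from the jump integrals convert $\check g^+$ into $\tilde f^+$; hence the drift of $\XVA^+$ on $\{t<\tau\}$ is exactly $-\tilde f^+(t,\XVA_t^+,\tilde Z_t^+,\tilde Z_t^{I,+},\tilde Z_t^{C,+};\hat V)$. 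At $\tau=\tau_I<\tau_C\wedge T$ the jump $\tilde Z^{I,+}\,dH^I$ carries $\XVA$ from $\check U_{\tau-}^+$ to $\tilde\theta_I(\hat V_{\tau_I})$, matching the terminal value in \eqref{eq:XVABSDE}; analogously on $\{\tau_C<\tau_I\wedge T\}$; and on $\{\tau=T\}$ one has $\check U_T^+=0=\XVA_T^+$. A short integrability check ($\tilde Z^{\pm},\tilde Z^{I,\pm},\tilde Z^{C,\pm}$ are translates/bounded multiples of $\check Z^+,\hat V,\check U^+$ on the finite horizon, hence in $\mathcal H^2$) then shows $(\XVA^+,\tilde Z^+,\tilde Z^{I,+},\tilde Z^{C,+})$ solves \eqref{eq:XVABSDE}, which has a Lipschitz driver and finite-activity jump martingales, so its solution is unique. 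For the converse, starting from that unique solution I would set $\check U_t^+:=\XVA_{t\wedge\tau-}^+$, $\check Z_t^+:=\tilde Z_t^+\ind_{\{t<\tau\}}$; using the Cox construction of $\tau_I,\tau_C$ and the hazard computations of Section \ref{sec:valuation}, $\check U^+$ admits an $\mathbb F$-adapted version, stopping \eqref{eq:XVABSDE} at $\tau-$ removes the two default martingales, and their compensators feed exactly the extra drift turning $\tilde f^+$ back into $\check g^+$; thus $(\check U^+,\check Z^+)$ solves \eqref{eq:reduced}, and by the Step~1 uniqueness it is \emph{the} reduced solution, giving both identities. Passing to ``$-$'' via \eqref{eq:g-}--\eqref{eq:tilde-f-} and to general $\gamma$ by homogeneity completes the proof.

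The routine Lipschitz/Pardoux--Peng part is standard; the main obstacle is the bookkeeping that makes the reduction rigorous, namely (i) verifying that the algebraic identity linking $\check g^\pm$ with $\tilde f^\pm$ combines correctly with the compensators $(1-H^i)h_i^\Q\,dt$ so that the drift of the lifted process is exactly $-\tilde f^\pm$ before $\tau$ while its jumps at $\tau$ land on the prescribed closeout payoffs, and (ii) the measurability argument that the pre-default $\XVA$ has an $\mathbb F$-adapted reduction and that the $\mathbb G$-integrand against $dW^\Q$ coincides with the $\mathbb F$-integrand $\check Z^\pm$ on $\{t<\tau\}$ — this is precisely where the $(H)$-hypothesis and the Cox structure of the default times (hence the applicability of \cite{CrepeyRed}) are essential, together with the $\mathcal H^2$-membership of the constructed control processes, which rests on $\hat V\in\mathcal S^2$ and the finite horizon.
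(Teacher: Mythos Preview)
Your proposal is correct and follows essentially the same route as the paper: both invoke the reduction methodology of \cite{CrepeyRed}, first establishing well-posedness of the reduced $\mathbb{F}$-BSDE via the Lipschitz property of $\check g^\pm$, then using the $(H)$-hypothesis and the Cox structure of the default times to pass between the $\mathbb{G}$- and $\mathbb{F}$-equations. The only difference is one of presentation: the paper's proof simply cites \cite[Theorem~4.3]{CrepeyRed} and verifies its hypotheses (conditions (A), (B), (J) in their notation), whereas you unpack the content of that theorem by carrying out the It\^o/compensator bookkeeping explicitly---your points (i) and (ii) are precisely the substance of what that cited result packages.
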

		
		\begin{proof}
			As the equations~\eqref{eq:reduced} are continuous BSDEs with Lipschitz driver, existence and uniqueness are classical (cf., e.g., \cite[Theorem 2.1.]{ElKaroui}). The equivalence of the full $\mathbb{G}$-BSDEs and the reduced $\mathbb{F}$-BSDEs follows from
			\cite[Theorem 4.3]{CrepeyRed}: In our case we do not change the probability measure and thus condition (A) is satisfied as the $(H)$-hypothesis holds (note that condition (B) also holds in our context). Condition (J) is trivially satisfied as the terminal condition does not depend on the auxiliary processes $\tilde{Z}$, $\tilde{Z}^C$ and $\tilde{Z}^I$. Finally, by the martingale representation theorems with respect to $\mathbb{F}$ and $\mathbb{G}$ (see \cite[Section 5.2]{bielecki01}), the martingale problems considered in \cite{CrepeyRed} and the actual BSDEs considered in the present article have the same unique solutions.
		\end{proof}

The uniqueness of the solutions to the original BSDEs for $V^\pm$ as well as to their projected versions in the $\mathbb{F}$-filtration follows from the definition of $\XVA$:
		
		\begin{corollary}\label{cor:BSDE-V-red}
			Both the BSDE \eqref{eq:BSDE-sell} and \eqref{eq:BSDE-buy} admit a unique solution. These solutions are related to the unique solutions {$\bigl(\bar{U}^\pm, \bar{Z}^{\pm}\bigr)$} of the BSDEs
						\begin{align}\label{eq:reduced-V}
						-d\bar{U}_t^\pm & = g^{\pm}\bigl(t,\bar{U}_t^\pm,\bar{Z}_t^{\pm}; \hat{V}\bigr) \, dt - \bar{Z}^{\pm}_t\, dW_t^{\Qxx}\nonumber  \\
						\bar{U}_T^\pm &= \vartheta
						\end{align}
						in the filtration $\mathbb{F}$ with
						\begin{align*}
						g^+\bigl(t,\bar{u},\bar{z}; \hat{V}\bigr) & := h_I^\Qxx\bigl(\theta_{I}(\hat{V}_t)-\bar{u}\bigr) + h_C^\Qxx\bigl(\theta_{C}(\hat{V}_t)-\bar{u}\bigr) + f^+\bigl(t, \bar{u}, \bar{z}, \theta_{I}( \hat{V}_t)-\bar{u}, \theta_{C}( \hat{V}_t) -\bar{u};\hat{V}\bigr)\\
						g^-\bigl(t,\bar{u},\bar{z}; \hat{V}\bigr) &:= - g^+\bigl(t,-\bar{u},-\bar{z}; -\hat{V}\bigr),
						\end{align*}
via the following relations. On the one hand
					\begin{eqnarray*}
						\bar{U}_t^\pm & := & V_{t\wedge \tau-}^\pm,\\
						\bar{Z}_t^{\pm} & := & Z_t^{\pm} \ind_{\{t<\tau\}},
					\end{eqnarray*}
					are solutions to the reduced BSDEs \eqref{eq:reduced-V}, while on the other hand the solutions to the full BSDEs  given by equations~\eqref{eq:BSDE-sell} and \eqref{eq:BSDE-buy} are given by
					\begin{eqnarray*}
						\nonumber V_t^\pm  &:=& \bar{U}_t^{\buysell} \ind_{\{t<\tau\}} + \Bigl( \theta_{C}(\hat{V}_{\tau_C}) \ind_{\{\tau_C < \tau_I\wedge T\} }  + {\theta_{I}(\hat{V}_{\tau_I})} \ind_{\{\tau_I < \tau_C\wedge T\}} \Bigr) \ind_{\{t \geq \tau\}}, \\
						\nonumber Z_t^{\pm}  & := & \bar{Z}_t^{\pm}\ind_{\{t<\tau\}}, \nonumber \\
						\nonumber Z_t^{I,\pm} & := & \Bigl(\theta_I(\hat{V}_t) - \bar{U}_t^\pm\Bigr)\ind_{\{t\leq \tau\}}, \nonumber\\
						Z_t^{C,\pm} & := & \Bigl(\theta_C(\hat{V}_t) - \bar{U}_t^\pm\Bigr)\ind_{\{t\leq\tau\}}.
					\end{eqnarray*}
		\end{corollary}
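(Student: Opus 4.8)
The plan is to obtain the whole statement from Theorem~\ref{thm:reduction} via the affine change of variables $V^\pm = \XVA^\pm + \hat V$, where $(\hat V,\hat Z)$ denotes the unique (classical) solution of the Black--Scholes BSDE $-d\hat V_t = -r_D\hat V_t\,dt - \hat Z_t\,dW_t^{\Q}$, $\hat V_T=\vartheta$. So nothing new about BSDE existence/uniqueness has to be proved: everything is obtained by adding the solution of a linear BSDE to the results already in hand.

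First I would isolate the two elementary identities that drive the argument. From \eqref{eq:thetaIC} and \eqref{eq:hats} one has $\theta_I(v)=v+\tilde\theta_I(v)$ and $\theta_C(v)=v+\tilde\theta_C(v)$, and \eqref{eq:f-compare} records $\tilde f^{\pm}(t,v,\tilde z,\tilde z^I,\tilde z^C;\hat v)=f^{\pm}(t,v+\hat v,\tilde z,\tilde z^I,\tilde z^C;\hat v)+r_D\hat v$. Subtracting the Black--Scholes BSDE from \eqref{eq:BSDE-sell} (respectively \eqref{eq:BSDE-buy}) and invoking these identities --- using $\hat V_\tau=\vartheta$ on $\{\tau=T\}$ for the terminal data --- shows that $(V^\pm,Z^\pm,Z^{I,\pm},Z^{C,\pm})$ solves \eqref{eq:BSDE-sell}/\eqref{eq:BSDE-buy} if and only if $(V^\pm-\hat V,\ Z^\pm-\hat Z,\ Z^{I,\pm},\ Z^{C,\pm})$ solves the $\XVA$-BSDE \eqref{eq:XVABSDE}. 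Since $(V,Z,Z^I,Z^C)\mapsto(V-\hat V,Z-\hat Z,Z^I,Z^C)$ is a bijection of the relevant solution spaces and Theorem~\ref{thm:reduction} provides a unique solution of \eqref{eq:XVABSDE}, the BSDEs \eqref{eq:BSDE-sell} and \eqref{eq:BSDE-buy} each admit a unique solution.

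It then remains to push the reduction through the same shift. Using $\tilde\theta_{I/C}(\hat V_t)=\theta_{I/C}(\hat V_t)-\hat V_t$ together with \eqref{eq:f-compare}, a direct computation from \eqref{eq:g+}--\eqref{eq:g-} gives $\check g^{\pm}(t,\check u,\check z;\hat V)=g^{\pm}(t,\check u+\hat V_t,\check z;\hat V)+r_D\hat V_t$. Combining this with the Black--Scholes BSDE as above yields that $(\bar U^\pm,\bar Z^\pm)$ solves the reduced $\mathbb F$-BSDE \eqref{eq:reduced-V} if and only if $(\bar U^\pm-\hat V,\ \bar Z^\pm-\hat Z)$ solves \eqref{eq:reduced}; in particular \eqref{eq:reduced-V} has a unique solution (alternatively one checks directly that $g^\pm$ is Lipschitz in $(\bar u,\bar z)$ and invokes classical theory, e.g.\ \cite{ElKaroui}). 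Finally the explicit representations follow term by term from those of Theorem~\ref{thm:reduction}: adding $\hat V$ to $\check U^\pm_t=\XVA^\pm_{t\wedge\tau-}$ gives $\bar U^\pm_t=V^\pm_{t\wedge\tau-}$; the identity $\theta_{I/C}(\hat V_t)-\bar U^\pm_t=\tilde\theta_{I/C}(\hat V_t)-\check U^\pm_t$ converts the formulas for $\tilde Z^{I,\pm},\tilde Z^{C,\pm}$ into those for $Z^{I,\pm},Z^{C,\pm}$; and $V^\pm=\XVA^\pm+\hat V$ (with $\hat V_\tau=\vartheta$ on $\{\tau=T\}$) turns the post-default formula for $\XVA^\pm$ into the stated one for $V^\pm$, with $\theta_{I/C}$ in place of $\tilde\theta_{I/C}$.

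The only genuinely delicate point is bookkeeping: one must verify that the shift by $(\hat V,\hat Z)$ is simultaneously consistent with the full $\mathbb G$-BSDEs (terminal time $\tau$, compensated default martingales $\varpi^{I,\Q},\varpi^{C,\Q}$) and with the reduced $\mathbb F$-BSDEs (terminal time $T$, continuous driver), i.e.\ that the commuting square relating $V^\pm$, $\XVA^\pm$, $\bar U^\pm$ and $\check U^\pm$ closes up. The sensitive entries are the $\hat Z$-correction in the Brownian coefficient and the $z$-arguments of the drivers entering the repo-financing term, which is exactly what the two structural identities \eqref{eq:f-compare} and $\theta_{I/C}=\mathrm{id}+\tilde\theta_{I/C}$ are designed to handle; beyond them the computation is routine.
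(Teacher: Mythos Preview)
Your approach is exactly the one the paper intends: the paper offers no proof beyond the sentence preceding the corollary, ``The uniqueness of the solutions to the original BSDEs for $V^\pm$ as well as to their projected versions in the $\mathbb{F}$-filtration follows from the definition of $\XVA$,'' i.e., from the affine shift $V^\pm=\XVA^\pm+\hat V$ applied to Theorem~\ref{thm:reduction}. You have spelled this out carefully, isolating the two structural identities $\theta_{I/C}=\mathrm{id}+\tilde\theta_{I/C}$ and \eqref{eq:f-compare} that make the shift commute with both the full $\mathbb{G}$-BSDEs and the reduced $\mathbb{F}$-BSDEs, and you correctly flag the one delicate bookkeeping point (the $\hat Z$-correction in the $z$-argument of the repo term).
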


\begin{remark}\label{remark:hedge}
We discuss how the replicating strategies of the XVA process can be obtained from the above given results. We use the {tilde symbol $(~~\widetilde{}~~)$} to denote the strategies replicating the XVA process (e.g., $\tilde{\xi}, \tilde{\xi^{I}}, \tilde{\xi^{C}}$ denote, respectively, the number of stock, trader and counterparty bond shares used to replicate $\XVA$). This enables us to distinguish them from the strategies used to replicate the price process $V_t$ of the claim. From the martingale representation theorem and the dynamics of the stock price (see also the discussion at the end of the proof of Proposition 5.3), it follows that
  \begin{align*}
\tilde{\xi}^{\pm}_t &= \frac{\tilde{Z}_t^{\pm}}{\sigma S_t} \ind_{\{t<\tau\}},\quad {\tilde \psi_t^{r,\pm}}  = - \frac{\tilde{\xi}_t^{\pm} S_t}{B_t^{r_r}} \ind_{\{t<\tau\}}.
\end{align*}
From Theorem \ref{thm:reduction} along with equations \eqref{eq:Zetas} and \eqref{eq:hats}, it follows that
\begin{align*}
\tilde{\xi}_{t}^{I,\pm} &= - \frac{\tilde Z_t^{I,\pm}}{P_{t-}^I}  = -\frac{ - L_I  ((1-\alpha)\hat{V}_t )^{+} - \check{U}_t^{\buysell}}{P_{t-}^I} \ind_{\{t {\leq} \tau\}},\\
\tilde{\xi}_{t}^{C,\pm} &= - \frac{\tilde Z_t^{C,\pm}}{P_{t-}^C}  = -\frac{ L_C ((1-\alpha) \hat{V}_t)^{-}- \check{U}_t^{\buysell}}{P_{t-}^C}\ind_{\{t {\leq }\tau\}}.
\end{align*}
From equations~\eqref{eq:rulecoll} and \eqref{eq:collrel} it follows that
\begin{align}
{\tilde\psi}_t^{c} =  - \frac{\alpha \hat{V}_t}{B_t^{r_c}}  \ind_{\{\tau > t\}}.
\end{align}
Finally, from Eq.~\eqref{eq:funding}, {(replacing $V_t^\pm$ with $XVA_t^{\buysell} + \hat{V}_t$)} we obtain
\begin{align}
\tilde{\xi}_t^{f,\pm} &=  \frac{V_t^{\pm} - \hat V_t- \xi_t^I P^I_{t-} - \xi_t^C P^C_{t-}  {- \alpha} \hat{V_t}}{B_t^{r_f}}  \ind_{\{\tau > t\}} \nonumber\\
&=\frac{ L_C ((1-\alpha) \hat{V}_t)^{-}- L_I  ((1-\alpha)\hat{V}_t )^{+} + \check{U}_t^{\buysell} - \alpha \hat V_t}{B_t^{r_f}}  \ind_{\{\tau > t\}},
\label{eq:funding1}
\end{align}
where in the last equation we have used the definitions of $\XVA_t^{\buysell}$ given in equations~\eqref{eq:XVA-buy} and~\eqref{eq:XVA-sell} along with the fact that $\XVA_t^{\buysell} = {\check{U}_t^{\buysell}}$ on the set $\{\tau > t\}$.
\end{remark}
Next, we analyze under which conditions the valuations obtained by solving the BSDEs are arbitrage free.

	\begin{theorem}\label{thm:arb-price}
		Let $\Phi$ be a function of polynomial growth. Assume that
		\begin{equation}\label{eq:A14}
		\rrp \leq \rfp \leq \rrm, \qquad \rfp \leq \rfm, \qquad \rfp \vee r_D < \mu_I \wedge \mu_C,
		\end{equation}
		and
		\begin{equation}\label{eq:comp}
		\rcp \vee \rcm \leq  \rfm \leq \mu_I \wedge \mu_C.
		\end{equation}
		If $V_0^{-} \leq V_0^{+}$, where {$V^{+}$ and $V^{-}$} are the first components of the solutions to the BSDEs \eqref{eq:BSDE-sell} and \eqref{eq:BSDE-buy}, then there exist valuations $\pi^{sup}$ and $\pi^{inf}$, $\pi^{inf} \leq \pi^{sup}$, for the option $\vartheta = \Phi(S_T)$ such that all values in the closed interval $[\pi^{inf}, \pi^{sup}]$ are free of hedger's arbitrage. All valuations strictly bigger than $\pi^{sup}$ and strictly smaller than $\pi^{inf}$ provide then arbitrage opportunities for the hedger. In particular, we have that $\pi^{sup} = V_0^{+}$ and $\pi^{inf} = V_0^{-}$.
	\end{theorem}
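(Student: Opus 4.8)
The claim is that the set of hedger's arbitrage-free valuations is exactly the closed interval $[\pi^{inf},\pi^{sup}]$ with the choice $\pi^{inf}:=V_0^-$ and $\pi^{sup}:=V_0^+$, the hypothesis $V_0^-\le V_0^+$ being precisely what makes this a nonempty interval. I would prove this in two halves: (i) every valuation $P\notin[V_0^-,V_0^+]$ admits hedger's arbitrage, and (ii) every $P\in[V_0^-,V_0^+]$ is hedger's arbitrage-free. Part (i) also forces $\pi^{sup}=V_0^+$ and $\pi^{inf}=V_0^-$.

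For part (i), first take $P>V_0^+$. Have the hedger sell one claim at $P$ and invest the proceeds using the \emph{same} $\mathbb F$-predictable trading positions $(\xi,\xi^I,\xi^C)$ that replicate $V^+$, the funding, repo and collateral positions being pinned down by \eqref{eq:funding}, \eqref{eq:selff}, \eqref{eq:collrel}, but now with initial capital $P$ instead of $V_0^+$. In the wealth equation \eqref{eq:vtlast2} the martingale integrands depend only on $(\xi,\xi^I,\xi^C)$, while the drift is Lipschitz in the wealth level (with Lipschitz constant governed by the $\rfp(\cdot)^+-\rfm(\cdot)^-$ term). Hence the difference of the two wealth processes solves a pathwise linear ODE and, by Gronwall, stays strictly positive; the terminal wealth therefore strictly exceeds $V_\tau^+=\theta$ $\Px$-a.s., and netting the close-out obligation $\theta$ leaves a strictly positive terminal wealth out of zero initial capital, i.e.\ a hedger's arbitrage. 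The case $P<V_0^-$ is symmetric: the hedger buys one claim at $P$, funds the purchase through the market (a market sub-portfolio with initial value $-P\ge -V_0^-$) and runs the positions underlying $-V^-$; the same Gronwall comparison gives a terminal market wealth strictly above $(-V^-)_\tau=-\theta$, which together with the long-claim close-out $\theta$ produces a strictly positive terminal wealth.

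For part (ii), fix $P\in[V_0^-,V_0^+]$ and suppose, for contradiction, that buying $\gamma\in\R$ claims at $\gamma P$ and hedging with an admissible zero-capital strategy is a hedger's arbitrage. By positive homogeneity of $f^\pm$ we may take $\gamma\in\{-1,0,1\}$. The case $\gamma=0$ contradicts Proposition \ref{thm:arb-market}, whose hypotheses are contained in \eqref{eq:A14}. For $\gamma=1$, write $W$ for the market sub-portfolio, so $W_0=-P$ and the arbitrage property gives $\theta+W_\tau\ge 0$ $\Px$-a.s.\ with $\Px[\theta+W_\tau>0]>0$, i.e.\ $W_\tau\ge -\theta$. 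Now $W$ and the process $-V^-$ both solve the wealth BSDE with driver $f^+$ and the same exogenous Black--Scholes term, with $-V^-$ having terminal value $-\theta$ and initial value $-V_0^-$. Passing, via Theorem \ref{thm:reduction} and Corollary \ref{cor:BSDE-V-red}, to the continuous, Lipschitz reduced BSDE in the filtration $\mathbb F$, the comparison theorem propagates $W_\tau\ge -\theta$ to $W_0\ge -V_0^-$, i.e.\ $P\le V_0^-$; combined with $P\ge V_0^-$ this gives $P=V_0^-$, whence $W_0=(-V^-)_0$ and the strict comparison theorem forces $W_\tau=-\theta$ $\Px$-a.s., contradicting $\Px[\theta+W_\tau>0]>0$. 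The case $\gamma=-1$ is symmetric, comparing the market sub-portfolio to $V^+$ itself and using the endpoint $P\le V_0^+$. This shows every $P\in[V_0^-,V_0^+]$ is arbitrage-free, finishing the proof with $\pi^{sup}=V_0^+$ and $\pi^{inf}=V_0^-$.

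The genuinely delicate step is part (ii), since it must preclude \emph{all} admissible strategies rather than exhibit one, so the comparison theorem for BSDEs carries the argument. Two points require care. First, admissible strategies are only bounded from below, not square-integrable, so the comparison should be run through the linearization of $f^+$ and the explicit solution of the resulting linear BSDE, checking that the Dol\'eans exponential serving as its discount factor is strictly positive; this amounts to the jump coefficients in the linearized generator exceeding $-1$, which is guaranteed by \eqref{eq:comp} --- notably $\rfm\le\mu_I\wedge\mu_C$, recalling $h_i^{\Qxx}=\mu_i-r_D$ --- together with \eqref{eq:A14}. Second, one must verify that the collateral-account contributions $r_c(C)C$ appearing in $f^\pm$ do not destroy these sign conditions, which is exactly what $\rcp\vee\rcm\le\rfm$ secures. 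Part (i), by contrast, is elementary once one observes that the wealth equation enjoys a trivial pathwise monotonicity in the initial capital.
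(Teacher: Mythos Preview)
Your approach to part~(ii) differs genuinely from the paper's. The paper argues by introducing an auxiliary terminal payoff $\vartheta'\ge\vartheta$ that the arbitraging strategy supposedly replicates, and then compares the reduced drivers $g^+(t,\bar u,\bar z;\hat V^{\vartheta'})$ and $g^+(t,\bar u,\bar z;\hat V^{\vartheta})$: the long inequality chain in the paper establishes that $g^+$ is monotone increasing in the exogenous Black--Scholes parameter $\hat V$, and this is exactly where \eqref{eq:comp} enters. Strict comparison for the $\mathbb F$-BSDEs~\eqref{eq:reduced-V} then yields $P>V_0^+$. You instead compare the actual wealth process $W$ of the candidate arbitrage strategy directly against $-V^-$ (respectively $V^+$), both satisfying the \emph{same} driver but with ordered terminal data $W_\tau\ge -\theta$; thus the monotonicity-in-$\hat V$ computation is never needed, and \eqref{eq:comp} enters only to make the Dol\'eans exponential of the linearized generator strictly positive (equivalently, the jump coefficients $>-1$, via $r_f^-\le\mu_I\wedge\mu_C$). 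Your route is more direct and sidesteps a delicate point in the paper's device --- the arbitraging portfolio in fact carries collateral based on $\hat V^\vartheta$, not $\hat V^{\vartheta'}$.

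One caution: the reduction of Theorem~\ref{thm:reduction} and Corollary~\ref{cor:BSDE-V-red} is tailored to terminal conditions of the specific close-out form $\theta_I\ind+\theta_C\ind+\vartheta\ind$ and does not apply verbatim to the arbitrary admissible wealth process $W$. The comparison is therefore cleanest run directly in $\mathbb G$ via the linearization you outline in your final paragraph, rather than after projecting to $\mathbb F$ as your main argument suggests; you should make this the primary route rather than a caveat. Part~(i) is essentially the same idea as the paper's --- the paper simply deposits the surplus $P-V_0^+$ in the funding account at $r_f^+$, which is the intuitive version of your Gronwall monotonicity.
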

	
	\begin{proof}
	First, notice that by virtue of the conditions in \eqref{eq:A14}, the underlying market model is free of hedger's arbitrage. Next, we note that the trader can perfectly hedge a short call position with terminal payoff $\vartheta = \Phi(S_T)$ using the initial capital $V_0^{+}$ {as the polynomial growth of $\Phi$ implies $\vartheta \in L^2(\Omega, \mathcal{F}_T, \Px)$}. Thus it is clear that any value $P > V_0^{+}$ is not arbitrage free, as we could just sell the option for that value, use $V_0^{+}$ to hedge the claim and deposit $P-V_0^{+}$ in the funding account. Using the same argument, we can conclude that buying an option for any value $P<V_0^{-}$ will lead to arbitrage.
	
	Second, assume by contradiction that a valuation $P  \leq V_0^{+}$ would lead to an arbitrage when selling the option. This means that starting with initial capital $P$, the trader can perfectly hedge a claim with terminal payoff $\vartheta' \in \mathcal{F}_T$, where $\vartheta' \geq \vartheta =\Phi(S_T)$ a.s. and $\Px[\vartheta'>\vartheta]>0$. As $h_j^\Qxx = \mu_j - r_D,~j\in\{I,C\}$, we have
	\begin{align*}
		& \phantom{=::} g^+(t,\bar{u}, \bar{z}; \hat{V}_t^{\vartheta'}) - g^+(t,\bar{u}, \bar{z}; \hat{V}_t^{\vartheta}) \\
		& = h_I^\Qxx \bigl( \theta_I(\hat{V}_t^{\vartheta'}) - \theta_I(\hat{V}_t^{\vartheta})\bigr) + h_C^\Qxx \bigl( \theta_C(\hat{V}_t^{\vartheta'}) - \theta_C(\hat{V}_t^{\vartheta})\bigr) \\
		& \phantom{=::}+ \bigl(\rfm-\rfp\bigr)\Bigr(\bigl(\theta_I(\hat{V}_t^{\vartheta'}) + \theta_C(\hat{V}_t^{\vartheta'}) - \bar{u} -\alpha \hat{V}_t^{\vartheta'}\bigr)^+ - \bigl(\theta_I(\hat{V}_t^{\vartheta}) + \theta_C(\hat{V}_t^{\vartheta}) - \bar{u} -\alpha \hat{V}_t^{\vartheta}\bigr)^+\Bigr)\\
		& \phantom{=::}- \rfm\Bigl(\theta_I(\hat{V}_t^{\vartheta'}) - \theta_I(\hat{V}_t^{\vartheta}) + \theta_C(\hat{V}_t^{\vartheta'}) - \theta_C(\hat{V}_t^{\vartheta})  - \alpha \bigl( \hat{V}_t^{\vartheta'} - \hat{V}_t^{\vartheta}\bigr) \Bigr)\\
		& \phantom{=::}+ r_D\Bigl(\theta_I(\hat{V}_t^{\vartheta'}) - \theta_I(\hat{V}_t^{\vartheta}) + \theta_C(\hat{V}_t^{\vartheta'}) - \theta_C(\hat{V}_t^{\vartheta})\Bigr)\\
		& \phantom{=::}+ \alpha \rcm \Bigl((\hat{V}_t^{\vartheta'})^- - (\hat{V}_t^{\vartheta})^-\Bigr) - \alpha \rcp \Bigl(\bigl(\hat{V}_t^{\vartheta'}\bigr)^+ - \bigl(\hat{V}_t^{\vartheta}\bigr)^+\Bigr)\\
		& \geq \bigl(\mu_I - \rfm\bigr) \bigl({\theta}_I(\hat{V}_t^{\vartheta'}) - {\theta}_I(\hat{V}_t^{\vartheta})\bigr) + \bigl(\mu_C - \rfm\bigr) \bigl( { \theta}_I(\hat{V}_t^{\vartheta'}) - {\theta}_I(\hat{V}_t^{\vartheta})\bigr) \\
		&\phantom{=::}  + \alpha \bigl({\rfm}- (\rcp  \vee \rcm)  \bigr)\bigl(\hat{V}_t^{\vartheta'} - \hat{V}_t^{\vartheta}\bigr) \geq 0.
	\end{align*}

To deduce the first inequality, we have used that the term multiplying $r_f^- - r_f^+$ is positive. This can be directly seen from the direct computation below
	\begin{align*}
		& \theta_I(\hat{V}_t^{\vartheta'}) + \theta_C(\hat{V}_t^{\vartheta'}) -\alpha \hat{V}_t^{\vartheta'} = (1-\alpha)(1-L_I)\bigl(\hat{V}_t^{\vartheta'}\bigr)^+ - (1-\alpha)(1+L_C)\bigl(\hat{V}_t^{\vartheta'}\bigr)^-\\
		\geq  & (1-\alpha)(1-L_I)\bigl(\hat{V}_t^{\vartheta}\bigr)^+ - (1-\alpha)(1+L_C)\bigl(\hat{V}_t^{\vartheta}\bigr)^-
		= \theta_I(\hat{V}_t^{\vartheta}) + \theta_C(\hat{V}_t^{\vartheta}) -\alpha \hat{V}_t^{\vartheta}.
	\end{align*}

    To deduce the last inequality, we have used~\eqref{eq:comp}. Thus, we can apply the comparison principle for $\mathbb{F}$-BSDEs \cite[Theorem 2.2]{ElKaroui} to $\bar{U}$ and then notice that $\vartheta' \geq \vartheta$ implies that $\hat{V}_t^{\vartheta'} \geq \hat{V}_t^{\vartheta}$, which in turn leads to the following inequality between the closeout terms: $\theta_C(\hat{V}_t^{\vartheta'}) \geq \theta_C(\hat{V}_t^{\vartheta})$ and $\theta_I(\hat{V}_t^{\vartheta'}) \geq \theta_I(\hat{V}_t^{\vartheta})$ (see also Eq.~\eqref{eq:thetaIC} for their definitions). It thus follows that $V_t^{\vartheta'} \geq V_t^{\vartheta}$ and in particular $P > V_0^+$ (using strict comparison, i.e., $P = V_0$ implies $\vartheta' =\vartheta$ a.s.), contradicting our assumption. Using a symmetric argument, it follows that $P  \geq V_0^{-}$.	Thus, if $V_0^{-} \leq V_0^{+}$, we can conclude that all valuations in the interval $[\pi^{inf} = V_0^{-} , V_0^{+} = \pi^{sup}]$ are
    arbitrage-free, whereas no arbitrage free valuation exists if $V_0^{-} > V_0^{+}$.
	\end{proof}

We note that the width of the no-arbitrage interval can be described both in terms of the valuations of the claim and of the $\XVA$ as
	\[
	\XVA_0^{\sell} - \XVA_0^{\buy} = V^+_0 - V^-_0 = \pi^{sup} - \pi^{inf}.
	\]

The reduced BSDE also enables us to provide a PDE representation {for the case ${\vartheta} = \Phi(S_T)$} using just classical arguments based on the nonlinear Feynman-Kac formula. We provide such a representation in the following proposition whose proof is reported in the appendix.
\begin{proposition}\label{thm:PDE-sol}
 The two-dimensional semilinear Cauchy problem
   \begin{alignat}{3}
    -u^\pm_t + \mathcal{L} u^\pm &= g^{\pm}\bigl(t, u^\pm, \sigma (u^\pm_x + \hat u_x);\hat u\bigr), \qquad \qquad && u^\pm(T, x)  && =0, \nonumber\\
    -  {\hat w}_t + \mathcal{L}  {\hat w} & =  - r_D  {\hat w},  &&  {\hat w}(T, x) &&= \Phi(e^x),\label{eq:PDE_representation}
  \end{alignat}
where the differential operator $\mathcal{L}$ is defined by
  \[
    \mathcal{L} := -\Bigl(r_D - \frac{\sigma^2}{2}\Bigr) \partial_x-\frac{\sigma^2}{2} \partial_{xx},
  \]
 admits the unique viscosity solutions $u^{\pm}$ and it holds that $u^\pm(t, \log{(S_t)}) = \XVA_t^{\buysell}\ind_{\{t<\tau\}}$. Moreover, if $\Phi$ is piecewise continuously differentiable with $\Phi$ and $\Phi'$ (where defined) having at most polynomial growth, then the Cauchy problems \eqref{eq:PDE_representation} has classical solutions.
\end{proposition}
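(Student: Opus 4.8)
The plan is to decouple the two-dimensional system, solve the linear equation for $\hat w$ in closed form, and then treat the $u^\pm$ equation as a single (decoupled) semilinear parabolic PDE whose probabilistic counterpart is the reduced BSDE \eqref{eq:reduced}.

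First I would dispose of the linear component: the equation for $\hat w$ is the Black--Scholes PDE in the log-variable $x=\log s$, whose unique solution of polynomial growth is $\hat w(t,x)=e^{-r_D(T-t)}\Exx^{\Qxx}\bigl[\Phi(S_T)\,\big\vert\,\log S_t=x\bigr]$, a Gaussian convolution of $\Phi$; in particular $\hat w(t,\log S_t)=\hat V_t$. Since this kernel is smooth for $t<T$, differentiating under the integral shows $\hat w(t,\cdot)\in C^\infty$, and $\hat w,\hat w_t,\hat w_x,\hat w_{xx}$ are locally H\"older on $[0,T)\times\R$ with at most polynomial growth --- the growth control is exactly where the polynomial growth of $\Phi$, resp.\ the piecewise-$C^1$ hypothesis on $\Phi$ with polynomially bounded $\Phi,\Phi'$, enters.

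Next I would freeze $\hat w$ and analyze the resulting semilinear equation for $u^\pm$. Its nonlinearity $(t,x,u,p)\mapsto g^{\pm}\bigl(t,u,\sigma(p+\hat w_x(t,x));\hat w(t,x)\bigr)$ is uniformly Lipschitz in $(u,p)$ --- this uses the boundedness of all rates $r_f^\pm,r_r^\pm,r_c^\pm,r_D,h_I^{\Qxx},h_C^{\Qxx}$ and the $1$-Lipschitz property of $x\mapsto x^\pm$ --- and grows at most polynomially in $x$ through $\hat w$. The probabilistic counterpart is the Markovian version of the reduced BSDE \eqref{eq:reduced}: I would set $u^\pm(t,x):=\check U^{\buysell}_t$ for the solution run forward from $\log S_t=x$, use the flow property of the (geometric-Brownian) stock dynamics together with the standard a priori estimates and continuous dependence of Lipschitz BSDEs on the starting point to obtain that $u^\pm$ is continuous and of polynomial growth, and then invoke the Markovian-BSDE/semilinear-PDE correspondence (cf.\ \cite{ElKaroui}, and the reduction framework of \cite{CrepeyRed}) to conclude that $u^\pm$ is a viscosity solution of \eqref{eq:PDE_representation}. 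Uniqueness within the class of polynomially growing functions would follow from the comparison principle for viscosity solutions of such equations, equivalently read off from the BSDE comparison theorem already invoked in the proof of Theorem \ref{thm:arb-price}. Finally, combining Theorem \ref{thm:reduction} with the Markov property yields $\check U^{\buysell}_t=u^\pm(t,\log S_t)$ on $\{t<\tau\}$, i.e.\ $u^\pm(t,\log S_t)=\XVA^{\buysell}_t\ind_{\{t<\tau\}}$.

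For the classical-solution statement I would use the extra regularity of $\Phi$: then the first step gives $\hat w\in C^{1,2}([0,T)\times\R)$ with $\hat w,\hat w_t,\hat w_x,\hat w_{xx}$ of polynomial growth, so the frozen nonlinearity becomes locally H\"older in $(t,x)$ (H\"older composed with Lipschitz) while remaining Lipschitz in $(u,p)$. On bounded cylinders, linear interior Schauder estimates combined with a Banach fixed-point argument exploiting the Lipschitz nonlinearity produce a $C^{1,2}$ solution; I would then exhaust $\R$ by an increasing family of bounded domains and pass to the limit using the polynomial a priori bounds from the viscosity-solution step, obtaining a classical solution on $[0,T)\times\R$. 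Being a viscosity solution, it must coincide with $u^\pm$ by the uniqueness just established. I expect this last upgrade to be the main obstacle: the Schauder machinery is cleanest on bounded domains with bounded data, so the work is in handling the unbounded spatial variable with only polynomially bounded data, and in checking that the mild non-smoothness present --- the positive/negative parts making the nonlinearity merely Lipschitz, and $\Phi$ only piecewise $C^1$ --- is still compatible with $C^{1,2}$ solvability, the latter being rescued by the smoothing of the kernel for $t<T$.
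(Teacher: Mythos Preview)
Your viscosity-solution argument is essentially the paper's: reduce to the $\mathbb{F}$-BSDE via Theorem~\ref{thm:reduction}, invoke the Markovian nonlinear Feynman--Kac correspondence for Lipschitz drivers (the paper cites \cite[Theorem~3.2]{KarHamMat}), and change variables $x=\log s$. No complaint there.

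The classical-solution upgrade is where you and the paper diverge. You propose Schauder estimates plus a Banach fixed point on bounded cylinders, followed by an exhaustion of $\R$ controlled by the polynomial a~priori bounds. This is a reasonable route and you correctly flag the unbounded-domain/unbounded-data passage as the delicate step. The paper instead sidesteps that passage entirely by a global weight transformation: dividing $u^\pm$, $\hat w$ and the terminal datum by $1+e^{2nx}$ (with $n$ the polynomial-growth exponent of $\Phi$ and $\Phi'$) produces an equivalent Cauchy problem on all of $\R$ with \emph{bounded} coefficients and \emph{bounded} terminal data, so a single classical existence theorem for semilinear heat equations (they cite \cite[Theorem~20.2.1]{Cannon}, with a modification \`a la \cite{JK} for piecewise-$C^1$ data) yields a smooth bounded solution in one shot; undoing the weight recovers $u^\pm$. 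The trade-off: your approach is conceptually more self-contained (no black-box theorem) but requires you to actually carry out the compactness/limiting argument with polynomial weights, which is exactly the nuisance the paper's transformation is designed to avoid. If you pursue your route, you will end up reinventing that weight somewhere---either in the barrier functions for the exhaustion or in the norms for the Schauder estimates---so it is worth knowing the shortcut.
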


\begin{remark}
The transformation from the BSDE \eqref{eq:BSDE-sell} (or \eqref{eq:BSDE-buy}) to the PDE \eqref{eq:PDE_representation} was derived by projecting a $\mathbb{G}$-BSDE to a $\mathbb{F}$-BSDE and then applying the Feynman-Kac formula to it. One can follow an alternative route, deriving first a PDE in four variables related to the BSDE with jumps via a nonlinear Feynman-Kac formula,
and then reducing the dimension. This is shown in detail in \cite{BCS2}, and we provide here the gist of the argument.

On $\{t<\tau\}$, define the pre-default solutions $v^{\pm}(t, S_t, \varpi_t^{I,\Qxx},\varpi_t^{C,\Qxx})=V^{\pm}_t \ind_{\{\tau > t\}}$ to  the  BSDEs \eqref{eq:BSDE-sell} and \eqref{eq:BSDE-buy}. The existence of these measurable functions $v^{\pm}$, i.e., the fact that $V^{\pm}$ are Markovian, {follows from} Proposition 4.1.1 in \cite{Delong}. Specifically, Theorem 3.2 in \cite{BCS2} shows that $v^{\pm}$  satisfy
\begin{align}
\nonumber &-v_t^{\pm}- \sum_{j\in\{I,C\}} h_j^{\Qxx}\Bigl( \theta_{j}\bigl(\hat v(t,s)\bigr) -v^{\pm}(t,s,w^I ,w^C)  - v_{j}^{\pm}\Bigr)- r_D sv_s^{\pm} - \frac12\sigma^2 s^2v_{ss}^{\pm}\\
&-f^{\pm}\Bigl(t, v^{\pm}, \sigma sv_s^{\pm}(t,s,w^I ,w^C), \theta_{I}\bigl( \hat v(t,s)\bigr)  - v^{\pm}(t,s,w^I ,w^C), \theta_{C}\bigl( \hat v(t,s)\bigr) -v^{\pm}(t,s,w^I ,w^C);\hat v(t,s)\Bigr)=0,\nonumber\\
&v^{\pm}(T, s, \cdot, \cdot) = \Phi(s)\label{eq:PDE}
\end{align}
in the viscosity sense. In the above expression, we have used the notation $v_{i}^{\pm} = \frac{\partial v^{\pm}}{\partial w^i},~i\in\{I,C\}$. Additionally, Theorem 3.2 in \cite{BCS2} shows that $v^{\pm}$ are the unique viscosity solutions of the PDEs \eqref{eq:PDE} satisfying
\begin{equation*}
\lim\limits_{\abs{x}\to\infty} \abs{v^{\pm}(\cdot,e^x, \cdot, \cdot)} e^{-c\log^2\abs{x}}=0,~c>0.
\end{equation*}
The above step transfers the BSDEs \eqref{eq:BSDE-sell} and \eqref{eq:BSDE-buy} to the PDEs \eqref{eq:PDE}. The key step in the PDE domain that parallels the transformation from the original BSDEs to the reduced BSDEs is  given in Remark 3.3 of \cite{BCS2}. Essentially, we are only concerned with $V^{\pm}_t$ before any default occurs, hence we do no need to keep track of the martingale terms $\varpi_t^{j,\Qxx}$'s. These are only needed to track the occurrence of a default. If no default has occurred, $\varpi^{j,\Qxx}_t = -h_j^\Qxx t$. It then follows that $v^\pm$ becomes a function of only two variables, i.e. $\vv^{\pm}(t,s) := v^\pm(t,s, -h_I^\Qxx t, -h_C^\Qxx t)$. The PDEs \eqref{eq:PDE} become
\begin{align}
\nonumber&-\vv_t ^{\pm}+ (h_I^\Qxx+h_C^\Qxx)\vv^{\pm}(t,s)  - r_D s\vv_s^{\pm} - \frac12\sigma^2 s^2\vv_{ss}^{\pm}\\
&\quad-f^{\pm}(t, \vv^{\pm}, \sigma s\vv_s^{\pm}(t,s), \theta_{I}( \hat v(t,s))  - \vv^{\pm}(t,s) \theta_{C}( \hat v(t,s)) -\vv^{\pm}(t,s);\hat v(t,s))=\sum_{j\in\{I,C\}} h_j^\Qxx \theta_{j}( \hat v(t,s)),\nonumber\\
&\vv^{\pm}(T, s) = \Phi(s)\label{eq:PDE1}.
\end{align}
From here, applying the standard change of variables $x=\log s, $  $\bar \w^{\pm}(t,x) = \bar{v}^{\pm}(t,e^x)$ and $\hat \w(t,x) = \hat v(t, e^x)$, we obtain the PDEs
\begin{align}
\nonumber&-\bar \w_t^{\pm} - \Bigl(r_D - \frac{\sigma^2}{2}\Bigr) \bar \w_x^{\pm} - \frac12\sigma^2 \bar \w_{xx}^{\pm} + \bigl(h_I^\Qxx+h_C^\Qxx\bigr) \bar \w^{\pm} \\
& -f^{\pm}\Bigl(t, \bar \w^{\pm}(t,x), \sigma \bar \w_x^{\pm}(t,x), \theta_{I}\bigl( \hat \w(t, x)\bigr)  - \bar \w^{\pm}(t,x), \theta_{C}\bigl( \hat \w(t, x)\bigr) -\bar \w^{\pm}(t,x);{\hat \w(t, x)}\Bigr) \nonumber \\
& = \sum_{j\in\{I,C\}} h_j^\Qxx \theta_{j}\bigl( {\hat \w(t, x)} \bigr),\nonumber\\
&\bar \w^{\pm}(T, x) = \Phi(e^x)\label{eq:PDE2}.
\end{align}
The PDEs \eqref{eq:PDE_representation} follow by deriving the PDEs for $\w^{\pm} = \bar \w^{\pm} - \hat \w$, which are associated with the BSDEs representation of the $\XVA$ given by equations~\eqref{eq:XVA-sell} and~\eqref{eq:XVA-buy}. Hence, we can express $\XVA$ as a solution to a Cauchy problem for a two-dimensional system of semilinear PDEs,
\begin{align}
&-\w^{\pm}_t + \mathcal{L} \w^{\pm} = f^{\pm}\Bigl(t, \w^{\pm} + \hat \w, \sigma (\w^{\pm}_x + \hat \w^{\pm}_x), \hat\theta_{I}( \hat \w)  - \w^{\pm}, \hat\theta_{C}( \hat \w) -\w^{\pm};\hat \w\Bigr) \nonumber \\
& + \sum_{j\in\{I,C\}} h_j^\Qxx \bigl( \hat \theta_{j}( \hat \w) - \w^{\pm} \bigr) + r_D \hat \w ,~~\w^{\pm}(T,x)=0,\nonumber\\
&- \hat{\w}_t + \mathcal{L} \hat{\w}  +r_D \hat \w= 0,~~\hat{\w}(T, x) = \Phi(e^x).\label{eq:PDE_representation1}
\end{align}
Recalling the definitions of $f^{\pm}, \tilde f^{\pm},$ and $g^{\pm}$ and their relations given in
equations~\eqref{eq:f+}, \eqref{eq:f-}, \eqref{eq:tilde-f+}, \eqref{eq:tilde-f-}, \eqref{eq:f-compare} and \eqref{eq:g+}, \eqref{eq:g-} we obtain that
equations~\eqref{eq:PDE_representation1} and \eqref{eq:PDE_representation} coincide. At this point, we can conclude that the unique solutions $\w^{\pm}$ to the PDEs \eqref{eq:PDE_representation1} are only in the viscosity sense. If $\Phi$ is piecewise continuously differentiable and also $\Phi'$ (where defined) {has} at most polynomial growth, then it follows from Theorem \ref{thm:PDE-sol} that the uniqueness of the solutions also holds in the classical sense. Moreover, we can apply Theorem A.1 of \cite{BCS2} and obtain that, on the set $\{t<\tau\}$, we have that
  \begin{align}
    {\tilde{Z}_t^{\pm}} &= \sigma S_t \w^{\pm}_x\bigl(t,\log(S_t)\bigr) \nonumber,\\
    {\tilde Z_t^{i,\pm}} &= \theta_i\bigl(\hat V(t, S_t)\bigr) - {\w^{\pm}(t,\log(S_{t}))},
\qquad \qquad i \in \{I,C\}.
  \label{eq:stratexpl}
  \end{align}
\end{remark}

\section{Explicit examples}\label{sec:expexp}
We specialize our framework to deal with a concrete example for which we can provide fully explicit expressions for the total valuation adjustment. More specifically, we consider \cite{Piterbarg}'s model {as well as} an extension of it, accounting for counterparty credit risk and closeout costs. This means that defaultable bonds of trader and counterparty become an integral part of the hedging strategy. Throughout the section, we make the following assumptions on the interest rates, as in Piterbarg{'}s setup:
\begin{equation}\label{eq:Piterrates}
\rfp = \rfm = r_f, \nonumber \qquad \rcp = \rcm = r_c, \nonumber \qquad r_D = \rrp = \rrm = r_r.
\end{equation}
We also assume that $r_f>r_r>r_c$, the case to be expected in practice according to \cite{Piterbarg}. Under the above assumptions, the security, funding and collateral accounts {do not depend } on whether the position in the security is long or short, whether the amount is borrowed from or lent to the treasury, and whether the collateral is posted or received. Due to the symmetry between rates, the buyer{'}s and seller{'}s
$\XVA$ coincide, and hence we can drop both the plus and minus superscripts in the BSDEs. The difference between the discount rate $r_D$ chosen by the valuation agent and the repo rate may also be interpreted as a proxy for illiquidity of the repo market. Under this interpretation, $r_D=r_r$ corresponds to a regime of full liquidity. The BSDEs become linear and lead to explicit additive decompositions of $\XVA$ in terms of different adjustments as detailed in the sequel. We also remark that similar decompositions have been obtained by \cite{BrigoPerPal}, see Theorem 4.3 and subsequent remarks therein.

\subsection{Piterbarg's model} \label{sec:pitdef}
This section provides an explicit representation of $\XVA$ and the associated hedging strategy in the framework proposed by \cite{Piterbarg}. Besides symmetry between rates, \cite{Piterbarg} precludes the possibility of defaults in the model, but maintains the presence of collateral. Before presenting the main result of this section, we introduce the following quantities
\begin{equation*}
P_t^{r_f}:= \frac{B_t^{r_f}}{B_T^{r_f}}, \qquad \qquad P_t^{r_r}:= \frac{B_t^{r_r}}{B_T^{r_r}},
\end{equation*}
which can be understood as the time $t$ prices of two (fictitious) risk-free bonds with discount factors $r_f$ and $r_r$ respectively. The next proposition, proven in the appendix, provides an explicit representation of XVA and its replicating strategy.
\begin{proposition} \label{prop:Pitnodef}
The total valuation adjustment is given by
\begin{equation}\label{eq:FVAPiterbarg}
\XVA_t = \Biggl ({\frac{P_t^{r_f}}{P_t^{r_r}}} - 1 \Biggr) \Biggl(1-\alpha \frac{r_f - r_c}{r_f - r_r} \Biggr) \hat{V}_t := \beta_t \hat{V}_t.
\end{equation}
Furthermore, the replication strategy in stock is given by
\begin{align}
\tilde{\xi}_t & = \beta_t \hat{\Delta}_t,
\label{eq:unitedDeltaPiterbarg}
\end{align}
where
\begin{equation}
    \hat{\Delta}_t = \hat{\Delta}(t,S_t) := \frac{\partial}{\partial S} \hat{V}(t, S_t) = \frac{\partial}{\partial S} \mathbb{E}^{\Qxx}\biggl[\frac{B_t^{r_r}}{B_T^{r_r}}  \Phi(S_T) \bigg\vert \mathcal{F}_t\biggr].
\label{eq:delta}
\end{equation}
\end{proposition}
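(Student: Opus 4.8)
The plan is to exploit the degeneracy of the XVA BSDE under Piterbarg's rate assumptions. First I would specialise the XVA BSDE \eqref{eq:XVABSDE} with driver \eqref{eq:tilde-f+} (equivalently, the wealth BSDE built from \eqref{eq:vtlast2}--\eqref{eq:f+}) to the present setting. In Piterbarg's model the hedging portfolio contains only the stock (traded via the repo market), the funding account and the collateral account --- there are no defaultable bonds --- so $\tau=T$, the jump martingales $\varpi^{I,\Qxx},\varpi^{C,\Qxx}$ and the terminal closeout adjustments drop out, and the wealth/XVA equation is driven by $W^{\Qxx}$ alone. Moreover $r_D=r_r$ kills the repo term $(r_D-\rrm)\tilde z^+-(r_D-\rrp)\tilde z^-$, while $\rfp=\rfm=r_f$ and $\rcp=\rcm=r_c$ remove the funding and collateral kinks. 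What remains is the scalar \emph{linear} BSDE
\[
  -d\XVA_t = \bigl(-r_f\,\XVA_t + c_t\bigr)\,dt - \tilde Z_t\,dW_t^{\Qxx},\qquad \XVA_T=0,
\]
with affine coefficient $c_t=\bigl(r_r-r_f+\alpha(r_f-r_c)\bigr)\hat V_t=-(r_f-r_r)\bigl(1-\alpha\tfrac{r_f-r_c}{r_f-r_r}\bigr)\hat V_t$.

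Second, I would solve this linear BSDE by the standard integrating--factor representation,
\[
  \XVA_t=\Exx^{\Qxx}\!\Bigl[\int_t^T e^{-r_f(s-t)}c_s\,ds\,\Big|\,\F_t\Bigr]
  =-(r_f-r_r)\Bigl(1-\alpha\tfrac{r_f-r_c}{r_f-r_r}\Bigr)\int_t^T e^{-r_f(s-t)}\,\Exx^{\Qxx}\!\bigl[\hat V_s\mid\F_t\bigr]\,ds.
\]
The key computation is the elementary identity $\Exx^{\Qxx}[\hat V_s\mid\F_t]=e^{r_r(s-t)}\hat V_t$ for $t\le s\le T$, which follows at once from $\hat V_s=e^{-r_D(T-s)}\Exx^{\Qxx}[\vartheta\mid\F_s]$ with $r_D=r_r$ and the tower property. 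Plugging this in turns the stochastic integral into the deterministic integral $\int_t^T e^{-(r_f-r_r)(s-t)}\,ds=\frac{1-e^{-(r_f-r_r)(T-t)}}{r_f-r_r}$, giving $\XVA_t=\bigl(e^{-(r_f-r_r)(T-t)}-1\bigr)\bigl(1-\alpha\tfrac{r_f-r_c}{r_f-r_r}\bigr)\hat V_t$. Finally, since $B_t^{r_f}=e^{r_ft}$ and $B_t^{r_r}=e^{r_rt}$ under constant rates, we have $P_t^{r_f}/P_t^{r_r}=e^{-r_f(T-t)}/e^{-r_r(T-t)}=e^{-(r_f-r_r)(T-t)}$, which is exactly \eqref{eq:FVAPiterbarg} with $\beta_t:=\bigl(P_t^{r_f}/P_t^{r_r}-1\bigr)\bigl(1-\alpha\tfrac{r_f-r_c}{r_f-r_r}\bigr)$. (Alternatively one may simply substitute the candidate $\beta_t\hat V_t$ into the linear BSDE, verify the terminal condition and that the driver balances, and invoke uniqueness.)

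For the replication strategy I would use Remark \ref{remark:hedge}, which on $\{t<\tau\}$ --- here all of $[0,T)$ --- gives $\tilde\xi_t=\tilde Z_t/(\sigma S_t)$. Since $\beta_t$ is a deterministic differentiable function and $\hat V_t=\hat V(t,S_t)$ solves the Black--Scholes BSDE for $\hat V$, whose martingale part is $\hat Z_t\,dW_t^{\Qxx}=\sigma S_t\hat\Delta_t\,dW_t^{\Qxx}$ by the nonlinear Feynman--Kac formula (with $\hat\Delta_t=\partial_S\hat V(t,S_t)$ as in \eqref{eq:delta}), It\^o's formula applied to $\XVA_t=\beta_t\hat V(t,S_t)$ shows that its diffusion coefficient equals $\beta_t\sigma S_t\hat\Delta_t$. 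Comparing with the BSDE yields $\tilde Z_t=\beta_t\sigma S_t\hat\Delta_t$ and hence $\tilde\xi_t=\beta_t\hat\Delta_t$, which is \eqref{eq:unitedDeltaPiterbarg}.

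The step I expect to need the most care is the first one: justifying rigorously that, in Piterbarg's no--default framework, the XVA BSDE genuinely collapses to the scalar linear equation above (that the bond positions and closeout adjustments are absent, and that $\vartheta=\Phi(S_T)\in L^2(\Omega,\F_T,\Qxx)$ so that the linear BSDE theory and the representation results quoted earlier apply). Once linearity is established, the remaining computations are routine and hinge only on the identity $\Exx^{\Qxx}[\hat V_s\mid\F_t]=e^{r_r(s-t)}\hat V_t$.
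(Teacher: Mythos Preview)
Your argument is correct and reaches the same formulas, but it follows a genuinely different route from the paper. The paper writes the integral form of the linear BSDE and then uses the Clark--Ocone formula and Malliavin calculus to compute $\check Z_t$ explicitly (via $D_t\hat V_s=\hat\Delta_s\sigma S_s$ and the martingale property of $\hat\Delta_s S_s/B_s^{r_r}$); only after establishing that $\check Z$ is square integrable does it take conditional expectations to obtain $\XVA_t$. You instead invoke the standard representation formula for linear BSDEs to get $\XVA_t$ directly from the tower identity $\Exx^{\Qxx}[\hat V_s\mid\F_t]=e^{r_r(s-t)}\hat V_t$, and then read off $\tilde Z_t$ by applying It\^o's formula to the explicit product $\beta_t\hat V(t,S_t)$. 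Your approach is more elementary (no Malliavin calculus) and arguably cleaner for this deterministic-coefficient case; the paper's approach has the advantage of verifying square integrability of $\check Z$ explicitly, which is what legitimises taking expectations in the first place --- a point you handle implicitly by appealing to linear BSDE theory under $L^2$ data.
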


The representation~\eqref{eq:FVAPiterbarg} expresses $\XVA$ as a percentage of the publicly available price $\hat{V}_t$ of the claim. Moreover, Proposition \ref{prop:Pitnodef} shows that funding costs enter in two ways in the $\XVA$ and the corresponding replicating strategy in the stock: (1) funding occurs at the rate $r_f$, but $\XVA$ and the hedging strategy are based on public valuation using $r_r$ as the discount rate and (2) a funding adjustment proportional to the size of posted collateral, $\alpha \hat{V}(t,S_t)$, originating from the difference between funding and collateral rates.

\begin{remark}\label{remark:P-comparison}
The model considered in this section is a special case of the model proposed by \cite{Piterbarg}. Specifically, the model in \cite{Piterbarg} is more general because the rates $r_r, r_f$ and $r_c$ can be stochastic processes, and he also allows for a general collateral specification. Differently from us, \cite{Piterbarg} does not define and study XVA, but rather focuses on determining the price of the claim under the above mentioned assumptions. Under the symmetry assumption on rates and using the collateralization rule given in Eq.~\eqref{eq:rulecoll}, it can seen that equation (3) of \cite{Piterbarg} is exactly the solution of our BSDE~\eqref{eq:BSDE-sell}, which admits the explicit representation
\begin{align}
    \nonumber \bigl(B_t^{r_f} \bigr)^{-1} V_t &= \mathbb{E}^{\Qxx}\Bigl[\bigl.\bigl(B_T^{r_f}\bigr)^{-1} \Phi(S_T) \bigr\vert \mathcal{F}_t \Bigr] + \alpha (r_f - r_c) \int_t^T
    \bigl(B_s^{r_f} \bigr)^{-1} B_s^{r_r} \mathbb{E}^{\Qxx}\Bigl[\bigl. \bigl(B_s^{r_r}\bigr)^{-1} \hat{V}_s \bigr\vert \mathcal{F}_t \Bigr] d{ s} \\
    &= \bigl(B_T^{r_f}\bigr)^{-1} B_T^{r_r} \frac{1}{B_t^{r_r}} \hat{V}_t + \alpha (r_f - r_c) \int_t^T
    \bigl(B_s^{r_f} \bigr)^{-1} B_s^{r_r} \bigl(B_t^{r_r}\bigr)^{-1} \hat{V}_t ds.
\label{eq:BSDE1}
\end{align}
The case of zero collateral in \cite{Piterbarg} corresponds to setting $\alpha=0$ in our case. It is also possible to derive equation (5) of \cite{Piterbarg}, in which it is assumed that the rate used for discounting the claim is the collateral rate $r_c$.
To see this, consider Eq.~\eqref{eq:BSDE-sell}, which under this setting (and using Eq.~\eqref{eq:rulecoll}) becomes
\begin{align*}
dV_t &= r_f(V_t - \alpha \hat V_t) + r_c \alpha \hat V_t - Z_t dW_t^{\Qxx}=r_c V_t +  (r_f -r_c) (V_t - C_t) - Z_t dW_t^{\Qxx}.
\end{align*}
We recall that $C_t$, $t \geq 0$, denotes the collateral process. {It then} follows that
\begin{align*}
\bigl(B_t^{r_c}\bigr)^{-1} V_t &= \bigl(B_T^{r_c}\bigr)^{-1} \Phi(S_T)  - \int_t^T \bigl(B_s^{r_c}\bigr)^{-1} {Z}_s dW_s^{\Qxx} + \bigl({r_f - r_c}\bigr)  \int_t^T \bigl(B_s^{r_c}\bigr)^{-1} (V_s - C_s) ds.
\end{align*}
Taking {the conditional} expectation, and noticing that $Z$ (computable in a similar way as in the proof of Proposition \ref{prop:Pitnodef})
is square integrable and therefore the stochastic integral above is a true martingale, we get
\begin{align}
 \bigl(B_t^{r_c}\bigr)^{-1} V_t &= \mathbb{E}^{\Qxx}\Bigl[\bigl.\bigl(B_T^{r_c}\bigr)^{-1} \Phi(S_T) \bigr\vert \mathcal{F}_t \Bigr] +  (r_f - r_c) \int_t^T
\bigl(B_s^{r_c} \bigr)^{-1} \mathbb{E}^{\Qxx}\bigl[\bigl. V_s -C_s \bigr\vert \mathcal{F}_t \bigr] d{ s}. \label{eq:BSDE2}
\end{align}
Deviating from our framework and assuming that the collateral rule is based on the hedger's valuation, we can use similar arguments as above and recover equation (7) of \cite{Piterbarg} in the case of full collateralization ($C_t = V_t$). The above analysis serves to illustrate the generality of our framework, in which special tractable cases can be recovered by suitable specifications of the model parameters.
\end{remark}

We next analyze the dependence of $\XVA$ on funding rates and collateralization levels. Figure \ref{fig:nodef} shows that in the case of a European call option the $\XVA$ is negative when the collateralization level is small. This is consistent with the expression~\eqref{eq:FVAPiterbarg} and can be understood as follows.
Suppose that $\alpha = 0$. In Piterbarg's model, the hedger of the option is long in stock and finances purchases of stock shares at the repo rate $r_r$. He is also long in the funding account and accrues interests at the higher rate $r_f$. In the Black-Scholes world, the seller buys stock shares and invests cash, both at the rate $r_r = r_D$. Hence, if $r_f > r_r$, the existence of the funding account is advantageous for the hedger. As a result, the hedger's price will be lower than the Black-Scholes price so that the $\XVA$ is negative. As $\alpha$ gets larger, the trader also needs to finance purchases of the collateral to post to the counterparty. In order to do so, he borrows from the treasury at the rate $r_f$. However, he only receives interest at rate $r_c$ on the posted collateral. This yields a loss to the trader given that $r_c < r_r < r_f$. Figure \ref{fig:nodef} confirms our intuition. It also shows that the stock position of the trader decreases as the funding rate $r_f$ increases, and increases if the collateralization level $\alpha$ increases.
When $\XVA$ is negative, i.e., the hedger{'s} price {is} lower than the Black-Scholes price, then the
strategy of the trader is to go short in the stock.

\begin{figure}
\includegraphics[width=0.49\textwidth]{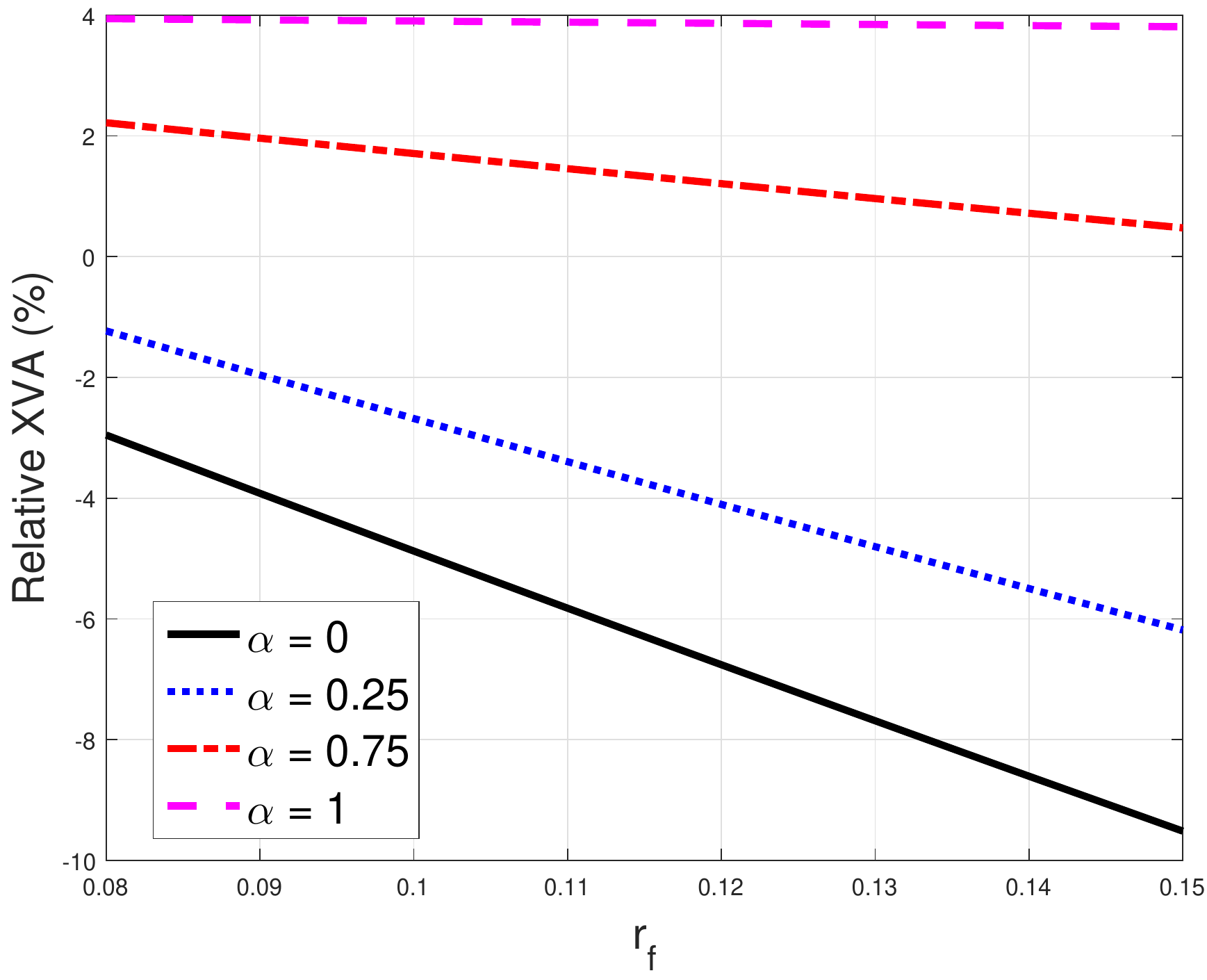}
\includegraphics[width=0.49\textwidth]{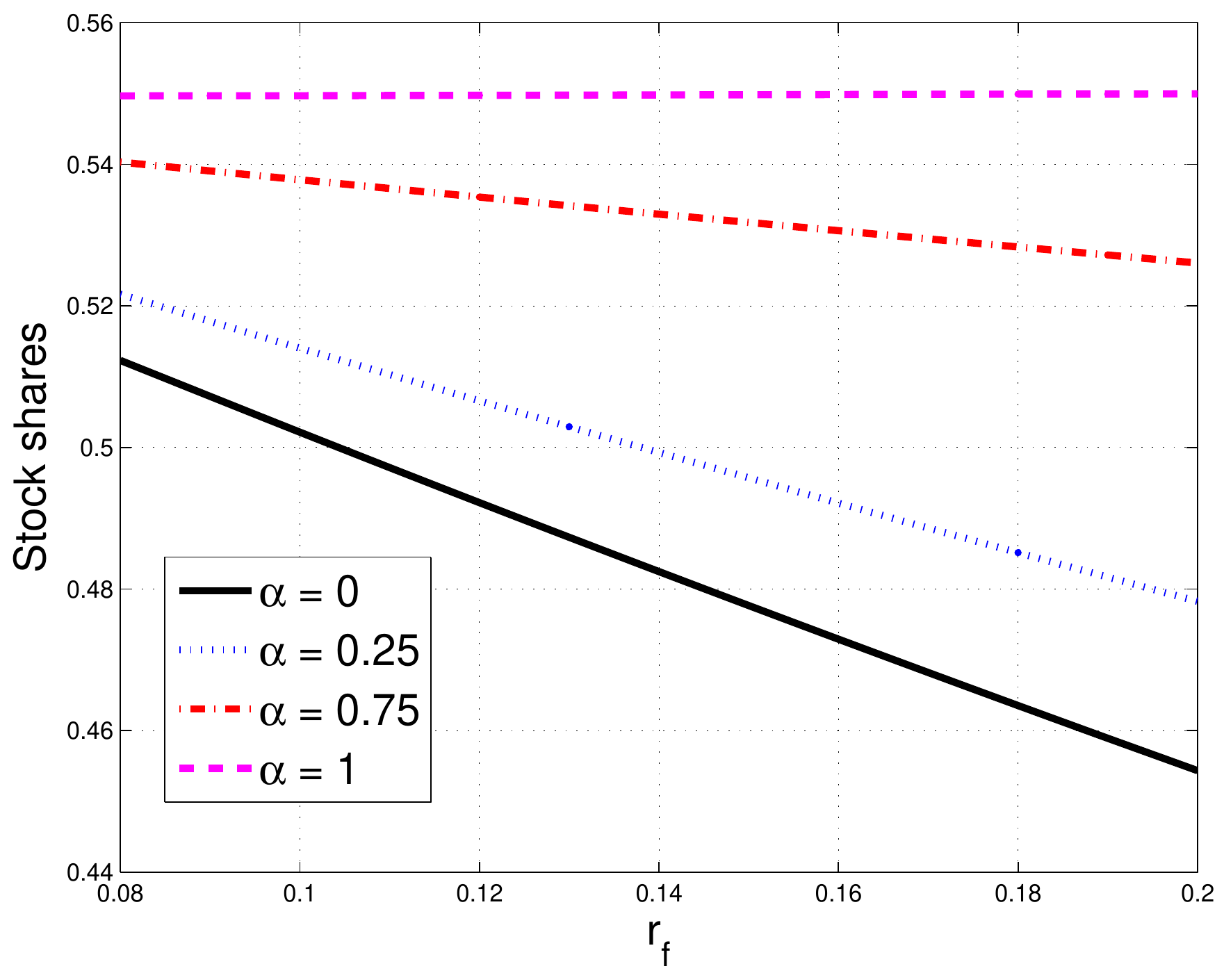}
\caption{Left: $\XVA$  as a function of $r_f$ for different collateralization levels $\alpha$. Right: Number of stock shares in the $\XVA$ replication strategy. We set $r_D = 0.05$, $r_c = 0.01$, $\sigma = 0.2$, and $\alpha = 0$. The claim is an at-the-money European call option with maturity $T=1$.}
\label{fig:nodef}
\end{figure}

\subsection{Piterbarg's model with defaults}
In this section, we generalize Piterbarg's model by including the possibility that the trader or his counterparty default. Proposition \ref{prop:Pitdef} gives the explicit expression for $\XVA$ of {a} European claim, as well as closed-form expressions for the replicating strategies in stocks and bonds. We specialize this result to the case of an option in Remark \ref{rem:options}.

\begin{proposition} \label{prop:Pitdef}
The total valuation adjustment is given by
\begin{align}\label{eq:FVAPiterbargdef}
\XVA_t \ind_{\{\tau > t\}} &= {\Bigl((r_r -r_f) + \alpha (r_f-r_c)\Bigr)}\frac{1- e^{-(\eta-r_r) (T-t)} }{ \eta-r_r} \hat{V}_t \ind_{\{\tau > t\}} \nonumber \\
& \phantom{=:} + ({\mu_C - r_f} )   L_C  \frac{1- e^{-(\eta-r_r) (T-t)} }{ \eta-r_r} \bigl((1-\alpha) \hat{V}_t \bigr)^{-} \ind_{\{\tau > t\}}\nonumber\\
& \phantom{=:} - ({\mu_I - r_f}  )   L_I \frac{1- e^{-(\eta-r_r) (T-t)} }{ \eta-r_r}  \bigl((1-\alpha) \hat{V}_t \bigr)^{+} \ind_{\{\tau > t\}},
\end{align}
where $\eta := {\mu_I + \mu_C -r_f}$. Furthermore, the $\XVA$ replication strategies in stock, counterparty and trader's bonds are given by
\begin{align}\label{strats}
\tilde{\xi}_t=& \biggl((r_r -r_f) + \alpha (r_f-r_c)\frac{1- e^{-(\eta-r_r) (T-t)}}{ \eta-r_r}  \nonumber \\
 & \phantom{=}+ ( {\mu_C - r_f} )   L_C  \frac{1- e^{-(\eta-r_r) (T-t)} }{ \eta-r_r} (1-\alpha) {\ind_{\{\hat{V}_t < 0\}}} \nonumber\\
& \phantom{=} - ( {\mu_I - r_f} )   L_I \frac{1- e^{-(\eta-r_r) (T-t)} }{ \eta-r_r}  (1-\alpha) {\ind_{\{\hat{V}_t > 0\}}}
 \biggr )  \hat{\Delta}_t \ind_{\{\tau > t\}}, \nonumber \\
\tilde{\xi}_t^I &= \frac{{{\XVA_t} + L_I (1-\alpha) (\hat{V}_t)^+}}{{P_t^I}} \ind_{\{\tau > t\}},\nonumber \\
\tilde{\xi}_t^C &= \frac{{\XVA_t - L_C  (1-\alpha) (\hat{V}_t)^- }}{{P_t^C}} \ind_{\{\tau > t\}}.
 \end{align}
Moreover, on the event $\{t=\tau_I<\tau_C\wedge T\}$, the XVA process is given by
\begin{equation}\label{eq:defaul-I}
\XVA_t =  \tilde{\xi}_t^f B_t^{r_f} + \tilde{\xi}_t^C P^C_t - {\tilde{\psi}_t^c} B_t^{r_c} - \hat{V}_t = - L_I  ((1-\alpha)\hat{V}_t )^{+},
\end{equation}
while on the event $\{t=\tau_C<\tau_I\wedge T\}$ {by}
\[
\XVA_t = \tilde{\xi}_t^f B_t^{r_f} + \tilde{\xi}_t^I P^I_t - \tilde{\psi}_t^c B_t^{r_c} - \hat{V}_t =  L_C  ((1-\alpha)\hat{V}_t )^{-}.
\]

\end{proposition}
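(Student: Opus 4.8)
The plan is to use the symmetry assumptions \eqref{eq:Piterrates} (together with $r_f>r_r>r_c$) to collapse the reduced $\XVA$-BSDE of Theorem~\ref{thm:reduction} to a \emph{linear} scalar BSDE in the filtration $\mathbb F$, solve it in closed form by a discounting argument, and then read off the replication strategies from Remark~\ref{remark:hedge} and the reduction identities. Concretely, inserting $\rfp=\rfm=r_f$, $\rcp=\rcm=r_c$ and $r_D=\rrp=\rrm=r_r$ into \eqref{eq:tilde-f+} makes every $(\cdot)^{\pm}$ acting on the arguments $xva,\tilde z$ disappear, so that $\tilde f^{\pm}$ reduces to the affine map $-r_f\,xva-(r_f-r_r)(\tilde z^{I}+\tilde z^{C})+((r_r-r_f)+\alpha(r_f-r_c))\hat V_t$. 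Substituting $\tilde z^{I}=\tilde\theta_I(\hat V_t)-\check u$ and $\tilde z^{C}=\tilde\theta_C(\hat V_t)-\check u$ into \eqref{eq:g+}, and using $h_i^{\Qxx}=\mu_i-r_D=\mu_i-r_r$ together with $\tilde\theta_I(\hat v)=-L_I((1-\alpha)\hat v)^{+}$ and $\tilde\theta_C(\hat v)=L_C((1-\alpha)\hat v)^{-}$, the $\check u$-terms collect exactly into $-\eta\check u$ with $\eta=\mu_I+\mu_C-r_f$, leaving the affine driver
\[
\check g\bigl(t,\check u,\check z;\hat V\bigr)=-\eta\,\check u+B_t,\qquad B_t:=\bigl((r_r-r_f)+\alpha(r_f-r_c)\bigr)\hat V_t-(\mu_I-r_f)L_I\bigl((1-\alpha)\hat V_t\bigr)^{+}+(\mu_C-r_f)L_C\bigl((1-\alpha)\hat V_t\bigr)^{-}
\]
(the $\pm$ superscripts being immaterial here, as noted at the start of the section). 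Thus \eqref{eq:reduced} becomes $-d\check U_t=(-\eta\check U_t+B_t)\,dt-\check Z_t\,dW_t^{\Qxx}$ with $\check U_T=0$.

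Next I would solve this linear BSDE. Applying It\^o's formula to $e^{-\eta t}\check U_t$ and noting that, since $\vartheta=\Phi(S_T)\in L^2$, the processes entering $B$ are square integrable so that the local-martingale part is a true $\Qxx$-martingale, one obtains $\check U_t=\Exx^{\Qxx}\bigl[\int_t^T e^{-\eta(s-t)}B_s\,ds\,\big|\,\mathcal F_t\bigr]$. By Fubini and the fact that $e^{-r_D s}\hat V_s=e^{-r_r s}\hat V_s$ is a $\Qxx$-martingale, $\Exx^{\Qxx}[\hat V_s\mid\mathcal F_t]=e^{r_r(s-t)}\hat V_t$; moreover, since the replicated European payoff has a fixed sign, $\hat V$ never changes sign, hence $((1-\alpha)\hat V_s)^{\pm}$ is, up to sign, either $(1-\alpha)\hat V_s$ or $0$, and likewise $\Exx^{\Qxx}[((1-\alpha)\hat V_s)^{\pm}\mid\mathcal F_t]=e^{r_r(s-t)}((1-\alpha)\hat V_t)^{\pm}$. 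Performing the remaining deterministic integral $\int_t^T e^{-(\eta-r_r)(s-t)}\,ds=\frac{1-e^{-(\eta-r_r)(T-t)}}{\eta-r_r}$ (the denominator being positive by Assumption~\ref{ass:necessary}) and recalling $\check U_t=\XVA_{t\wedge\tau-}$ yields precisely \eqref{eq:FVAPiterbargdef}.

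For the hedging strategies, Theorem~\ref{thm:reduction} and Remark~\ref{remark:hedge} give $\tilde\xi_t^{i}=-\tilde Z_t^{i}/P_{t-}^{i}=-(\tilde\theta_i(\hat V_t)-\check U_t)/P_{t-}^{i}$ on $\{t\le\tau\}$ for $i\in\{I,C\}$; substituting $\tilde\theta_I,\tilde\theta_C$ and $\XVA_t=\check U_t$ on $\{t<\tau\}$ reproduces the stated $\tilde\xi_t^{I},\tilde\xi_t^{C}$. On the region where $\hat V$ keeps its sign, $\check U_t=c(t)\hat V_t$ with $c(t)$ the deterministic bracket appearing in \eqref{eq:FVAPiterbargdef}; applying It\^o to $c(t)\hat V(t,S_t)$ and comparing the diffusion part with the Black--Scholes BSDE $d\hat V_t=r_D\hat V_t\,dt+\sigma S_t\hat\Delta_t\,dW_t^{\Qxx}$ identifies $\check Z_t=c(t)\sigma S_t\hat\Delta_t$, so $\tilde\xi_t=\check Z_t/(\sigma S_t)=c(t)\hat\Delta_t$ on $\{t<\tau\}$, which is \eqref{strats}. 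Finally, on $\{t=\tau_I<\tau_C\wedge T\}$ the bond $P^{I}$ jumps to $0$; writing $V_t=\tilde\xi_t^{f}B_t^{r_f}+\tilde\xi_t^{C}P_t^{C}-\tilde\psi_t^{c}B_t^{r_c}+\tilde\xi_t^{I}P_t^{I}$, using $\XVA_t=V_t-\hat V_t$ and the terminal condition of \eqref{eq:XVABSDE} gives $\XVA_{\tau_I}=\tilde\theta_I(\hat V_{\tau_I})=-L_I((1-\alpha)\hat V_{\tau_I})^{+}$, and the analogous computation with the roles of $I$ and $C$ exchanged gives the identity on $\{t=\tau_C<\tau_I\wedge T\}$.

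The bulk of the work is the linearization step: tracking the nested piecewise-linear drivers $\tilde f^{\pm}$ and then $\check g$, and checking that the coefficient of $\check u$ is exactly $-\eta$, so that the $r_r$-growth of $\hat V$ and the $\eta$-discounting combine into the clean factor $\frac{1-e^{-(\eta-r_r)(T-t)}}{\eta-r_r}$. The one genuinely structural point --- and the reason the statement is phrased for a vanilla European claim rather than an arbitrary $\Phi$ --- is that one needs $\hat V$ to have constant sign in order to pull $(\cdot)^{\pm}$ out of the conditional expectation, so that $\XVA_t$ depends on $\hat V_t$ itself rather than on an average of the future values $\hat V_s$.
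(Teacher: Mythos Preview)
Your proposal is correct and follows essentially the same route as the paper for the main formula: both linearize the reduced $\mathbb F$-BSDE \eqref{eq:reduced} under the symmetry assumptions \eqref{eq:Piterrates}, arrive at the same driver $-\eta\check u + B_t$, and then represent $\check U_t$ as a conditional expectation of a discounted integral, exploiting the $\Qxx$-martingale property of $e^{-r_r s}\hat V_s$. You are also right to flag the constant-sign requirement on $\hat V$; the paper invokes ``the martingale properties of the discounted payoffs'' at the same step and is thus implicitly using the same assumption.

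The one place where you genuinely diverge is the computation of $\check Z$ and hence of the stock strategy $\tilde\xi$. The paper computes $\check Z$ \emph{first}, via the Clark--Ocone formula and Malliavin derivatives of $\hat V_s$ and of $((1-\alpha)\hat V_s)^{\pm}$, and only then takes conditional expectations in the integral representation (the square integrability of $\check Z$ being used to justify that the stochastic integral is a true martingale). You instead solve for $\check U_t=c(t)\hat V_t$ first and then recover $\check Z_t=c(t)\sigma S_t\hat\Delta_t$ by a direct It\^o computation. Your route is shorter and avoids Malliavin calculus entirely; the paper's route has the advantage of making the integrability of $\check Z$ explicit before taking expectations, whereas you rely (implicitly) on the general $L^2$-theory for Lipschitz BSDEs invoked in Theorem~\ref{thm:reduction} to justify that the local-martingale term is a true martingale. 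Both arguments are valid; one small refinement to your write-up would be to cite that BSDE existence result rather than the integrability of $B$ as the reason the stochastic integral is a martingale.
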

Recall that under the no-arbitrage conditions given in {Assumption}~\ref{ass:necessary} we have that $\mu_I > r_f$, $\mu_C > r_f$ and hence $\eta > r_f > r_r$. As a consequence, the expression for the XVA given in Eq.~\eqref{eq:FVAPiterbargdef} is well defined. {Moreover,} the number of shares held in the repo, collateral and funding account ($\tilde{\psi}_t^r,\tilde{\psi}_t^{c}, \tilde{\xi}_t^{f}$) are uniquely determined by the holdings in stock, investor and counterparty bonds (see Remark \ref{remark:hedge}).

As in the classical Piterbarg setup, the representation~\eqref{eq:FVAPiterbargdef} shows that $\XVA$ admits a decomposition, but now into three separate contributing terms. The first term corresponds to the replication in the absence of defaults. {It captures the costs of replicating the publicly available claim $\hat{V}(t,S_t)$ and of funding the collateral posted until the earliest of investor or counterparty default.} The second term corresponds to the (funding-adjusted) replicating cost of the $\CVA$ component, and the third term to the (funding-adjusted)
replication cost of the $\DVA$ component.

\begin{remark} \label{rem:options}
Eq.~\eqref{eq:FVAPiterbargdef} reduces to
\begin{align}\label{eq:reduced-sol4}
\XVA_{t} \ind_{\{\tau > t\}} &= \Bigl( \underbrace{(r_r -r_f) + \alpha (r_f-r_c)}_{{\text{funding}}} \underbrace{- L_I (1-\alpha) ({\mu_I - r_f} )}_{{\DVA}} \Bigr)
\frac{1- e^{-(\eta-r_r) (T-t)} }{ \eta-r_r} \hat{V}_t \ind_{\{\tau > t\}}  \nonumber \\
&:= A \hat{V}_t \ind_{\{\tau > t\}}.
\end{align}
Indeed, $\XVA$ is now expressed as a percentage of the publicly available price $\hat{V}_t$ of the claim. As the trader is short in the call option and hence needs to replicate a long position to hedge, he always faces zero exposure to the counterparty (the trader needs to post collateral to the counterparty, but the latter does not need to do so with the trader), and hence he only needs to replicate
the $\DVA$ component of the trade which is not mitigated by the posted collateral. In this case the number of stock and bond shares in~\eqref{strats} needed for his replicating strategy simplify to
\begin{align*}
    \tilde{\xi}_{{t}} = A \times \frac{\partial}{\partial S} \hat{V}(t,S_t)\ind_{\{\tau > t\}} ,\qquad
    \tilde{\xi}_t^I =  \frac{{A \times \hat{V}_t} + L_I (1-\alpha) \hat{V}_t} {{P_t^I} }\ind_{\{\tau > t\}}, \qquad
    \tilde{\xi}_t^C =  \frac{{A \times \hat{V}_t} }{{P_t^C}}\ind_{\{\tau > t\}}.
\end{align*}
Moreover, on the event $\{t=\tau_I<\tau_C\wedge T\}$ the value of the portfolio is $\XVA_t  = \bigl(1- L_I  ((1-\alpha)\bigr)\hat{V}_t$, and similarly on $\{t=\tau_C<\tau_I\wedge T\}$ {we have} $\XVA_t = \hat V_t$.
Despite the $\CVA$ component of the trader is absent, the hedger still trades in the counterparty bond. This is because he needs to hedge the default risk of his counterparty given that the claim is being replicated until the earliest of hedger and counterparty default time.
\end{remark}

We conclude with a numerical assessment of the above derived results. We consider one at-the-money option with $S_0=K=1$, so that at maturity $T=1$ we have the payoff $\Phi(x) = (x-K)^+$. Figure \ref{fig:pricedec1} reports the value of the funding and $\DVA$ component contributing to the decomposition given in Eq.~\eqref{eq:reduced-sol4}. In the safer scenario (left panel), the funding component becomes predominant as $r_f$ increases, while the contribution coming from the replication costs of the closeout position at the trader's default time is small. As the default risks of trader and
counterparty become higher (right panel) and for not too high funding rates $r_f$, the $\DVA$ component dominates given that the closeout procedure is triggered earlier and the closeout payoff is larger.

Comparing the bottom panels of Figures \ref{fig:defXVA} and \ref{fig:defXVArisk}, we see that in both cases a similar number of trader{'}s bond shares is used to replicate the jumps to the closeout values. However, the return on the trader{'}s bond under the valuation measure is higher under the risky scenario, and hence offers a larger contribution to XVA.
  \begin{figure}
    \includegraphics[width=0.49\textwidth]{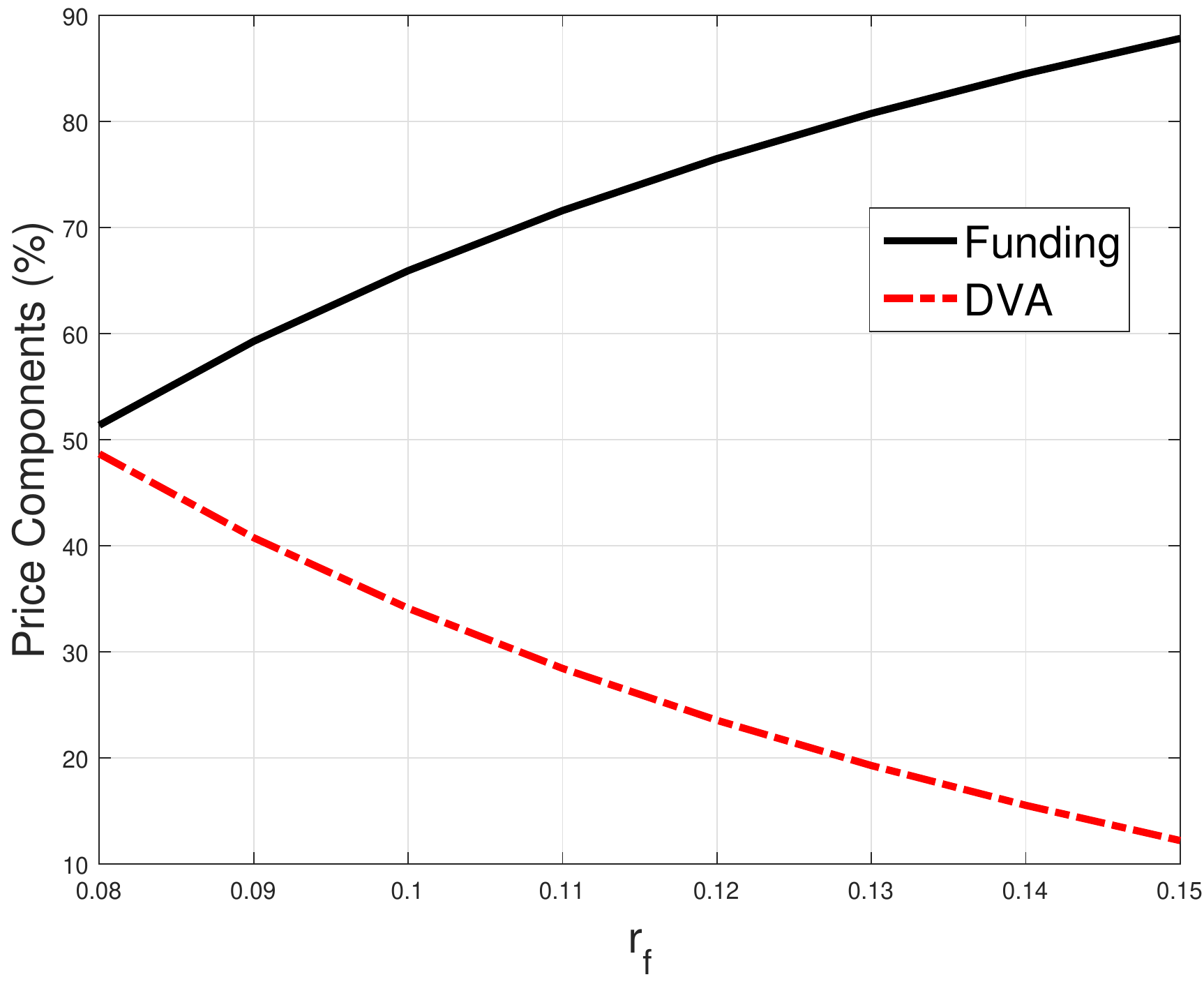}
    \includegraphics[width=0.49\textwidth]{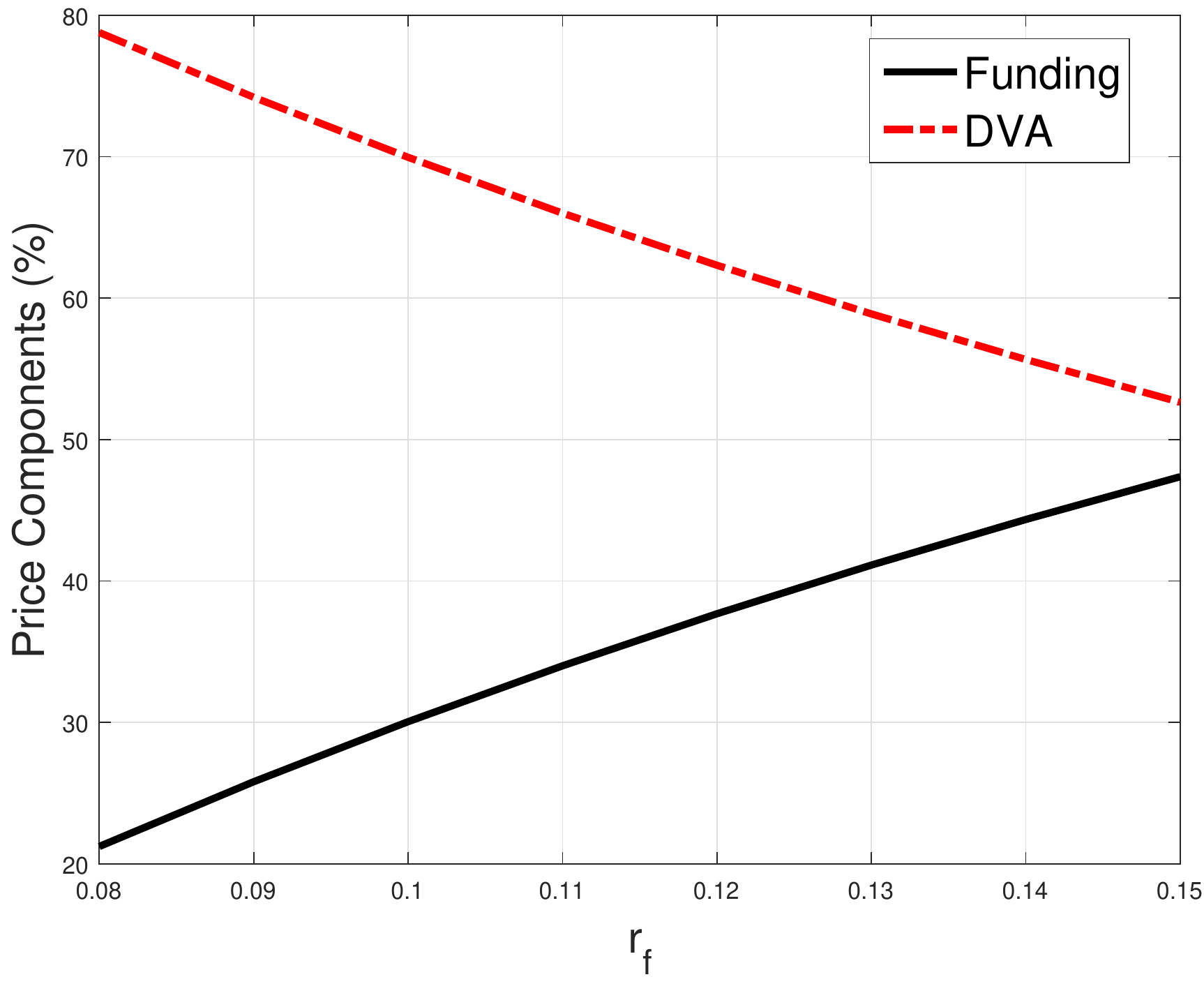}
    \caption{XVA decompositions (in \% of {the market value of the replicated claim}) in the Piterbarg model with defaults. We set $r_r = 0.05$, $r_c = 0.01$, $\sigma = 0.2$, $\alpha=0.25$, $L_C = 0.5$, $L_I = 0.5$.
    The claim is an at-the-money European call option with maturity $T=1$. Left: {$\mu_I = 0.2$, $\mu_C = 0.25$}. Right: {$\mu_I = 0.55$, $\mu_C = 0.55$}.}
  \label{fig:pricedec1}
  \end{figure}
When $\alpha$ is high, the $\XVA$ is positive. The position in the trader's bond is higher than in the counterparty bond given that a residual $\DVA$ component (after collateral mitigation) needs to be replicated, while the $\CVA$ component is zero because the trader faces zero exposure to his counterparty. As $\alpha$ increases, a smaller residual $\DVA$ component needs to be replicated given that the position is more collateralized. As a consequence, the trader reduces the position in his own bonds.

A direct comparison of Figures \ref{fig:defXVA} and \ref{fig:defXVArisk} suggests that for moderately low levels of collateralization, the trader purchases a higher number of his own bonds under the risky scenario, and partly finances this position using the proceeds coming from the short sale of counterparty bonds.

\begin{remark}
The formulas for the $\XVA$ given in Eq.~\eqref{eq:FVAPiterbargdef} and its specialization to a non-negative payoff~\eqref{eq:reduced-sol4} can also be derived using representation results for linear BSDEs with jumps and direct martingale arguments. Such an approach is used in \cite[Section 5.2]{BCS1}.
\end{remark}

  \begin{figure}[ht!]
    \centering
      \includegraphics[width=6.6cm]{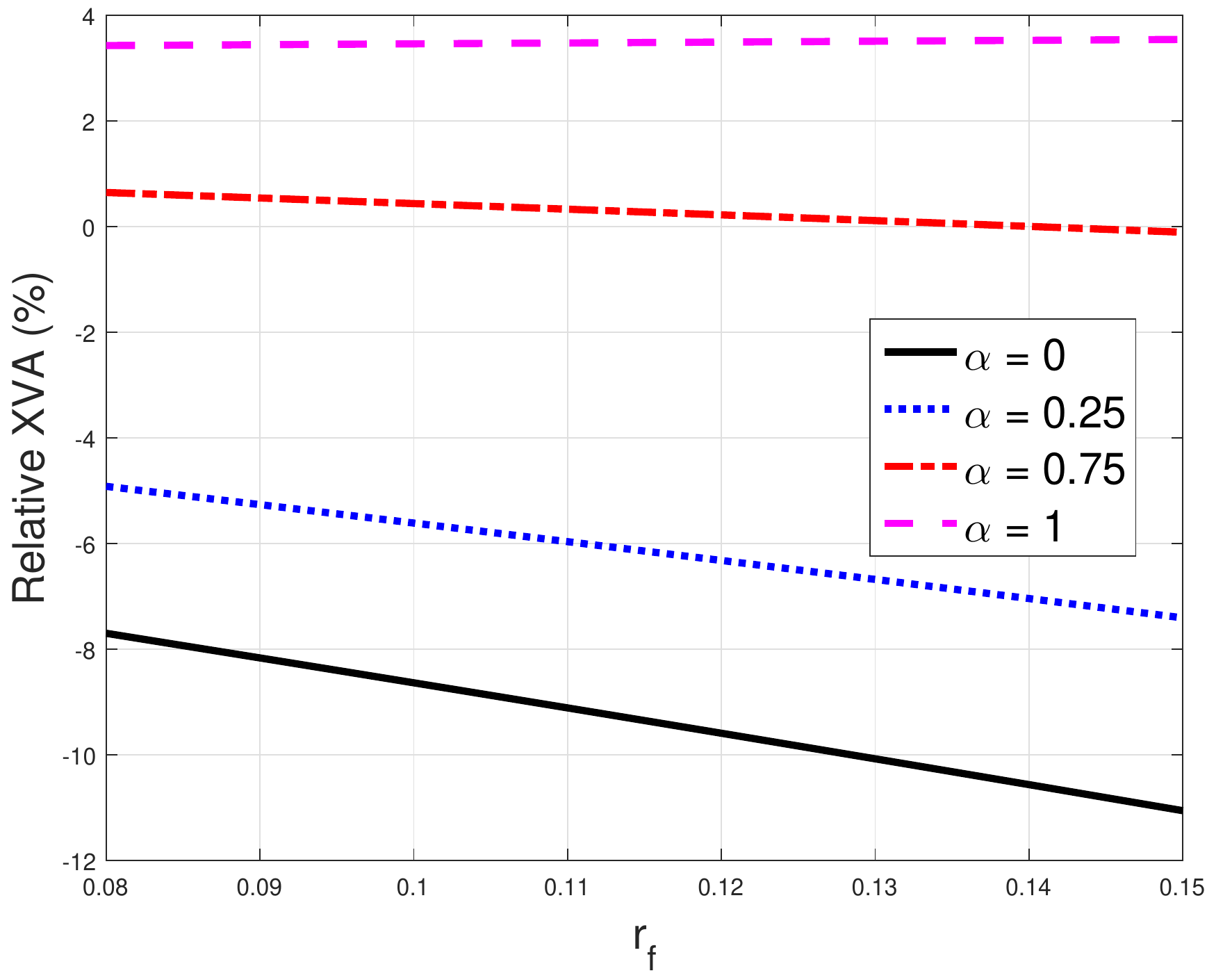}
      \includegraphics[width=6.6cm]{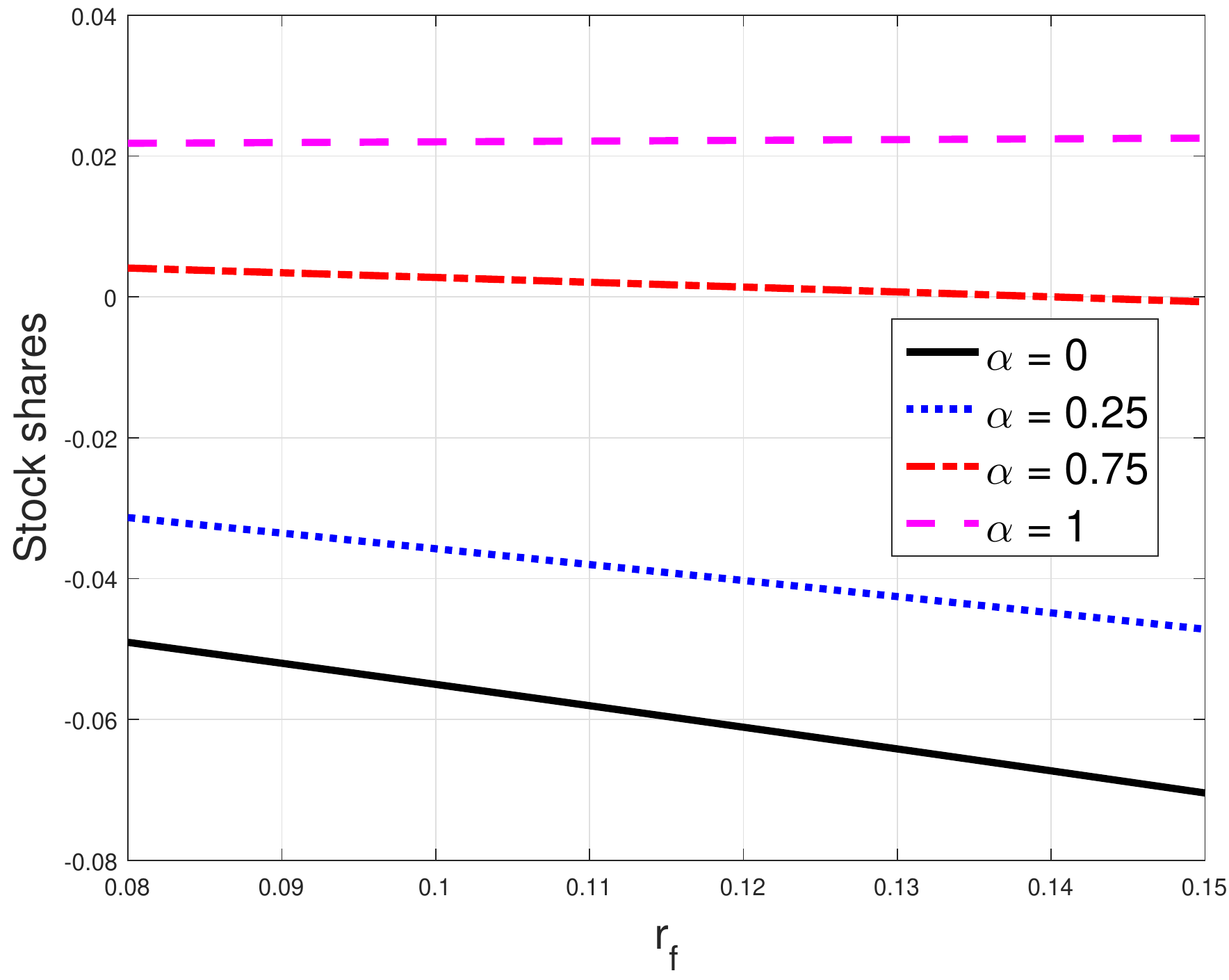}
      \includegraphics[width=6.6cm]{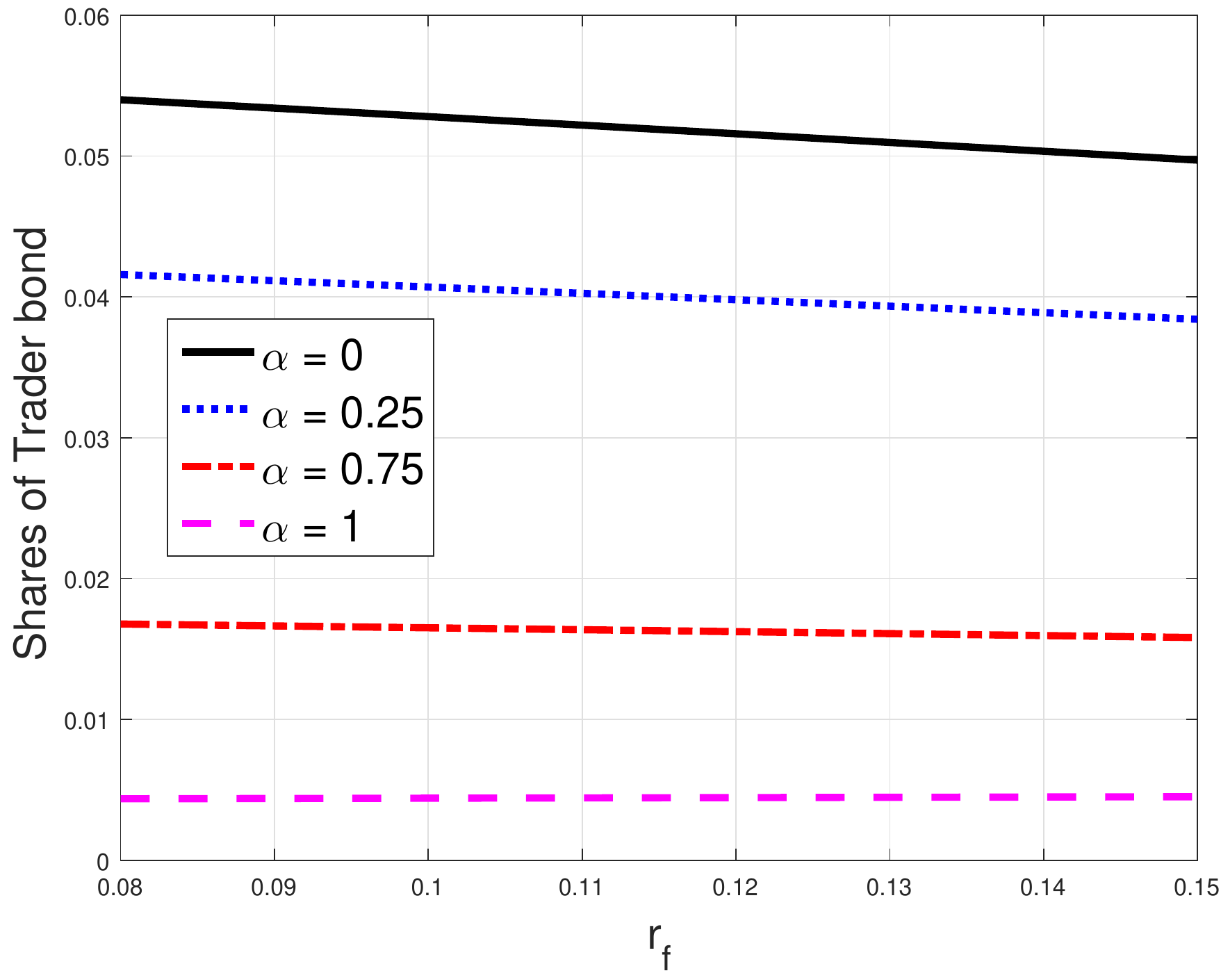}
      \includegraphics[width=6.6cm]{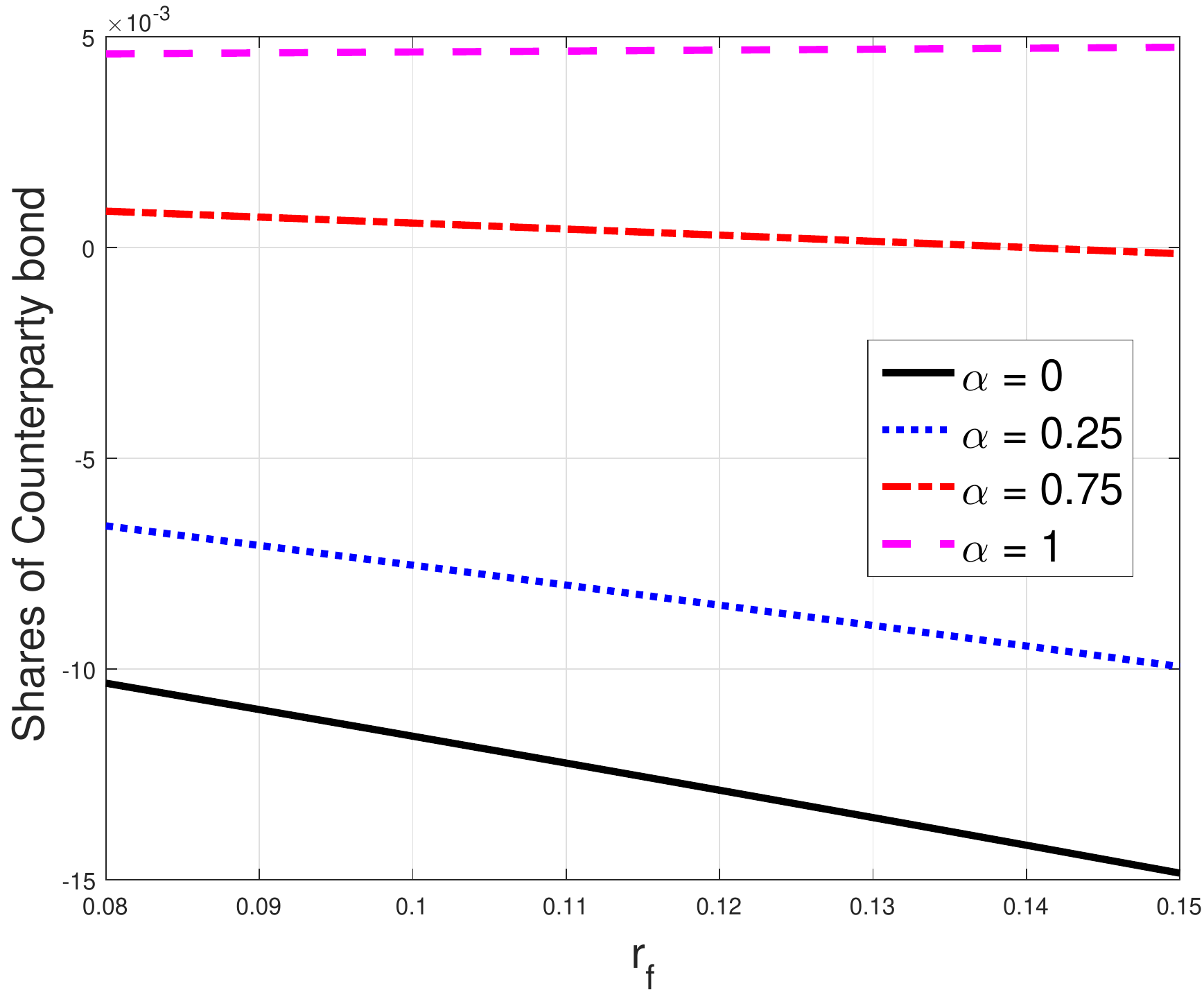}
    \caption{Top left: $\XVA$ as a function of $r_f$ for different $\alpha$. Top right: Number of stock shares in the replication strategy. Bottom left: Number of trader bond shares in the replication strategy. Bottom right: Number of counterparty bond shares in the replication strategy. We set $r_r = 0.05$, $r_c = 0.01$, $\sigma = 0.2$, $L_C = 0.5$, $L_I = 0.5$, $\mu_I = 0.16$ and $\mu_C = 0.21$. The claim is an at-the-money European call option with maturity $T=1$.}
  \label{fig:defXVA}
  \end{figure}

  \begin{figure}[ht!]
    \centering
      \includegraphics[width=6.6cm]{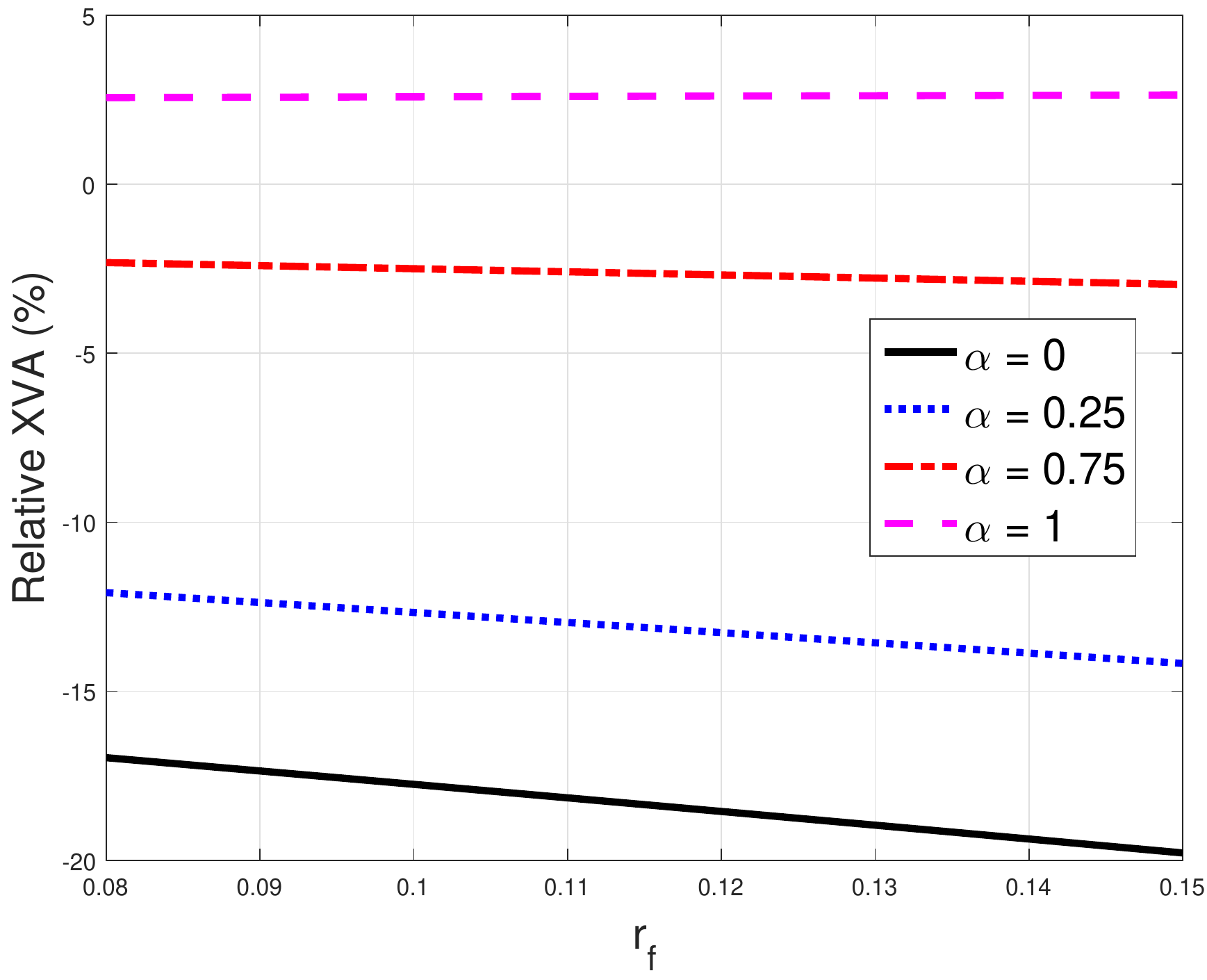}
      \includegraphics[width=6.6cm]{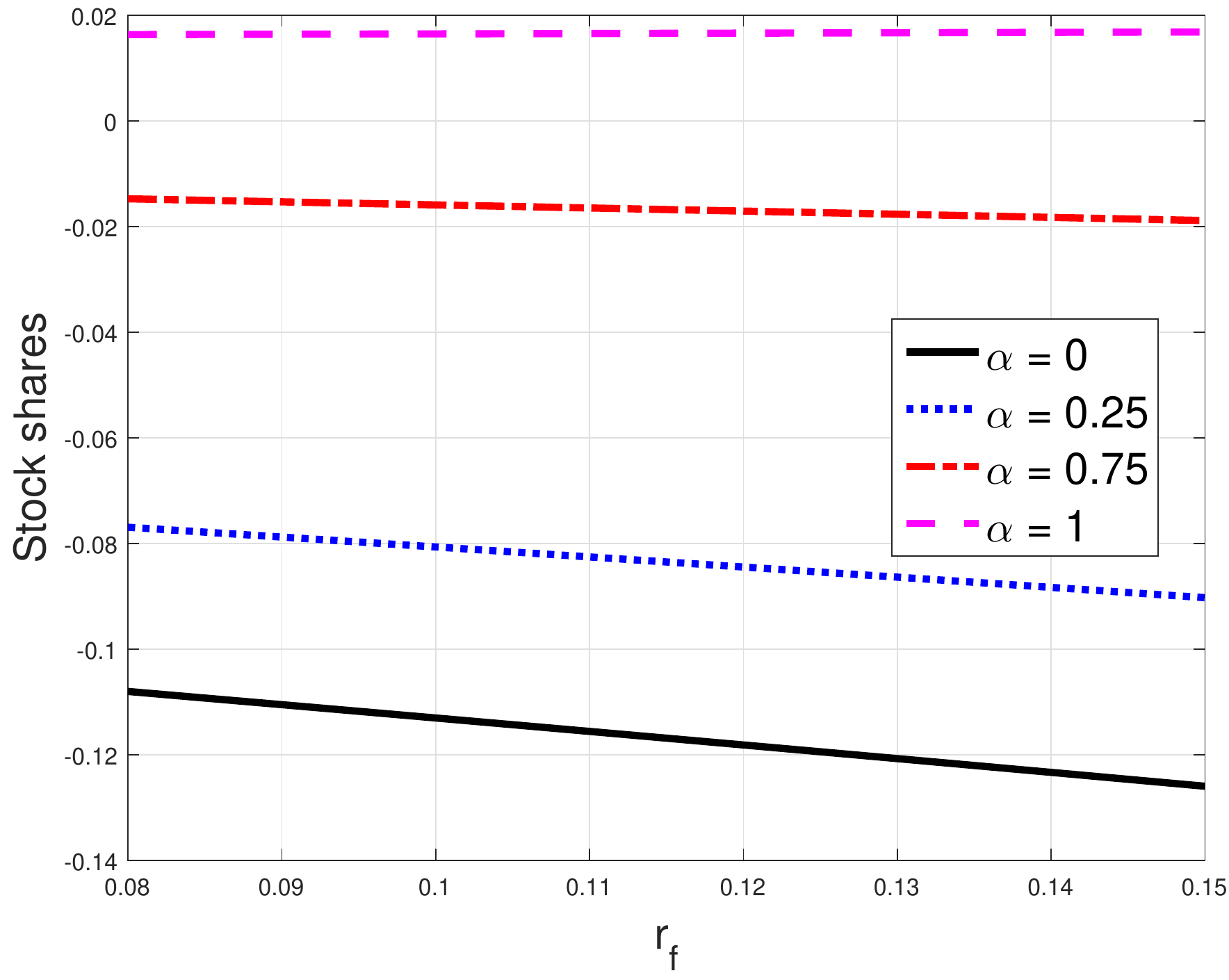}
      \includegraphics[width=6.6cm]{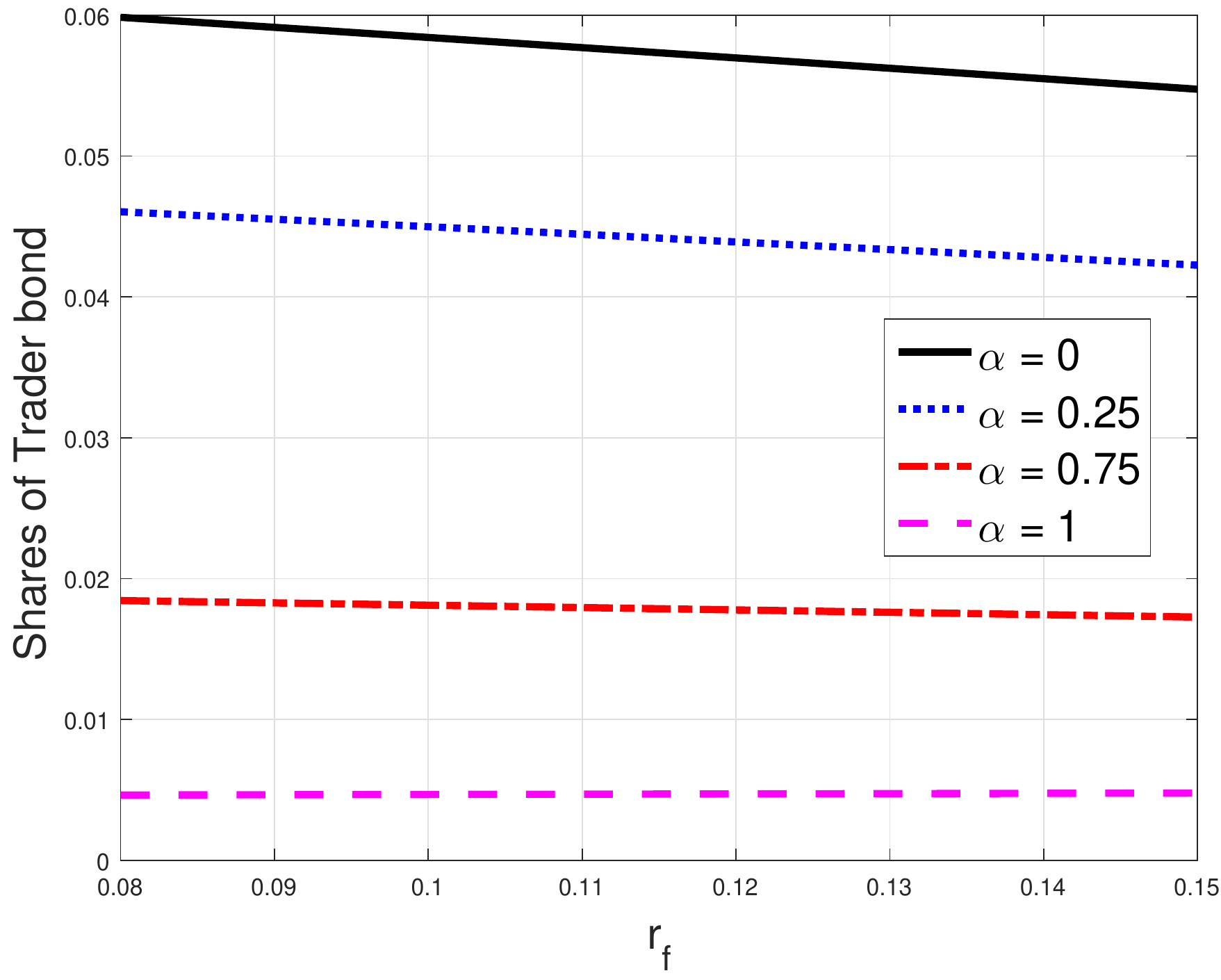}
      \includegraphics[width=6.6cm]{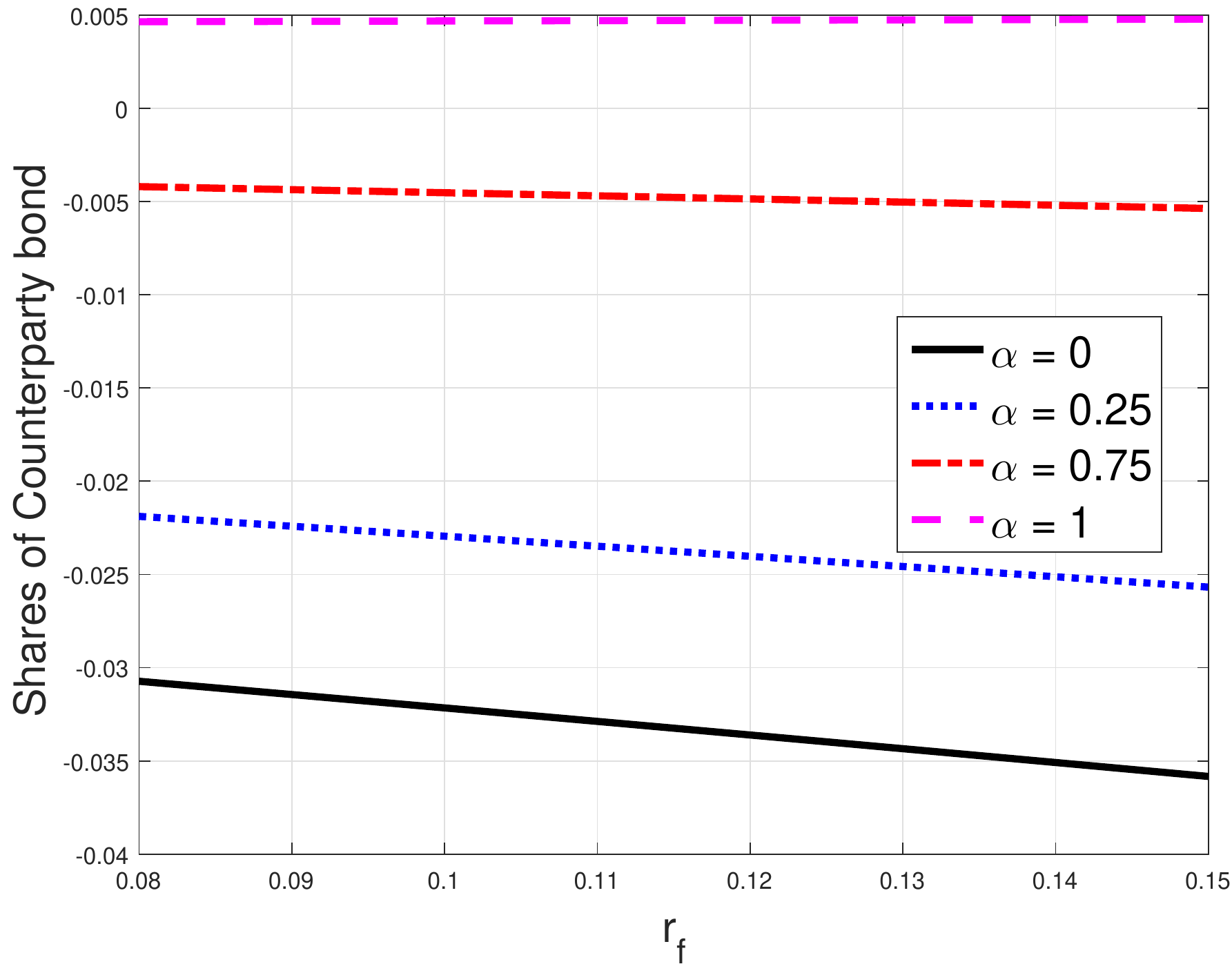}
    \caption{Top left: $\XVA$ as a function of $r_f$ for different $\alpha$. Top right: Number of stock shares in the replication strategy. Bottom left: Number of trader bond shares in the replication strategy. Bottom right: Number of counterparty bond shares in the replication strategy. We set $r_r = 0.05$, $r_c = 0.01$, $\sigma = 0.2$, $L_C = 0.5$, $L_I = 0.5$, $\mu_I = 0.51$ and $\mu_C = 0.51$. The claim is an at-the-money European call option with maturity $T=1$.}
  \label{fig:defXVArisk}
  \end{figure}

\section{Numerical analysis} \label{sec:numanalysis}

We conduct a comparative statics analysis to analyze the dependence of $\XVA$ and portfolio replicating strategies on funding rates, bond returns, and collateralization levels in the general nonlinear setting of section~\ref{sec:BSDEform}. We consider the relative XVA, i.e., express the adjustment as a percentage of the price $\hat{V}_t$ of the claim, given by $\XVA_t^{\buysell} / \hat{V}_t$. The claim is chosen to be a European-style call option on the stock security, i.e., $\Phi(x) = (x-K)^+$. We consider one at-the-money option with $S_0=K=1$ maturing at $T=1$. In order to focus on the impact of funding costs (which in practice is the most relevant) and separate it from additional contributions to the $\XVA$ coming from asymmetries in collateral and repo rates, we set $\rcp = \rcm = 0.01$ and $\rrp = \rrm = 0.05$, unless specified otherwise.

We use the following benchmark parameters: $\sigma = 0.2$, $\rfp = 0.05$, $\rfm = 0.08$, $r_D = 0.01$,  $\mu_I = 0.21$, $\mu_C = 0.16$, $L_I = L_C = 0.5$, and $\alpha = 0.9$. We compute the numerical solution of the PDE using a finite difference Crank-Nicholson scheme. The main findings of our analysis are discussed in the sequel.

\paragraph{Higher funding rate increases the width of the no-arbitrage band.}

As the derivative contract specifies both the price of the option and the level of collateralization of the deal, the no-arbitrage region appears as a (two-dimensional) band in $\XVA$ and $\alpha$ rather than as a (one-dimensional) interval in $\XVA$ only. Figure \ref{fig:alpharfm} displays the no-arbitrage band, whose width is increasing in the funding rate $\rfm$. As $\alpha$ gets higher, the band noticeably shrinks reaching its minimum around $\alpha = 80\%$ before widening again. Notice that buyer{'}s and seller{'}s $\XVA$ do not have a symmetric behavior. This can be better understood by analyzing the dependence of the band on the collateralization level $\alpha$ in Figure \ref{fig:alpharfm}. If $\alpha$ is not too high ($\alpha < 0.5$), the widening of the no-arbitrage band with respect to the funding rate $\rfm$ is due to decreasing buyer{'}s XVA. On the other hand, if $\alpha$ is high the buyer{'}s $\XVA$ is insensitive to changes in $\rfm$, whereas the seller{'}s $\XVA$ increases with $\rfm$, contributing to widen the no-arbitrage band. This is further supported by the numerical values reported in Table \ref{tab:Tablealrfm}. When $\alpha < 0.5$, the position in the funding account for the seller{'}s $\XVA$ is long and the same regardless of the funding rate $r_f^-$. On the other hand, the size of the long position for the buyer{'}s $\XVA$ increases in $\rfm$. In presence of full collateralization, i.e., $\alpha=1$, the situation reverses. The position in the funding account for the buyer{'}s $\XVA$ is short and stays constant with respect to $\rfm$. Vice versa, for the seller{'}s $\XVA$ the size of the short position increases in $\rfm$.

If $\alpha$ is high, the trader will have to post more collateral and consequently reduce the cash available to finance his replicating strategy. He will then have to borrow more from the funding desk, resulting in higher funding costs. This drives up both the seller{'}s $\XVA$ and the number of shares of stocks and bonds needed for the replication strategy.

  \begin{figure}[ht!]
    \centering
      \includegraphics[width=6.6cm]{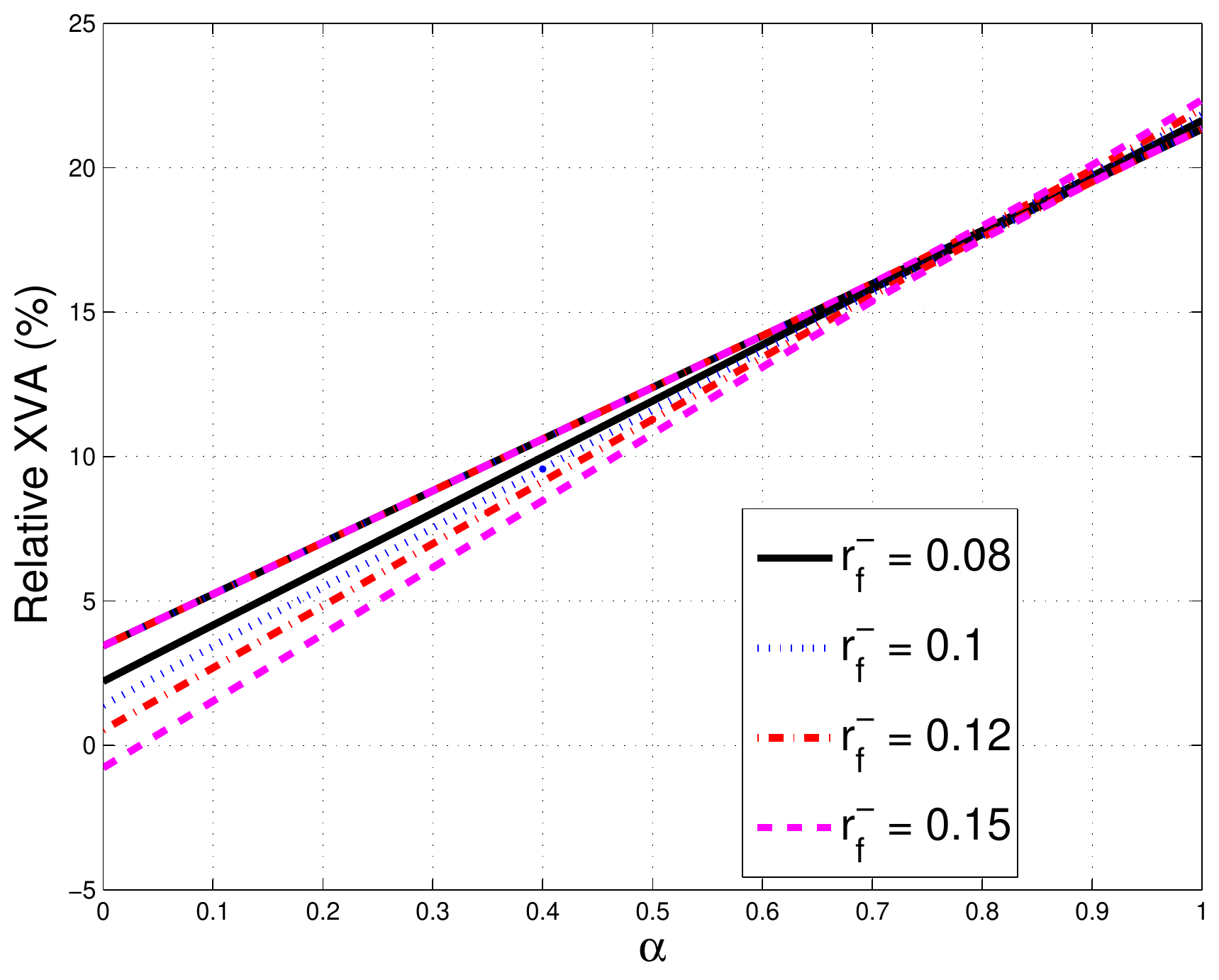}
      \includegraphics[width=6.6cm]{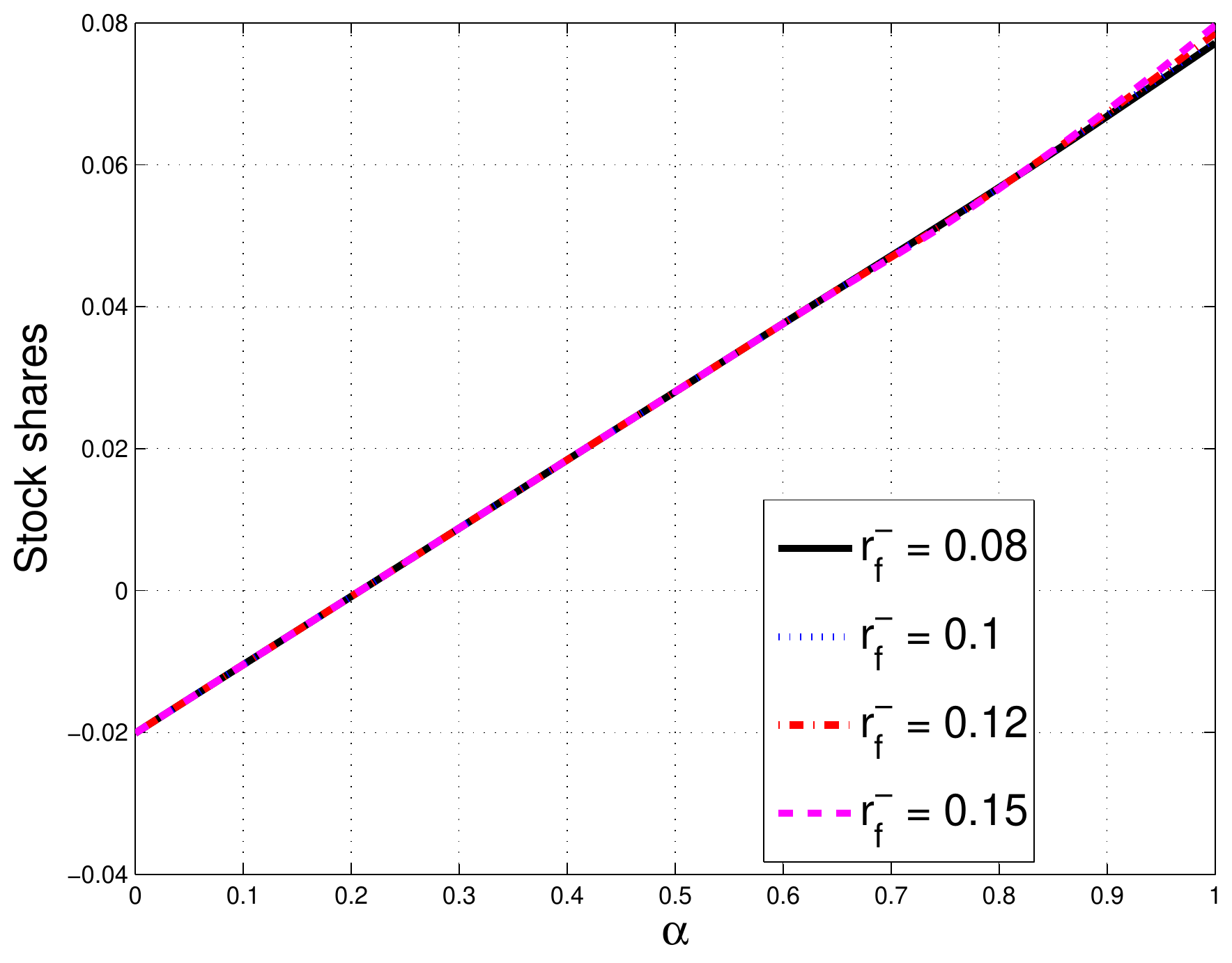}
      \includegraphics[width=6.6cm]{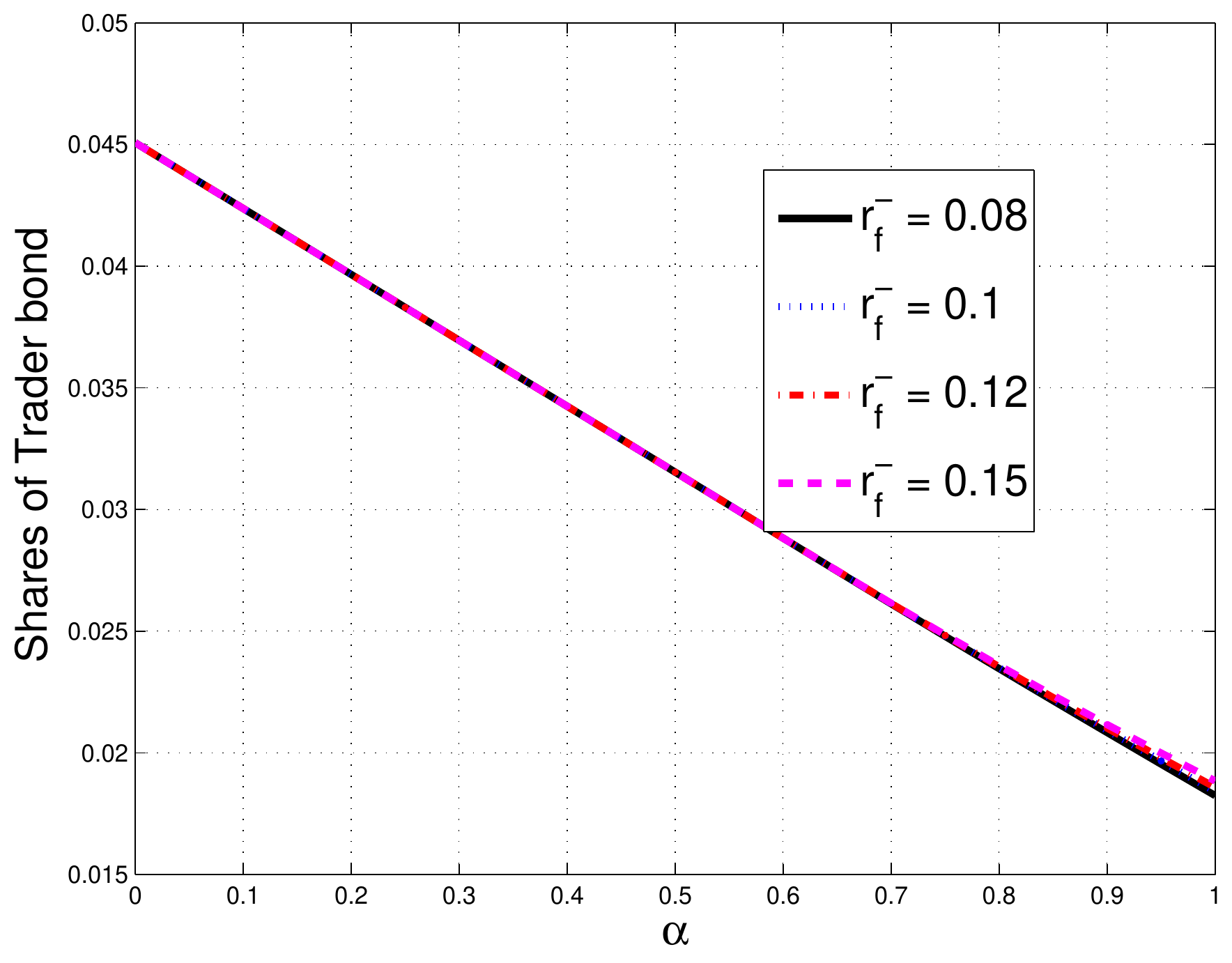}
      \includegraphics[width=6.6cm]{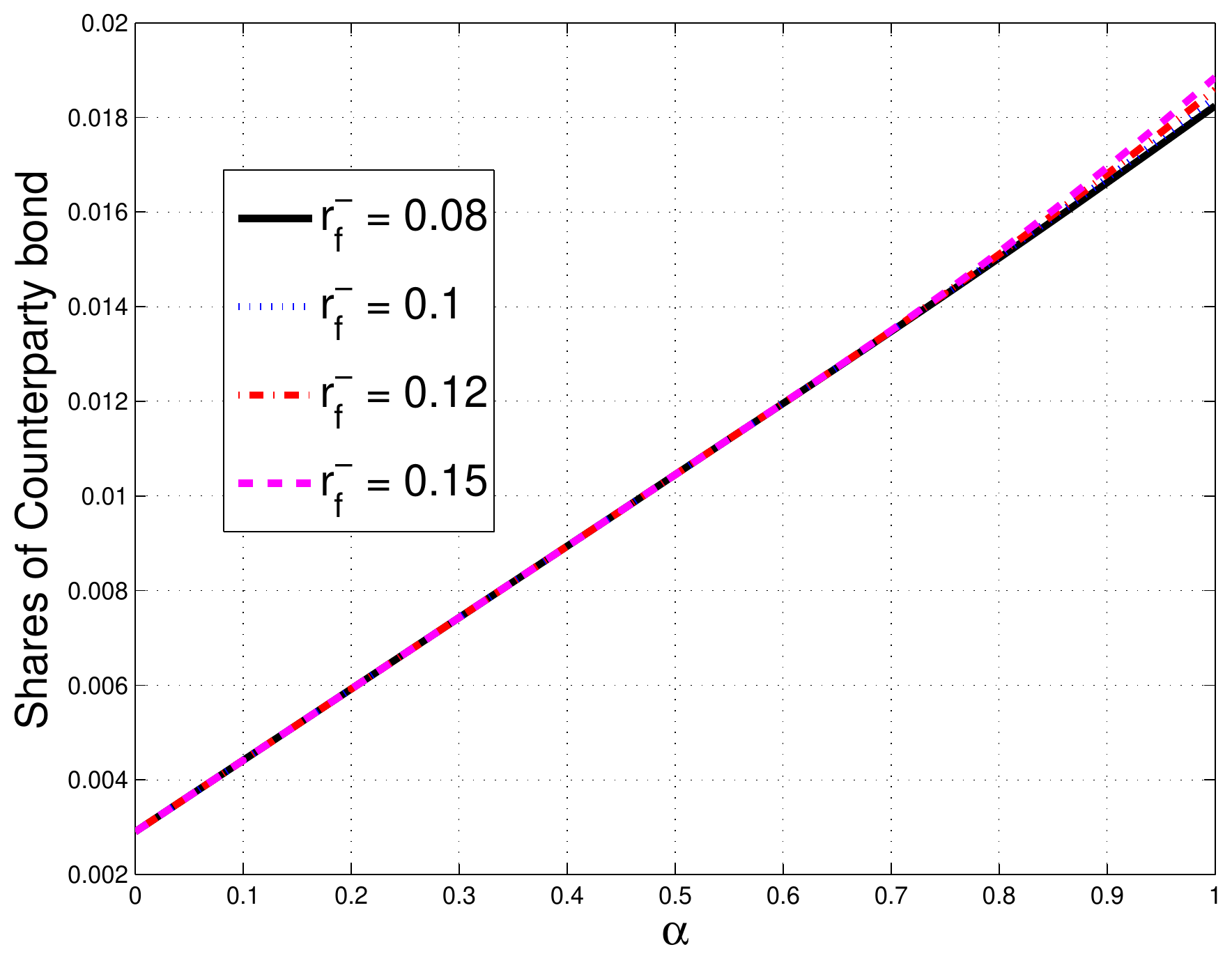}
    \caption{Top left: Buyer{'}s and seller{'}s $\XVA$ as a function of $\alpha$ for different values of $\rfm$. The seller's lies above the buyer{'}s $\XVA$ and the same line style is used for
both. Top right: Number of stock shares in the replication strategy. Bottom left: Number of trader{'}s bond shares in the replication strategy. Bottom right: Number of counterparty{'}s
bond shares in the replication strategy. The strategies refer to the portfolio replicating the seller{'}s XVA.}
  \label{fig:alpharfm}
  \end{figure}

  \begin{table}[hpt]
    \centering
      \begin{tabular}{|c|c|c|c|}
	\hline
	$\alpha$ & $\rfm$ & {\text Seller{'}s XVA: funding account} (\$) & {\text Buyer{'}s XVA: funding account} (\$)\\
	\hline
	\hline
	0 & 0.08 & 0.0039 & 0.0403 \\
	\hline
	0 & 0.15 & 0.0039 &  0.0428 \\
	\hline
	0.25 & 0.08 & 0.0249 & 0.0257 \\
	\hline
	0.25 & 0.15 & 0.0249 & 0.0274 \\
	\hline
	0.75 & 0.08 & -0.0037 & -0.0036 \\
	\hline
	0.75 & 0.15 & -0.0038 & -0.0033 \\
	\hline
	1 & 0.08 & -0.0182 & -0.018 \\
	\hline
	1 & 0.15 & -0.0193 & -0.018 \\
	\hline
      \end{tabular}
    \caption{The columns give the dollar position in the funding account corresponding to the replicating strategies of seller{'}s $\XVA$ and buyer{'}s XVA.}
  \label{tab:Tablealrfm}
  \end{table}
Table \ref{tab:Tablerf} also indicates that the funding position corresponding to the seller{'}s  $\XVA$ is negative, but low, hence explaining why the buyer{'}s $\XVA$ is only mildly sensitive
to changes in the funding rate $\rfm$.

\paragraph{The no-arbitrage interval widens as the difference between repo rates increases.}
Figure \ref{fig:rrmrrp} analyzes how the width of the no-arbitrage interval varies, when the difference between borrowing and lending repo rates increases. For a fixed repo lending rate, the seller's $\XVA$ increases in the repo borrowing rate. This is because the hedger needs to replicate a long position in the claim and hence incurs higher funding costs when he
purchases stocks (cash driven repo activity, see also Figure \ref{fig:cashdriven}). In contrast, the buyer's $\XVA$ is not sensitive to the repo borrowing rate. In this case, the hedger needs to replicate a short position in the claim and hence implements a security driven repo activity which only depends on the repo lending rate $\rrp$ (see also Figure \ref{fig:secdriven}). If the repo lending rate gets higher, the hedger receives larger proceeds from the repo market and hence he is willing to purchase the claim at a higher price as he gets more income from his short selling strategy, resulting in an increase of buyer's XVA. This is also reflected in the right panel of Figure \ref{fig:rrmrrp}, suggesting that the trader who wants to hedge his long position shorts a larger number of shares as $\rrp$ gets higher in order to benefit from the higher rate received from the repo market.

\begin{figure}[ht!]
    \centering
      \includegraphics[width=6.6cm]{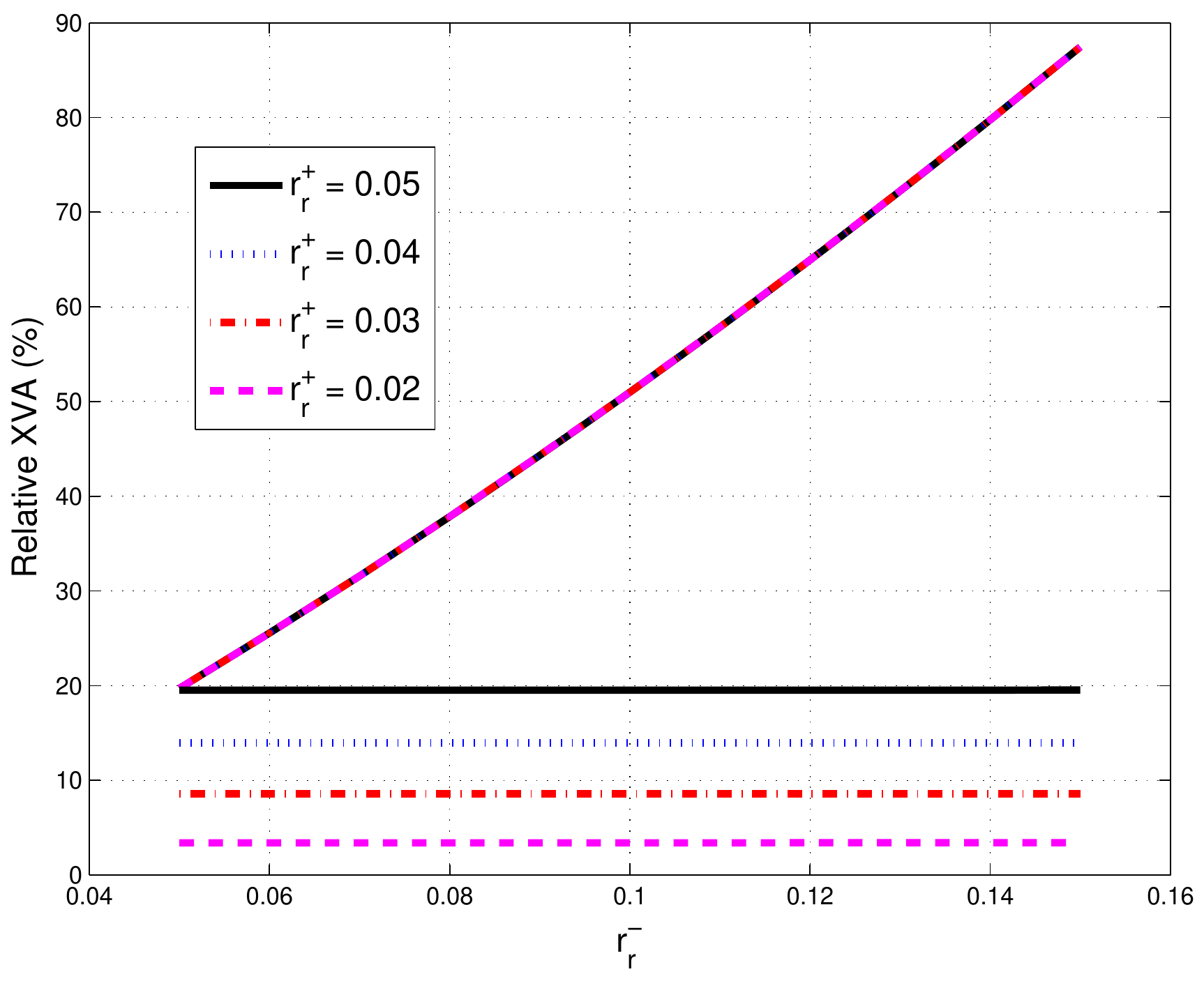}
      \includegraphics[width=6.6cm]{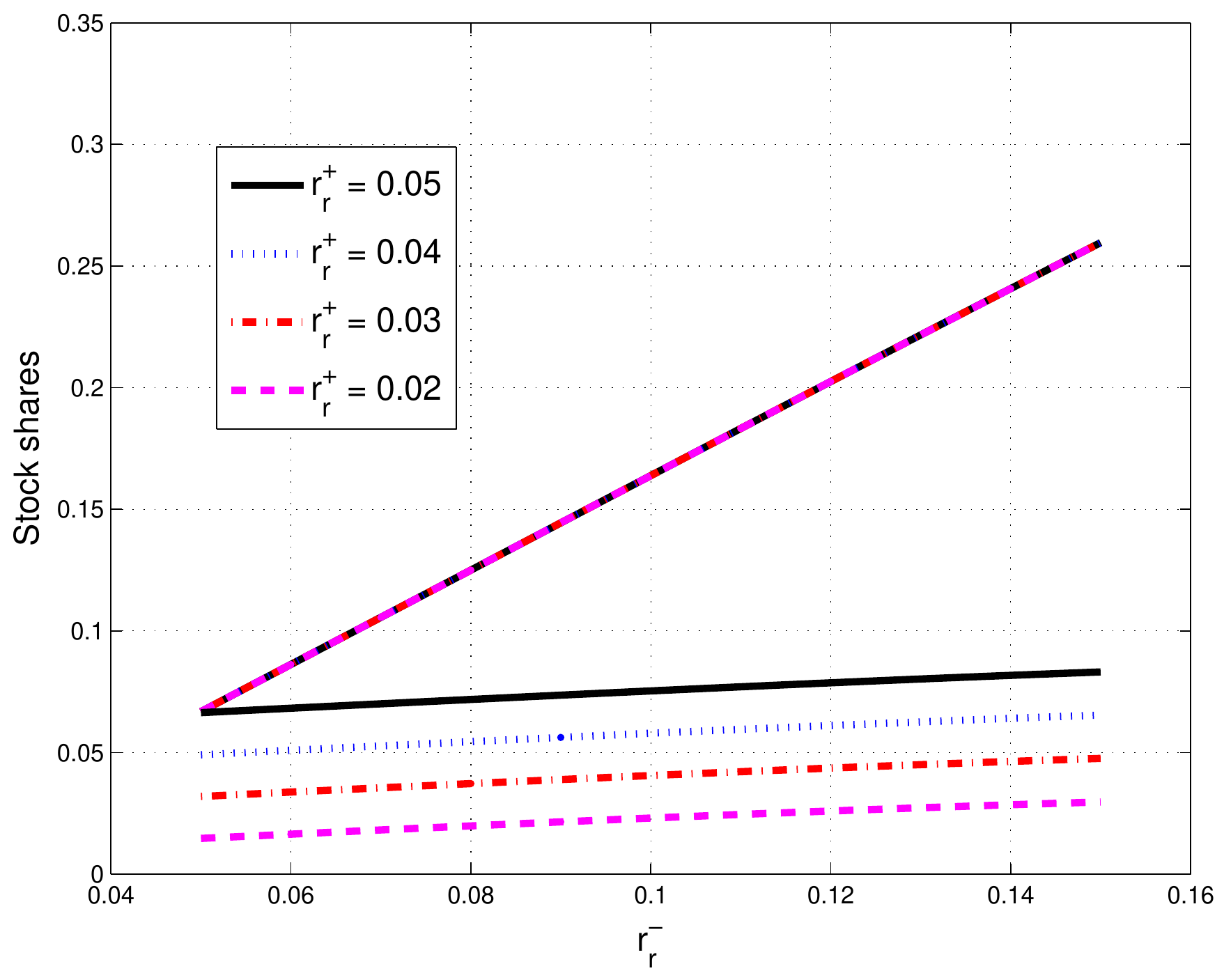}
    \caption{Left: Buyer{'}s and seller{'}s $\XVA$  as a function of $\rrm$ for different values of $r_r^{+}$. Right: Number of stock shares in the strategy replicating the seller{'}s $\XVA$ (top) and the buyer{'}s $\XVA$ (bottom).}
  \label{fig:rrmrrp}
  \end{figure}

\paragraph{Higher collateralization increases portfolio holdings.}

  \begin{figure}[ht!]
    \centering
      \includegraphics[width=6.6cm]{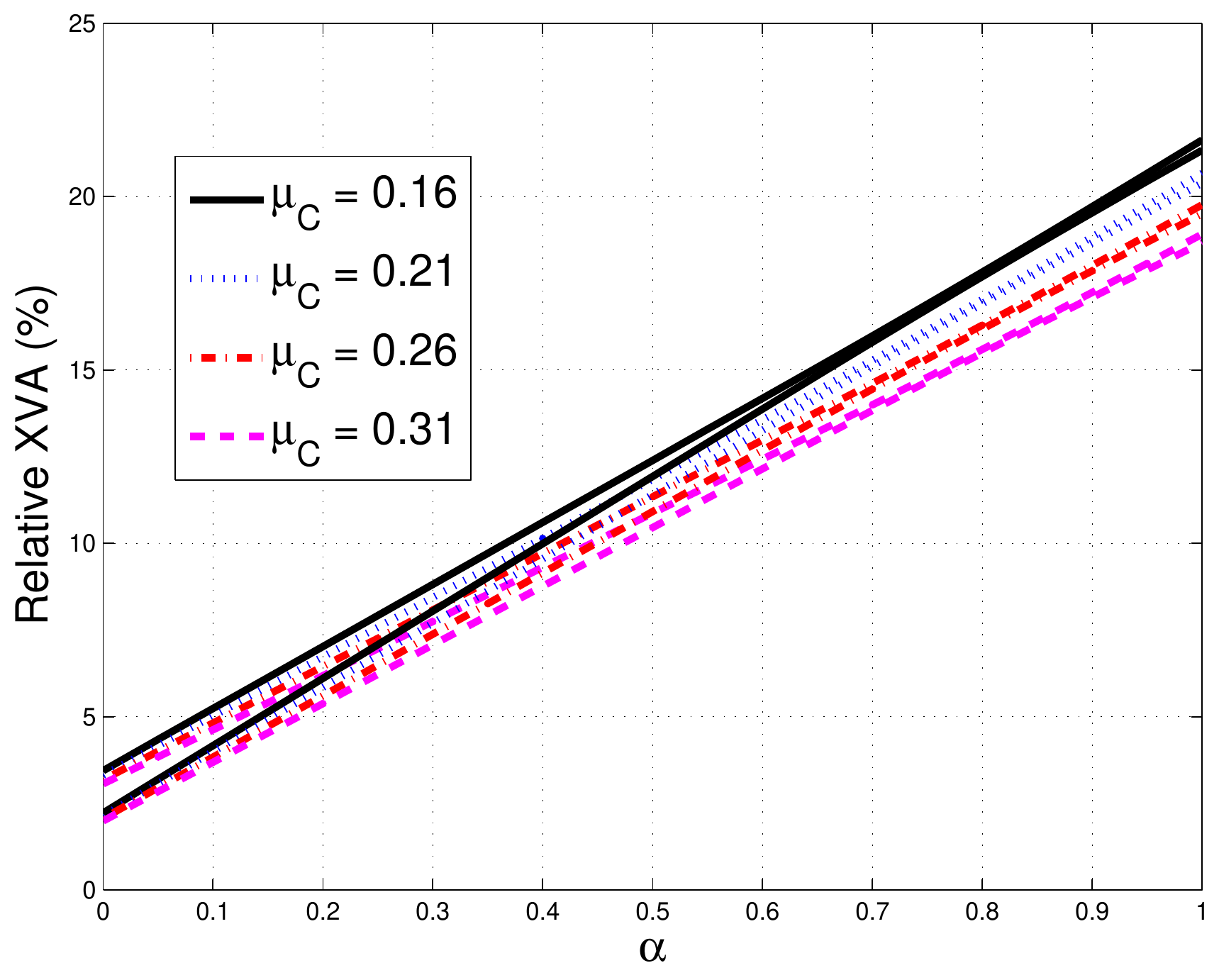}
      \includegraphics[width=6.6cm]{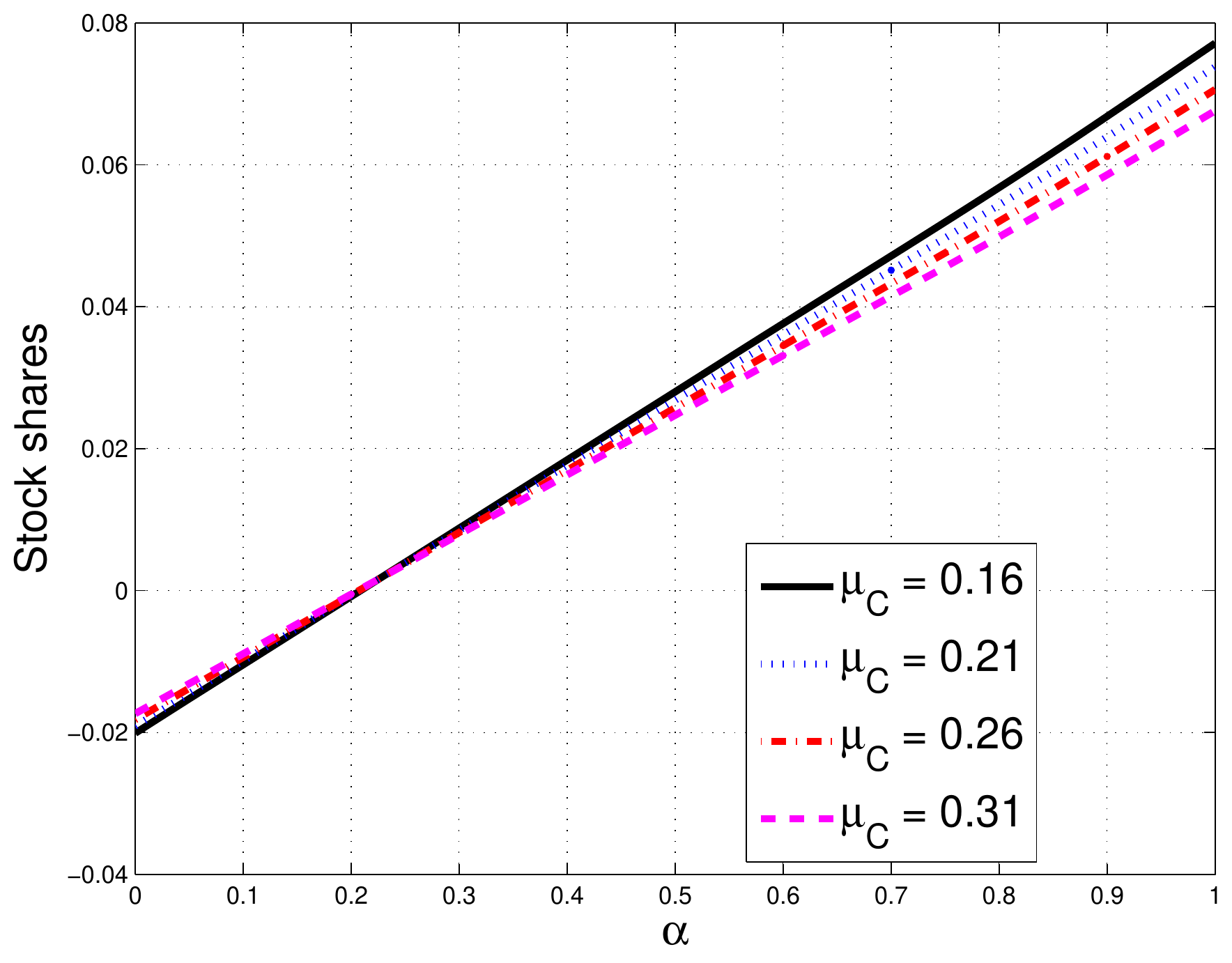}
      \includegraphics[width=6.6cm]{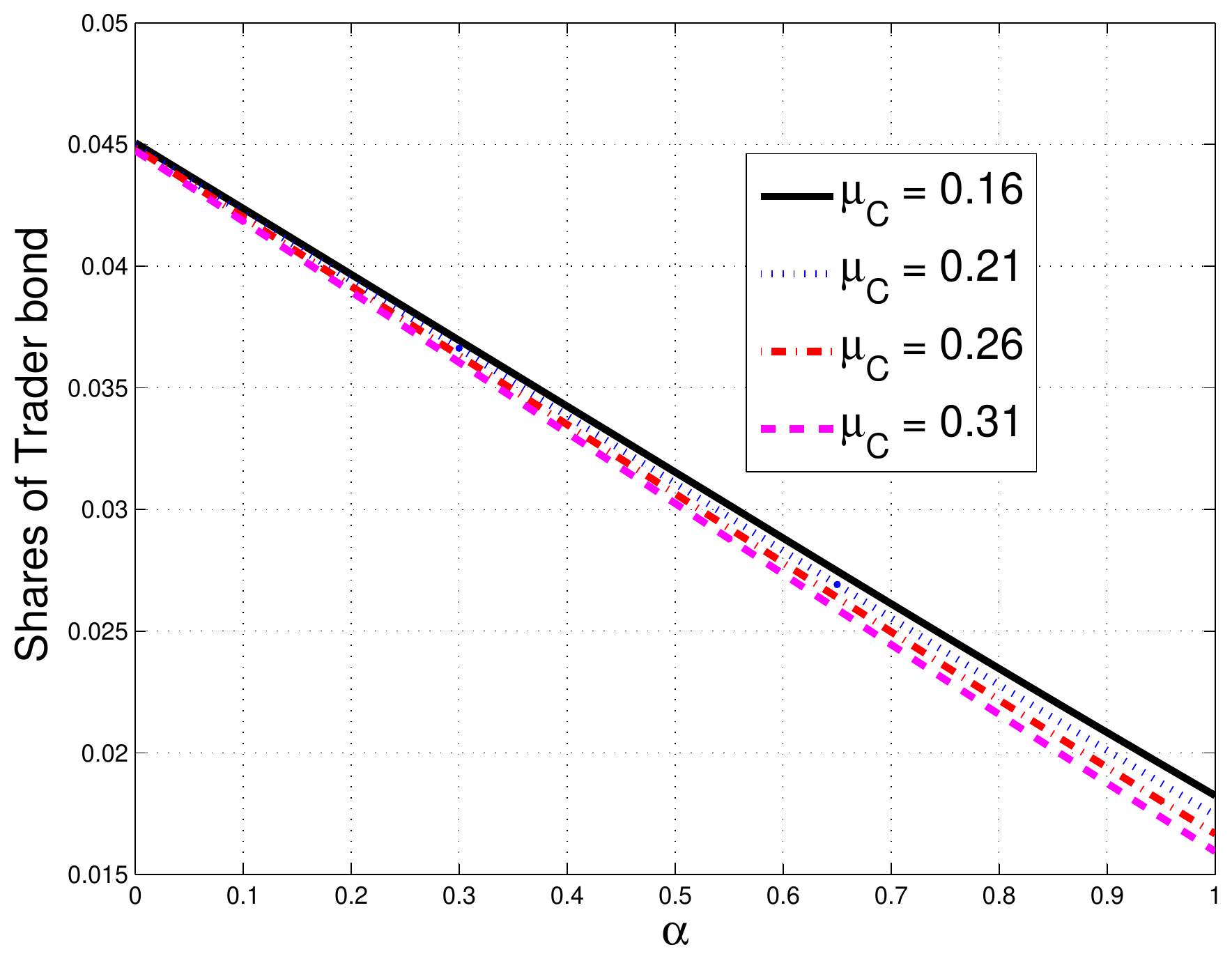}
      \includegraphics[width=6.6cm]{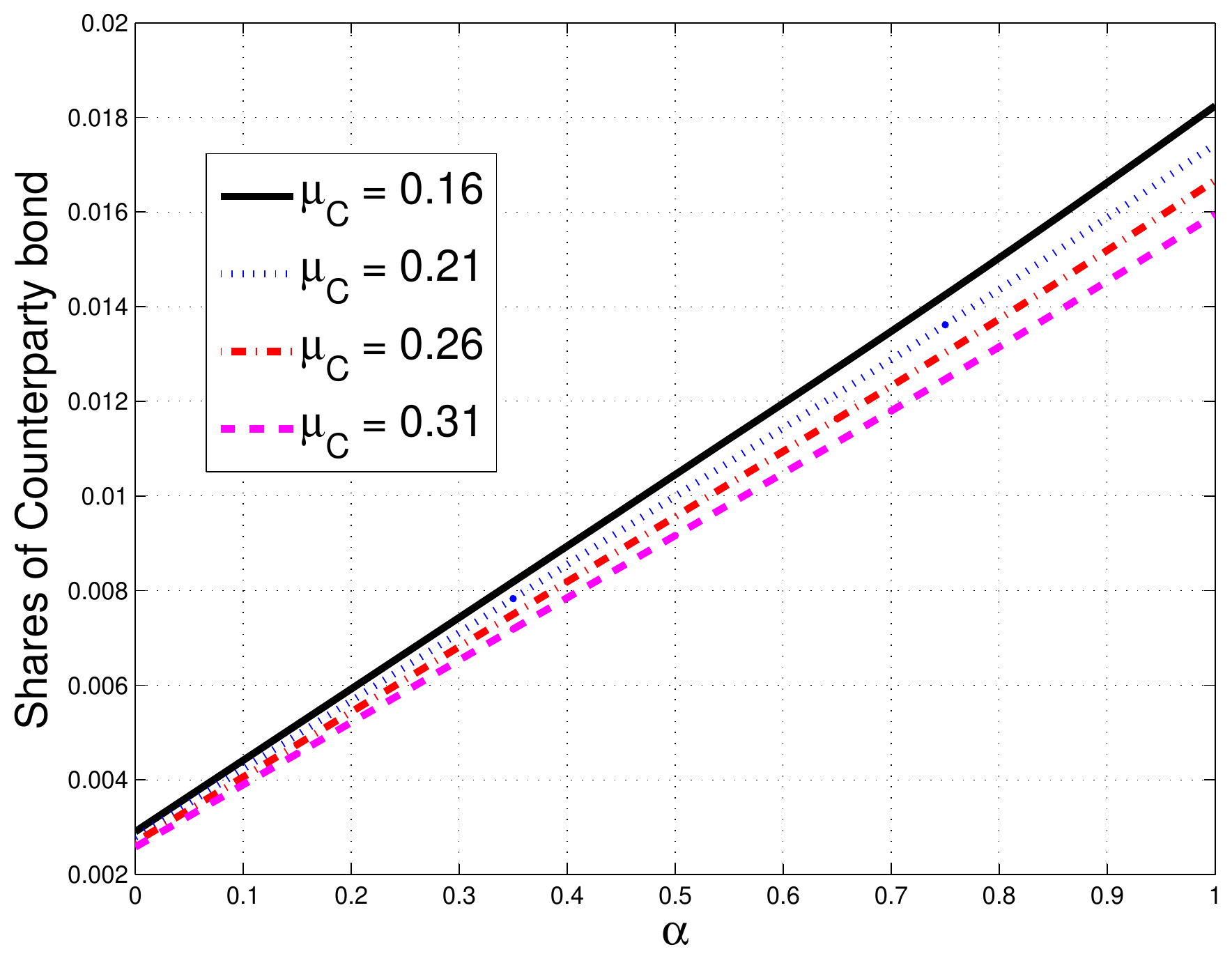}
    \caption{Top left: Buyer{'}s and seller{'}s $\XVA$  as a function of $\alpha$ for different values of $\mu_C$. Top right: Number of stock shares in the replication strategy. Bottom left: Number of trader{'}s bond shares in the replication strategy. Bottom right: Number of counterparty{'}s bond shares in the replication strategy. The strategies refer
        to the portfolio replicating the seller{'}s XVA.}
  \label{fig:alphahc}
  \end{figure}

As the collateralization level $\alpha$ increases, the seller{'}s $\XVA$ increases. This happens because the funding costs incurred from replicating the collateralized position become higher. Because the trader needs to construct a portfolio replicating a larger position, he must take more risk. He achieves this by increasing the number of shares of stock and bonds underwritten by the counterparty. Moreover, he reduces the purchases of his own bonds given that he needs to replicate a smaller residual $\DVA$ as the position becomes more collateralized (the size of the downward jump to the closeout value at the trader's default time is smaller). This behavior is evident in the plot of Figure \ref{fig:alphahc}.

\paragraph{The width of the no-arbitrage band is insensitive to bond returns. }
Figure~\ref{fig:alphahc} shows that both seller{'}s and buyers{'}s $\XVA$ decrease, if the return rate on the counterparty bond increases. When $\alpha$ is low, the two quantities drop by nearly the same amount and the width of the no-arbitrage band is unaffected. As $\alpha$ gets larger, the seller{'}s $\XVA$ decreases faster relative to the buyer{'}s $\XVA$ and the two quantities almost coincide when $\alpha=1$.

  \begin{table}[hpt]
    \centering
      \begin{tabular}{|c|c|c|}
	\hline
	$\rfm$ & {\text Seller{'}s XVA: funding} (\$) & {\text Buyer{'}s XVA: funding account} (\$)\\
	\hline
	\hline
	0.08 & -0.0124 & -0.0123 \\
	\hline
	0.1 &  -0.0125 & -0.0122 \\
	\hline
	0.15 & -0.0127 & -0.0122 \\
	\hline
	0.2 & -0.013 & -0.0122  \\
	\hline
      \end{tabular}
    \caption{The columns give the dollar position in the funding account corresponding to the replicating strategies of seller{'}s $\XVA$ and buyer{'}s XVA. We set $\mu_C = 0.16$.}
  \label{tab:Tablerf}
  \end{table}

Consistently with Figure \ref{fig:alphahc}, Figure \ref{fig:hcalpha} shows that the seller{'}s $\XVA$ decreases when the return $\mu_C$ on the counterparty bond increases. This happens
because, keeping the historical default probability constant, the trader would then earn a higher premium from his long position in counterparty{'}s bonds
(see also bottom panels of Figure \ref{fig:hcalpha}). Such a gain dominates over the funding costs incurred when replicating a larger closeout position
(Eq.~\eqref{eq:theta} indicates that the closeout payment increases to the risk-free payoff $\hat{V}$ as $h^{\Qxx}_C$ increases, and $\mu_C = h^{\Qxx}_C + r_D$). Altogether, this means that the funding costs of the investor would be reduced as $\mu_C$ increases.

  \begin{figure}[ht!]
    \centering
      \includegraphics[width=6.6cm]{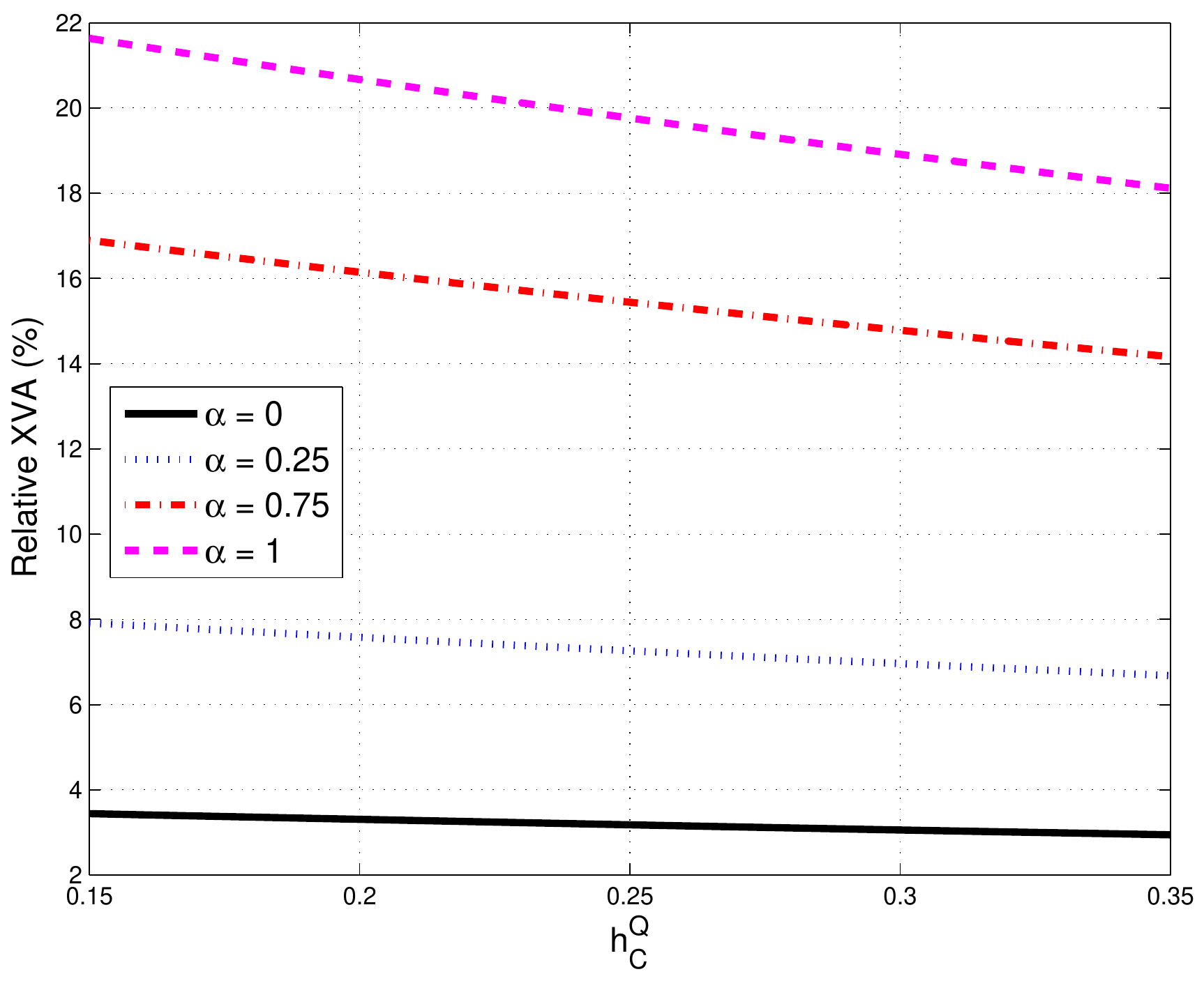}
      \includegraphics[width=6.6cm]{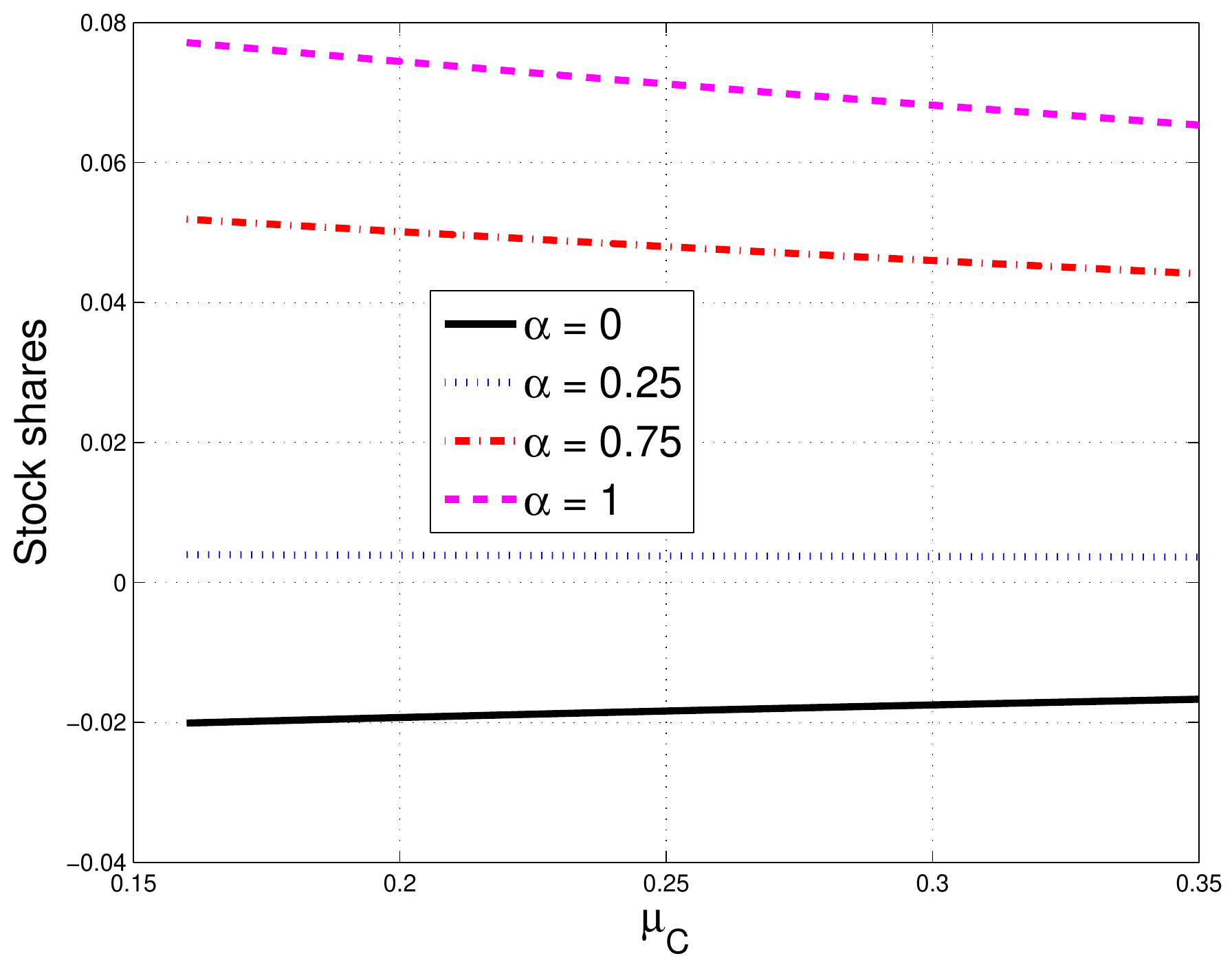}
      \includegraphics[width=6.6cm]{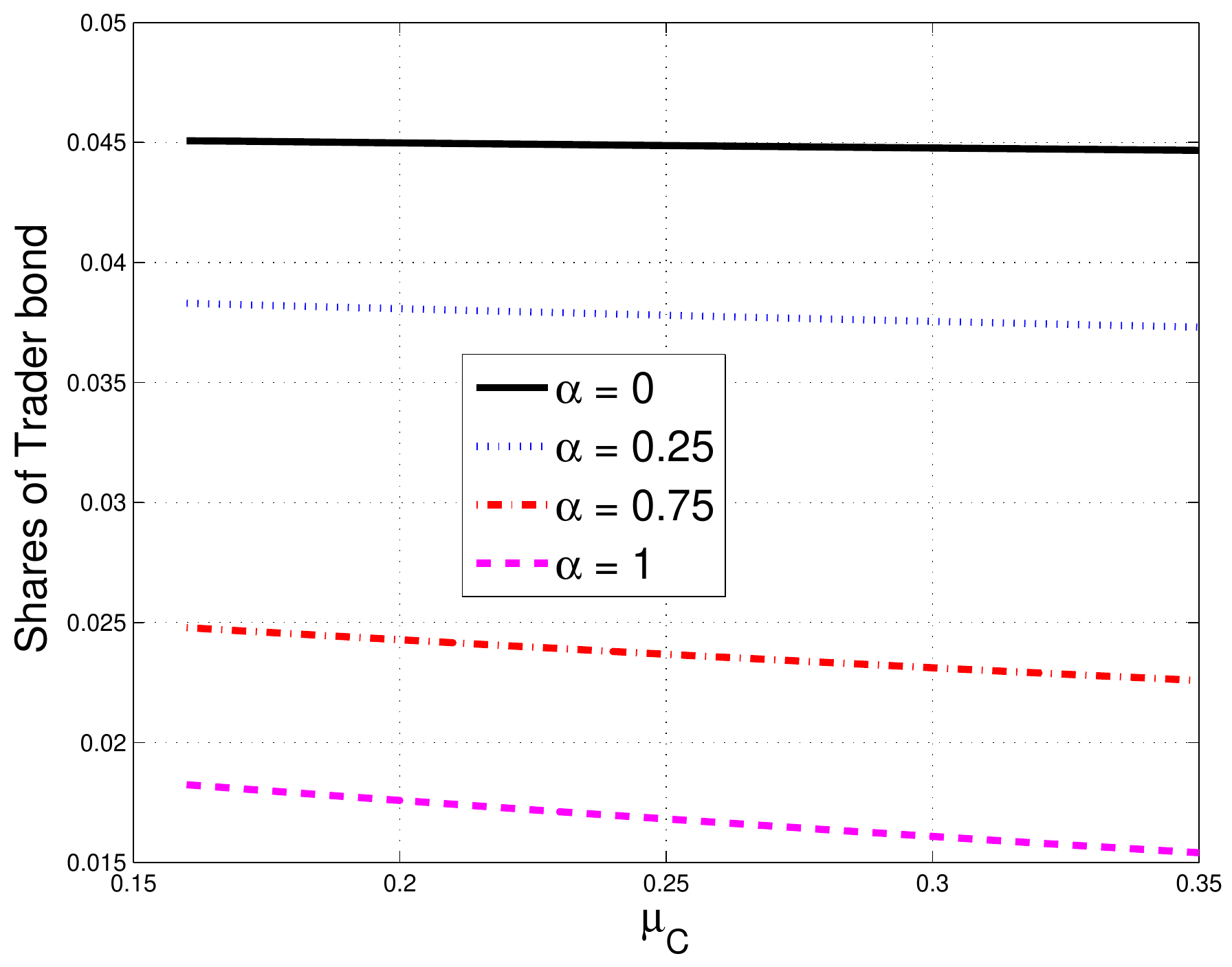}
      \includegraphics[width=6.6cm]{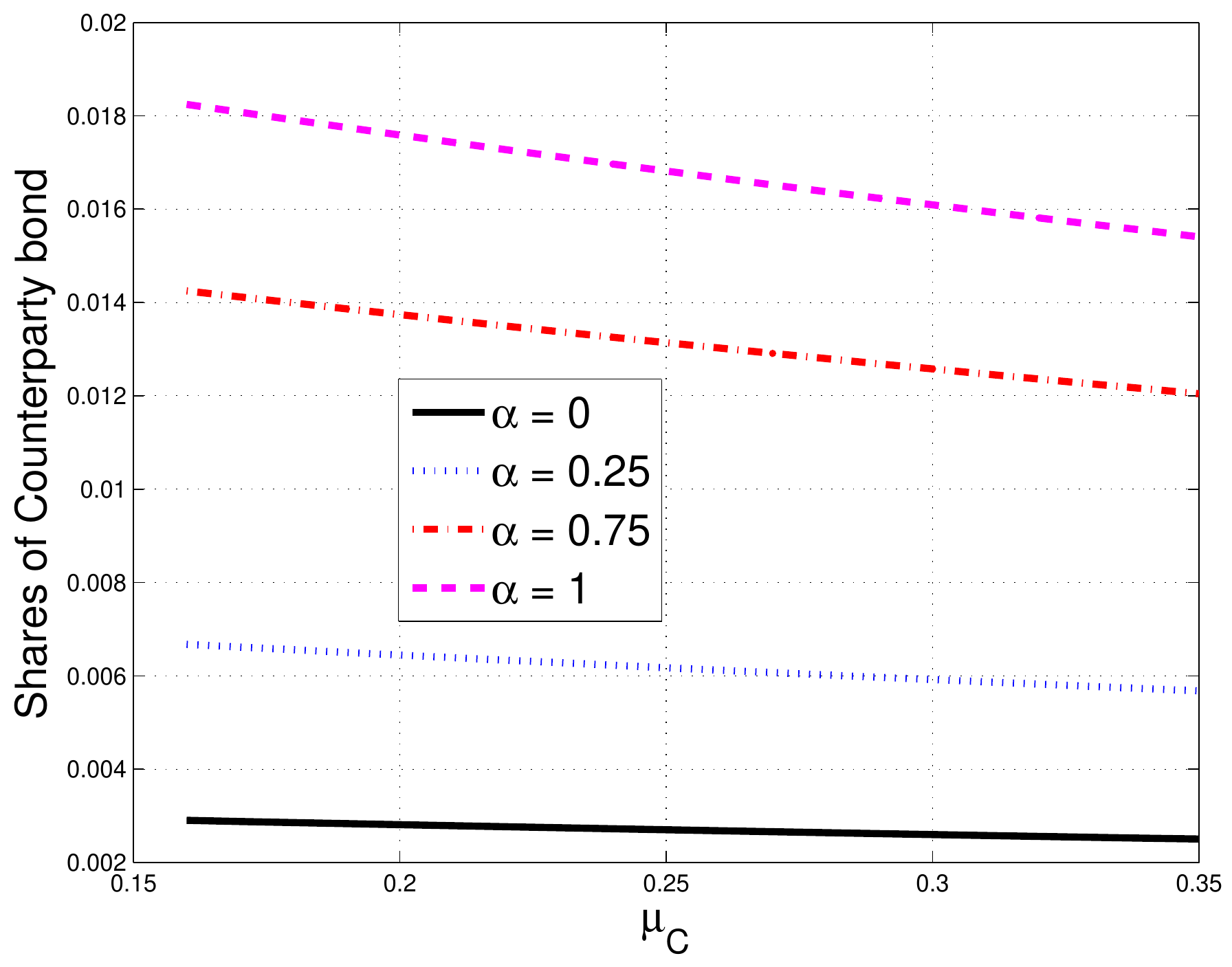}
    \caption{Top left: Seller{'}s $\XVA$ as a function of $\mu_C$ for different values of $\alpha$. Top right: Number of stock shares in the replication strategy. Bottom left: Number of trader{'}s bond shares in the replication strategy. Bottom right: Number of counterpart bond shares in the replication strategy. The replicating portfolio refers to the seller{'}s XVA.}
  \label{fig:hcalpha}
  \end{figure}

\section{Conclusions} \label{sec:conclusions}

We have developed an arbitrage-free valuation framework for the price of a European claim transacted between two risky counterparties. Our analysis takes into account funding spreads to the treasury, the repo market, collateral servicing costs, and counterparty credit risk. We have derived the no-arbitrage band associated with the valuations of buyer and seller{'}s XVA, and shown that it collapses to a unique $\XVA$ in the absence of rate asymmetries. Such a setting corresponds to a generalization of \cite{Piterbarg}{'}s model for which we are able to derive an explicit expression for the XVA.

Using the PDE representation of the BSDE, we have conducted a thorough numerical study analyzing the sensitivity of $\XVA$ and of the corresponding replication strategy to collateralization levels, default risk and spreads between borrowing and lending rates.

\section*{Acknowledgment}

The authors are grateful to two anonymous referees for valuable comments and suggestions which significantly contributed to improve the paper.

\appendix

\section{Proofs of Propositions}

\textbf{Proof of Proposition \ref{thm:PDE-sol}.}
\begin{proof}
 First, notice that by the identity $\XVA_t^{\buysell}\ind_{\{t<\tau\}} = {\check{U}_t^{\buysell}}\ind_{\{t<\tau\}} $ proven in Theorem \ref{thm:reduction} and using the nonlinear Feynman-Kac theorem (cf. \cite[Theorem 3.2]{KarHamMat}), it follows that {$\XVA_t^{\buysell}\ind_{\{t<\tau\}} = \xva^\pm(t,S_t)$}, where {$\xva^{\pm}$} are the unique viscosity solutions of
  \begin{align*}
    -\xva_t^{\pm} - r_D s \xva_s^{\pm} - \frac12\sigma^2 s^2 \xva_{ss}^{\pm} -g^\pm\bigl(t, \xva^{\pm}, \sigma s (\xva_s^{\pm}+\hat v_s\bigr) ;\hat v )& =0,\\
    \xva^{\pm}(T, s) &= 0.
  \end{align*}
In the above expression ${\hat{v} =} \hat{v}(t,s)$ denotes Black-Scholes price at time $t$ of the claim with payoff $\Phi(S_T)$ when $S_t = s$. Applying the change of variables $x = \log{(s)}$ and setting $u^{\pm}(t,x) = {\xva^{\pm}}(t,e^{x} ), {\hat w} (t,x) = \hat v(t, e^{x}), ~x\in\R$, implies the first part of the proposition.

Assume now first that $\Phi$ is continuously differentiable and $\Phi$ and $\Phi'$ are of polynomial growth, i.e.,  $\abs{\Phi(s)} \le C(1+s^n),\abs{\Phi'(s)} \le C(1+s^n)$, for all $s\in\Rplus$ for some $n>0$. Using the transformations
\begin{equation}
\uu^{\pm}(t,x) = \frac{u^{\pm}(t,x)}{1+e^{2nx}}, \quad \bar{ {\hat w} }(t,x) = \frac{ {\hat w}(t,x)}{1+e^{2nx}} , \quad \bar\Phi(x) = \frac{\Phi(e^x)}{1+e^{2nx}},\label{eq:var-transform}
\end{equation}
we note that $\uu^{\pm}$ satisfy the Cauchy problem
\begin{align}\label{eq:PDE_bndd}
&\phantom{=}-\uu_t^{\pm} -\frac{1}{2}\sigma^2\uu_{xx}^{\pm} - \biggl(\frac{\bigl((4n-1)e^{2nx}-1\bigr)\sigma^2}{2(1+e^{2nx})} + r_D\biggr)\uu_x^{\pm} - \biggl(2r_D +(2n-1)\sigma^2\biggr)\frac{ne^{2nx}}{1+e^{2nx}} \uu^{\pm} \nonumber \\
& = g^\pm \biggl(t,\uu^{\pm}, \sigma \Bigl( \uu_x^{\pm}+ \frac{2ne^{2nx}}{1+e^{2nx}}\uu^{\pm}  +  {\bar{\hat{w}}}_x + \frac{2ne^{2nx}}{1+e^{2nx}} {\bar{ \hat w}} \Bigr); {\bar{ \hat w}}\biggr), \nonumber \\
\uu^{\pm}(T,x) & = 0,
\end{align}
together with
\begin{align}
&-{\bar{\hat{w}}}_t -\frac{1}{2}\sigma^2 {\bar{\hat{w}}}_{xx} - \biggl(\frac{\bigl((4n-1)e^{2nx}-1\bigr)\sigma^2}{2(1+e^{2nx})} + r_D\biggr){\bar{\hat{w}}}_x - \biggl(2r_D +(2n-1)\sigma^2\biggr)\frac{ne^{2nx}}{1+e^{2nx}} {\bar{ \hat w}} =-r_D {\bar{ \hat w}}, \nonumber \\
&{\bar{ \hat w}}(T,x) = \bar\Phi(x). \label{eq:PDE_bndd1}
\end{align}
The above transformation guarantees that both $\bar \Phi$ and $\bar\Phi'$ are bounded. Then the existence of a smooth (and bounded) solution to \eqref{eq:PDE_bndd},  \eqref{eq:PDE_bndd1} follows from Theorem 20.2.1 in \cite{Cannon}. In the case that $\bar\Phi$ is only piecewise smooth, the original proof can be modified following a similar procedure to \cite{JK}. Hence, using the change of variables \eqref{eq:var-transform}, we conclude that there exists a classical solution to the Cauchy problem \eqref{eq:PDE_representation}.
\end{proof}

\noindent \textbf{Proof of Proposition \ref{prop:Pitnodef}.}

\begin{proof}
In the absence of defaults, the BSDE~\eqref{eq:XVABSDE} is given by
\begin{align} \label{BDESredp}
-d\XVA_t& = -r_f \XVA_t dt + (r_f - r_c ) \alpha \hat{V}_t dt + (r_r - r_f ) \hat{V}_t dt - \check{Z}_t\, dW_t^{\Qxx} \nonumber  \\
\XVA_T &= 0,
\end{align}
where we have omitted the superscript $\buysell$ given that seller's and buyer's $\XVA$ coincide due to the symmetry of rates. Moreover, we have used the collateral specification given in Eq.~\eqref{eq:rulecoll}, and the assumption that $r_D = r_r$. The above BSDE admits the following integral representation
  \begin{align}\label{PitebargBSDE}
   e^{-r_f t} \XVA_t &= -\int_t^T (B_s^{r_f})^{-1} \check{Z}_s \, dW_s^{\Qxx} + (r_f - r_c) \alpha \int_t^T (B_s^{r_f})^{-1} \hat{V}_s ds + (r_r - r_f) \int_t^T (B_s^{r_f})^{-1} \hat{V}_s ds.
  \end{align}
Using the Clark-Ocone formula, we can find $\check{Z}_t$ by means of Malliavin Calculus (see \cite{Nualartbook} for an introduction to Malliavin derivatives). We have that
  \[
    e^{-r_f t} \check{Z}_t = \mathbb{E}^{\Qxx} \biggl[\Bigl. D_t \Bigl(\int_t^T \bigl(B_s^{r_f}\bigr)^{-1}  \bigl(\alpha \bigl(r_f - r_c\bigr) + (r_r - r_f) \bigr) \hat{V}_s \,ds \Bigr) \Bigr\vert \mathcal{F}_t\biggr],
  \]
where $D_t$ denotes the Malliavin derivative which may be computed as
  \begin{align}\label{eq:malliavin}
    & \phantom{==} D_t \biggl(\int_t^T \bigl(B_s^{r_f}\bigr)^{-1}  \bigl(\alpha \bigl(r_f - r_c\bigr) + (r_r - r_f) \bigr) \hat{V}_s \,ds \biggr) \nonumber\\
    &=  \int_t^T \bigl(B_s^{r_f}\bigr)^{-1} \bigl( \alpha (r_f - r_c) + (r_r - r_f) \bigr) D_t \hat{V}_s \,ds \nonumber\\
    & = \int_t^T \bigl(B_s^{r_f}\bigr)^{-1} \bigl( \alpha (r_f - r_c) + (r_r - r_f) \bigr) \frac{\partial}{\partial S}\hat{V}(s,S_s)\sigma S_s \, ds.
  \end{align}
Above, we have used the chain rule of Malliavin calculus and the well known fact that $D_t S_s = \sigma S_s$ for $s>t$. We expect that the $\check{Z}$ term of the BSDE would correspond to an ``adjusted delta hedging'' strategy, with the delta hedging strategy recovered if all rates are identical. Indeed, using the definition of $\hat{\Delta}$ given in Eq.~\eqref{eq:delta}, the Malliavin derivative in Eq.~\eqref{eq:malliavin} may be written in terms of $\Delta$ as
  \[
    D_t\bigl(...\bigr) =  \bigl(\alpha(r_f - r_c) + (r_r - r_f) \bigr) \int_t^T \frac{1}{B_s^{r_f}}  \hat{\Delta}_s \sigma S_s \, ds.
  \]
Therefore, we get
  \begin{align}
    \nonumber e^{-r_f t} \check{Z}_t & = \bigl( \alpha (r_f - r_c) + (r_r - r_f) \bigr) \int_t^T \frac{1}{B_s^{r_f}}  \sigma \mathbb{E}^{\Qxx}\bigl[\bigl.\hat{\Delta}_s S_s\bigr\vert \mathcal{F}_t\bigr] \, ds \\
    \nonumber & =  \bigl( \alpha (r_f - r_c) + (r_r - r_f) \bigr) \int_t^T \frac{1}{B_s^{r_f}} \sigma B_s^{r_r} \mathbb{E}^{\Qxx}\Bigl[\Bigl.\hat{\Delta}_s \frac{S_s}{B_s^{r_r}}\Bigr\vert \mathcal{F}_t\Bigr] \, ds\\
    \nonumber & =  \bigl( \alpha (r_f - r_c) + (r_r - r_f) \bigr) \int_t^T \frac{B_s^{r_r}}{B_s^{r_f}} \sigma \hat{\Delta}_t \frac{S_t}{B_t^{r_r}} \, ds\\
    & =   \bigl( \alpha (r_f-r_c) +  (r_r - r_f ) \bigr) \frac{\sigma S_t}{r_r-r_f} \frac{1}{B_t^{r_r}}
    \biggl(\frac{B_T^{r_r}}{B_T^{r_f}} - \frac{B_t^{r_f}}{B_t^{r_f}}\biggr) \hat{\Delta}_t, \label{eq:finalz}
  \end{align}
  where we have used the martingale property $\mathbb{E}^{\Qxx}\bigl[\Bigl.\hat{\Delta}_s \frac{S_s}{B_s^{r_r}}\bigr\vert \mathcal{F}_t\bigr] = \frac{S_t}{B_t^{r_r}} \hat{\Delta}_t$. Indeed, from Eq.~\eqref{eq:delta} and using the fact that $S_T = \frac{B_T^{r_r}}{B_t^{r_r}}S_t e^{ -\frac{\sigma^2}{2}(T-t) {+ \sigma} (W_T^{\Qxx} -W_t^{\Qxx} )}$ (which follows from \eqref{eq:S-Q}), we conclude that
\begin{align*}
\frac{S_t}{B_t^{r_r}}\hat{\Delta}_t &=  \frac{S_t}{B_t^{r_r}} \frac{\partial}{\partial S} \mathbb{E}^{\Qxx}\biggl[\frac{B_t^{r_r}}{B_T^{r_r}}  \Phi(S_T) \bigg\vert \mathcal{F}_t\biggr]= \frac{S_t}{B_t^{r_r}}  \mathbb{E}^{\Qxx}\biggl[  \Phi'(S_T)  e^{ -\frac{\sigma^2}{2}(T-t)  + \sigma (W_T^{\Qxx} -W_t^{\Qxx} )} \bigg\vert \mathcal{F}_t\biggr]\\
&= \mathbb{E}^{\Qxx}\biggl[  \Phi'(S_T)  \frac{S_T}{B_T^{r_r}} \bigg\vert \mathcal{F}_t\biggr] = \mathbb{E}^{\Qxx}\biggl[  \hat{\Delta}_T   \frac{S_T}{B_T^{r_r}} \bigg\vert \mathcal{F}_t\biggr],
\end{align*}
where we have interchanged derivative and expectation by differentiating under the integral sign. This is well justified given that we are computing the expectation of a smooth function of a Gaussian random variable.

We note that {$\check{Z}$} is square integrable and therefore the stochastic integral in \eqref{PitebargBSDE} is a true martingale. Using this fact in the integral representation~\eqref{PitebargBSDE} and taking conditional expectations, we can provide an explicit solution for the BSDE as follows:
\begin{align*}
\XVA_t& = \bigl(\alpha (r_f - r_c) +  (r_r - r_f) \bigr) \int_t^T \frac{B_t^{r_f}}{B_u^{r_f}} \mathbb{E}^{\Qxx} \bigl[\hat{V}_u \big| \mathcal{F}_t \bigr] du \\
&= \bigl(\alpha (r_f - r_c) +  (r_r - r_f) \bigr) B_t^{r_f} \int_t^T e^{-(r_f-r_r) u} \mathbb{E}^{\Qxx} \bigl[\bigl({B_u^{r_r}}\bigr)^{-1} \hat{V}_u \big| \mathcal{F}_t\bigr] du \\
  & = \biggl( {1 - \alpha \frac{r_f - r_c}{r_f - r_r}}  \biggr)  \frac{B_t^{r_f}}{B_t^{r_r}} \Biggl(\frac{B_T^{r_r}}{B_T^{r_f}} -\frac{B_t^{r_r}}{B_t^{r_f}} \Biggr) \hat{V}_t,
\end{align*}
where in the last step we have used the martingale property of the discounted payoff. This correspond with Eq.~\eqref{eq:FVAPiterbarg} after straightforward adjustments. Finally, Eq.~\eqref{eq:unitedDeltaPiterbarg} follows from \eqref{eq:finalz} together with the first identity in \eqref{eq:Zetas}.
\end{proof}

\noindent \textbf{Proof of Proposition \ref{prop:Pitdef}.}

\begin{proof}
The proof follows a similar route to that of Proposition~\ref{prop:Pitnodef}. In the presence of defaults and when the rates are symmetric, the reduced BSDE for XVA~\eqref{eq:reduced} becomes
\begin{align}\label{eq:reduced1}
-d{\check{U}_t}& = \biggl((r_f - r_c) \alpha \hat{V}_t + (r_r - r_f) \hat{V}_t + \sum_{j\in\{I,C\}}( {\mu_j - r_f} )\tilde{\theta} _{j}(\hat{V}_t) \biggr) \, dt  - \eta {\check{U}_t} \, dt -\check{Z}^{\pm}_t\, dW_t^{\Qxx}, \nonumber \\
{\check{U}_T} &= 0.
\end{align}
The above BSDE admits the following integral representation:
\begin{align}\label{eq:integr-sol1}
e^{-\eta t} {\check{U}_t}  &=  - \int_{t}^T e^{-\eta s} \check{Z}_s dW_s^{\Qxx} + \int_t^T  (r_f - r_c) \alpha e^{-\eta s} \hat{V}_s ds + \int_t^T (r_r - r_f) e^{-\eta s} \hat{V}_s ds
\nonumber \\
& \phantom{=} + \sum_{j\in\{I,C\}}\bigl( \mu_j - r_f \bigr) \int_t^T e^{-\eta s} \tilde{\theta} _{j}(\hat{V}_s) ds  - {\eta} \int_t^T e^{-\eta s} ds.
\end{align}
Using the Clark-Ocone formula, we can find $\check{Z}_t$ by means of Malliavin Calculus. We have that
\begin{align*}
e^{-\eta t} \check{Z}_t &= \mathbb{E}^{\Qxx} \biggl[\Bigl.  \int_t^T e^{-\eta s}  \bigl(\alpha (r_f - r_c) + (r_r - r_f) \bigr) D_t \hat{V}_s \,ds  \Bigr\vert \mathcal{F}_t\biggr] \\
& \phantom{=} + \mathbb{E}^{\Qxx} \biggl[\sum_{j\in\{I,C\}}( {\mu_j - r_f} ) \int_t^T e^{-\eta s} D_t \tilde{\theta} _{j}(\hat{V}_s) {\, ds} \biggr].
\end{align*}
It holds that $D_t \hat{V}_s = \hat{\Delta}_s \sigma S_s$. Using Proposition 1.2.4 in \cite{Nualartbook} and Eq.~\eqref{eq:hats}, we obtain
\[
D_t  \tilde{\theta} _{C}(\hat{V}_s) = L_C (1-\alpha) D_t \bigl(\hat{V}_s\bigr)^-  =  L_C (1-\alpha) {\ind_{\{\hat{V}_s < 0\}}} \frac{\partial}{\partial S} \hat{V}(s,S_s) \sigma S_s,
\]
and
\[
D_t \tilde{\theta} _{I}(\hat{V}_s)  = -L_I (1-\alpha) D_t \bigl(\hat{V}_s\bigr)^+  =  -L_I (1-\alpha) \ind_{\{\hat{V}_s > 0\}} \frac{\partial}{\partial S} \hat{V}(s,S_s) \sigma S_s.
\]
From that, we obtain the following equality
\begin{align}\label{eq:chech-Z}
    e^{-\eta t} \check{Z}_t &= \bigl(\alpha (r_f - r_c) + (r_r-r_f) \bigr) \sigma \int_t^T e^{-\eta s} \mathbb{E}^{\Qxx} \bigl[\hat{\Delta}_s S_s  \big| \mathcal{F}_t \bigr] \, ds \nonumber\\
    & \phantom{=} + ( {\mu_C - r_f} ) L_C (1-\alpha) \int_t^T e^{-\eta s} \mathbb{E}^{\Qxx} \bigl[\hat{\Delta}_s {\ind_{\{\hat{V}_s < 0\}}} \sigma S_s  \big| \mathcal{F}_t \bigr] {\, ds} \nonumber\\
    & \phantom{=} - ( {\mu_I - r_f}  ) L_I (1-\alpha) \int_t^T e^{-\eta s}  \mathbb{E}^{\Qxx} \bigl[\hat{\Delta}_s  {\ind_{\{\hat{V}_s > 0\}}} \sigma S_s  \big| \mathcal{F}_t \bigr] {ds}\nonumber\\
    &= \bigl(\alpha (r_f - r_c) + (r_r-r_f) \bigr) \sigma \int_t^T e^{-\eta s} B_s^{r_r}  \mathbb{E}^{\Qxx} \biggl[\hat{\Delta}_s \frac{S_s}{B_s^{r_r}} \bigg| \mathcal{F}_t \biggr] ds \nonumber\\
    & \phantom{=} + ( {\mu_C - r_f} ) L_C (1-\alpha) \sigma \int_t^T e^{-\eta s} B_s^{r_r} \mathbb{E}^{\Qxx} \biggl[\hat{\Delta}_s {\ind_{\{\hat{V}_s < 0\}}} \frac{S_s}{B_s^{r_r}} \bigg| \mathcal{F}_t \biggr] {\, ds} \nonumber\\
    & \phantom{=} - ( {\mu_I - r_f}  ) L_I (1-\alpha) \sigma \int_t^T e^{-\eta s} B_s^{r_r} \mathbb{E}^{\Qxx} \bigg[\hat{\Delta}_s {\ind_{\{\hat{V}_s > 0\}}} \sigma \frac{S_s}{B_s^{r_r}} \bigg| \mathcal{F}_t \biggr] {\, ds} \nonumber\\
    &= \bigl(\alpha (r_f - r_c) + (r_r-r_f) \bigr) \sigma \frac{S_t}{B_t^{r_r}} \frac{1}{r_r - \eta} \biggl(\frac{B_T^{r_r}}{e^{\eta T}} - \frac{B_t^{r_r}}{e^{\eta t}} \biggr) \hat{\Delta}_t \nonumber\\
    & \phantom{=} + ( {\mu_C - r_f}  ) L_C (1-\alpha) \sigma \frac{S_t}{B_t^{r_r}} \frac{1}{r_r - \eta}
    \biggl(\frac{B_T^{r_r}}{e^{\eta T}} - \frac{B_t^{r_r}}{e^{\eta t}} \biggr) \hat{\Delta}_t \nonumber\\
    & \phantom{=} - ( {\mu_I - r_f} ) L_I (1-\alpha) \sigma \frac{S_t}{B_t^{r_r}} \frac{1}{r_r - \eta}
    \biggl(\frac{B_T^{r_r}}{e^{\eta T}} - \frac{B_t^{r_r}}{e^{\eta t}} \biggr) \hat{\Delta}_t.
\end{align}
The last step is justified by the fact that $\eta > r_r$. We note that {$\check{Z}$} is square integrable and therefore the stochastic integral in \eqref{PitebargBSDE} is a true martingale. Using this fact in the integral
representation~\eqref{eq:reduced1}, and taking the conditional expectation, it follows that
\begin{align}\label{eq:finalzdef}
e^{-\eta t} {\check{U}_t} &=  \int_t^T  (r_f - r_c) \alpha e^{-\eta s} \mathbb{E} \big[\hat{V}_s  \big\vert \mathcal{F}_t \big] ds + \int_t^T (r_r - r_f) e^{-\eta s} \mathbb{E} \bigl[\hat{V}_s \big\vert \mathcal{F}_t \bigr] ds \nonumber \\
& \phantom{=} +\int_t^T e^{-\eta s} \Bigl( ( {\mu_C - r_f} )   L_C \mathbb{E}^{\Qxx} \bigl[((1-\alpha)\hat{V}_s )^{-}  \big\vert  \mathcal{F}_t \big] + (  {\mu_I - r_f} ) L_I \mathbb{E}^{\Qxx} \bigl[\bigl((1-\alpha) \hat{V}_s \bigr)^{+} \big\vert  \mathcal{F}_t \big] \Bigr) ds \nonumber\\
&=  \bigl((r_f - r_c) \alpha + (r_r - r_f)\bigr) \int_t^T  e^{-\eta s} B_s^{r_r} \mathbb{E}^{\Qxx} \bigl[  \bigl(B_s^{r_r}\bigr)^{-1} \hat{V}_s \big\vert \mathcal F_t \bigr]  ds \nonumber \\
& \phantom{=}+ ({\mu_C - r_f} ) L_C \int_t^T e^{-\eta s}  B_s^{r_r}\mathbb{E}^{\Qxx} \bigl[ \bigl(B_s^{r_r}\bigr)^{-1} ((1-\alpha) \hat{V}_s )^{-} \big\vert \mathcal F_t \bigr]  ds \nonumber\\
& \phantom{=}-  ({\mu_I - r_f}  )  L_I \int_t^T e^{-\eta s} B_s^{r_r}\mathbb{E}^{\Qxx} \bigl[ \bigl(B_s^{r_r}\bigr)^{-1} ((1-\alpha) \hat{V}_s )^{+} \big\vert \mathcal F_t \bigr] ds \nonumber\\
&=  \bigl((r_f - r_c) \alpha + (r_r - r_f)\bigr) \int_t^T  e^{-\eta s} B_s^{r_r} \bigl(B_t^{r_r}\bigr)^{-1}  \hat{V}_t \nonumber \\
& \phantom{=}+({\mu_C - r_f}  ) L_C \int_t^T e^{-\eta s}  B_s^{r_r} \bigl(B_t^{r_r}\bigr)^{-1}  \bigl((1-\alpha) \hat{V}_t \bigr)^{-} \nonumber\\
& \phantom{=} - ({\mu_I - r_f} )  L_I \int_t^T e^{-\eta s} B_s^{r_r}\bigl(B_t^{r_r}\bigr)^{-1}  \bigl((1-\alpha) \hat{V}_t \bigr)^{+},
\end{align}
where we have used the martingale properties of the discounted payoffs. It thus follows that
\begin{align}
{\check{U}_t} &= (r_r -r_f) \frac{1- e^{-(\eta-r_r) (T-t)} }{ \eta-r_r} \hat{V}_t + \alpha (r_f-r_c) \frac{1- e^{-(\eta-r_r) (T-t)}}{\eta-r_r} \hat{V}_t \nonumber\\
&\phantom{=}+ (  {\mu_C - r_f}  ) L_C  \frac{1- e^{-(\eta-r_r) (T-t)} }{ \eta-r_r} (1-\alpha) \bigl(\hat{V}_t \bigr)^{-} \nonumber\\
& \phantom{=} - ( {\mu_I - r_f}  )   L_I \frac{1- e^{-(\eta-r_r) (T-t)} }{ \eta-r_r}  (1-\alpha) \bigl(\hat{V}_t \bigr)^{+},
\end{align}
which yields by~\eqref{eq:reduced_identity2} and multiplying with the indicator $\ind_{\{\tau>t\}}$ Eq.~\eqref{eq:FVAPiterbargdef}. We next compute the hedging strategy $\tilde{\xi}$ in the stock using the martingale representation theorem. Consider the stock replicating strategy $\tilde{\xi}$. Then the investment in stock has the dynamics
	\[
    {\tilde{\xi}_t} dS_t = \tilde{\xi}_t \mu S_t dt + {\tilde{\xi}_t} \sigma S_t dW_t.
	\]
	By the martingale representation theorem in the $\mathbb{H}$-filtration one can split every semimartingale uniquely into an absolutely continuous part, a Brownian martingale and two jump martingales. It follows that $\tilde{\xi}_t \sigma S_t dW_t = - \tilde{Z}_t dW_t$. By the uniqueness of the martingale representation, it follows that $\tilde{\xi}_t \sigma S_t = - \tilde{Z}_t$ and thus the claimed result. An analogous argument applies for the bond strategies. Finally, Eq.~\eqref{eq:defaul-I} follows directly from the expression for $\XVA_t$ given in Eq.~\eqref{eq:reduced_identity2}.
\end{proof}

\end{document}